\newtheorem{theorem}{Theorem}
\newtheorem{proposition}{Proposition}
\newtheorem{remark}{Remark}
\newtheorem{assumption}{Assumption}
\newcommand{\E}{\mathbb{E}}
\newcommand{\independent}{\perp\mkern-9.5mu\perp}
\title{\bf \LARGE Bias-Targeted Nonparametric Balancing for Stable Causal Mediation Analysis\\~}
\author[]{Chang Liu}
\author[]{AmirEmad Ghassami\thanks{Corresponding author (\texttt{ghassami@bu.edu)}}}
\affil[]{Department of Mathematics and Statistics, Boston University}
\date{First Version: March 31, 2024; Current Version: Feburary 9, 2026}
\begin{document}

\maketitle

\begin{abstract}
Influence function (IF)-based estimators are widely used in mediation analysis due to their modeling flexibility, but standard implementations require direct estimation of the distribution functions of the mediator and treatment variables. Since these functions appear in the denominator of IF-based estimators, they can induce significant instability, particularly with continuous mediators. In this work, we propose an alternative implementation of IF-based estimators for both single- and multiple-mediator settings, based on reparametrizations of the likelihood. The key idea is to estimate the involved nuisance functions according to their role in the bias structure of the IF-based estimators. In our approach, key nuisance functions that are potential sources of instability are estimated using a novel nonparametric weighted balancing method---which can be viewed as a nonparametric extension of covariate balancing generalized to mediation analysis---fully stabilizing the estimators. We establish consistency and multiple robustness under suitable regularity conditions, and asymptotic normality. Simulation studies demonstrate substantial reductions in bias and variance relative to existing methods for continuous mediators. We further illustrate the approach using NHANES 2013--2014 data to estimate the effect of obesity on coronary heart disease mediated by Glycohemoglobin.\\

\noindent \textbf{Keywords:} Balancing Estimators; Causal Mediation Analysis; Influence Function-Based Estimation; Multiple Robustness; Nonparametric Estimation.

\end{abstract}

\section{Introduction}\label{secLintro}

Understanding the mechanisms through which a treatment or exposure influences an outcome is of critical importance in many studies across the biological and social sciences. Causal mediation analysis provides a principled framework for studying these mechanisms by partitioning the total causal effect of a treatment on an outcome into a direct component and an indirect component transmitted through post-treatment mediator variables. Such decompositions help elucidate biological pathways and inform targeted interventions, making mediation analysis an indispensable tool in epidemiology, behavioral medicine, and health economics \citep{vanderweele2015explanation,mackinnon2012introduction,hayes2017introduction,richiardi2013mediation,ten2012review}.

It is well known that, under suitable causal assumptions, natural direct and indirect effects can be identified from observed data through a functional of the observed distribution commonly referred to as the mediation formula (or mediation functional) \citep{robins1992identifiability,pearl2001direct}. When these identification conditions are deemed plausible in a given application, a central methodological question is how to estimate the resulting mediation functional from finite samples in a stable and statistically efficient way.

The estimation is more challenging when the mediators are continuous:
in many modern applications, mediators are continuous and represent quantities such as gene expression levels, neural connectivity, metabolic activity, or inflammatory biomarkers \citep{shan2019identification,lindquist2012functional,wu2019sedentary,you2023association}. For example, in assessing the effect of obesity on cardiovascular risk, blood pressure and blood glucose are important continuous mediators \citep{pozuelo2017obesity,xu2017role}; in studying the influence of a cancer therapy on overall survival, protein expression levels may serve as mediators \citep{huang2023unified}. While continuous mediators capture rich biological variation, they also pose substantial challenges for semiparametric estimation of direct and indirect effects.

Several estimation strategies have been proposed based on regression and weighting techniques \citep{imai2010general,lange2012simple,valeri2013mediation,vanderweele2014effect,diaz2020causal,hejazi2023nonparametric,rudolph2023efficient,xia2022decomposition}. In particular, \cite{tchetgen2012semiparametric} proposed an influence-function (IF)-based estimator for the mediation functional that is locally semiparametrically efficient and multiply robust. Their estimator relies on three nuisance functions: the treatment mechanism (the conditional distribution of treatment given covariates), the mediator mechanism (the conditional distribution of the mediator given treatment and covariates), and the outcome regression (the conditional expectation of the outcome given treatment, mediator, and covariates). The resulting estimator remains consistent if any two of these three nuisance functions are consistently estimated.
However, in settings with continuous mediators, a straightforward implementation of IF-based estimators can perform poorly in practice for three related reasons. First, estimating the mediator mechanism entails estimating a conditional density for a (potentially multivariate) continuous mediator given multivariate covariates and treatment, which is statistically and computationally challenging. Second, the estimated treatment and mediator mechanisms enter the IF representation through inverse-probability-type factors; when these estimated quantities are small (e.g., due to practical positivity violations), the resulting estimator can exhibit severe variance inflation and numerical instability. Third, and more fundamentally, standard nuisance-learning objectives are typically not aligned with the structure of the second-order bias of the IF-based estimator: improving global predictive fit of nuisance components does not necessarily translate into reducing the bias and variance of the target mediation functional.

To address the difficulty of estimating the mediator mechanism, \cite{diaz2021nonparametric} proposed a likelihood reparametrization that avoids direct estimation of the mediator density.\footnote{Their framework additionally allows for mediator--outcome confounders affected by treatment.} Instead, their representation introduces an additional nuisance function involving the conditional distribution of treatment given covariates and the mediator. While this approach also achieves multiple robustness, it retains key practical challenges: the newly introduced nuisance component still appears in an inverse form and may therefore lead to large variance in finite samples. Moreover, when nuisance components are modeled separately (e.g., under parametric working models), incompatibilities across conditional models can arise, which may further degrade performance. These considerations suggest that stable estimation in continuous-mediator settings requires nuisance estimators that are constructed to directly control the dominant bias terms of the final IF-based estimator, rather than aiming only for globally accurate nuisance fits.

In this paper, we propose a bias-targeted nuisance estimation framework for stable causal mediation analysis. Our approach is based on a likelihood reparametrization that yields an alternative IF representation of the mediation functional in terms of nuisance components that can be estimated without directly fitting the treatment and mediator densities. In Section \ref{sec:single} we focus on the single-mediator case. We develop a two-stage estimation procedure. In the first stage, we estimate two basic nuisance functions, including an inverse treatment mechanism and an outcome regression component. In the second stage, we construct additional nuisance functions that encode the information needed from the mediator mechanism, and we estimate the nuisance components that drive instability using a novel nonparametric weighted balancing method. Crucially, the balancing equations are chosen to directly target the leading bias terms of the final IF-based estimator, thereby stabilizing the estimator in settings with continuous mediators.

Our main contributions regarding the single-mediator case can be summarized as follows:
\begin{itemize}
    \item We introduce a likelihood reparametrization and a two-stage nuisance estimation strategy for the mediation functional that avoids direct estimation of treatment and mediator densities.
    \item We propose a bias-targeted nonparametric balancing approach for estimating key nuisance components that otherwise induce instability through inverse-probability-type factors.
    \item We provide a reproducing kernel Hilbert space RKHS-based implementation of the proposed balancing estimators and obtain closed-form solutions for the corresponding minimax problems.
    \item We establish consistency, multiple robustness, and asymptotic normality of the resulting IF-based estimator under suitable regularity conditions, and we demonstrate substantial improvements in bias and variance in simulation studies with continuous mediators.
    \item We illustrate the method using NHANES 2013--2014 data to estimate the effect of obesity on coronary heart disease mediated by Glycohemoglobin; see Section~\ref{sec:application}.
\end{itemize}

Our work is related to reparametrization- and balancing-based approaches for causal mediation analysis; see, for example, \citep{chan2016efficient} and \citep{kawato2025balancing}. While both works derive estimation procedures that can be interpreted through balancing conditions, neither develops the kind of \emph{nonparametric balancing strategy} we propose for estimating nuisance components in a highly flexible way. In contrast, our approach constructs nuisance estimators to directly control the leading bias terms of the final influence function (IF)-based estimator through a regularized minimax balancing formulation. This design yields stable estimation for continuous mediators without requiring direct density estimation, and it admits closed-form solutions under reproducing kernel Hilbert space (RKHS) function classes. Regarding \citep{chan2016efficient} in particular, their calibration conditions are primarily motivated by a weighting representation and efficiency considerations, rather than by explicitly targeting the bias structure of an IF-based estimator. Correspondingly, their development does not focus on bias-cancellation arguments that underpin multiple-robustness guarantees, and they do not establish multiple-robustness results of the type we obtain in this paper.

We generalize our framework in Section \ref{sec:multi} to multiple-mediator settings without imposing causal ordering or structural assumptions among mediators. Such settings are scientifically important when mediators are measured contemporaneously and may interact through complex, potentially bidirectional pathways. For example, lifestyle interventions may affect blood pressure, cholesterol, inflammation, and body mass index, which jointly influence cardiovascular outcomes \citep{lima2020healthy}; maternal smoking may affect multiple intermediate pregnancy conditions that jointly influence infant health outcomes \citep{engel2009maternal}. In these settings, imposing a strict mediator ordering can be unrealistic. We adapt the approach of \cite{xia2022decomposition} and decompose the indirect effect to multiple components and focus on the mediator-specific exit effects. We develop an RKHS-based extension of our bias-targeted balancing framework for estimating mediator-specific exit indirect effects under the corresponding identification conditions. The existing IF-based approach for multiple mediators in \citep{xia2022decomposition} inherits the same three concerns discussed above for the single-mediator case and, in particular, requires estimating several treatment and mediator distribution components that enter the estimator through inverse-probability-type factors; consequently, it can remain unstable under practical positivity violations and may be sensitive to model misspecification. Our proposal mitigates these issues by replacing direct modeling of these distribution components with balancing conditions designed to control the dominant bias terms of the target estimator.

In the next section, we start by describing the single- and multiple-mediator models and the corresponding identification results.

\section{Model Description}
\label{sec:desc}

\subsection{Single Mediator}
\begin{figure}[t!]
\centering
		\tikzstyle{block} = [draw, circle, inner sep=2.5pt, fill=lightgray]
		\tikzstyle{input} = [coordinate]
		\tikzstyle{output} = [coordinate]
        \begin{tikzpicture}
            \tikzset{edge/.style = {->,> = latex'}}
            \node[] (a) at  (-2,0) {$A$};
            \node[] (x) at  (-3,2) {$X$};
            \node[] (m) at  (0,1) {$M$};
            \node[] (y) at  (2,0) {$Y$};                       
            \draw[-stealth] (a) to (m);                                                         
            \draw[-stealth] (a) to (y);
            \draw[-stealth] (x) to (a);                                                         
            \draw[-stealth] (x) to (m);
            \draw[-stealth,bend left=45] (x) to (y);
            \draw[-stealth] (m) to (y);
        \end{tikzpicture}
        \caption{A graphical causal model for the mediation model.}
        \label{fig:MemMod}
\end{figure}
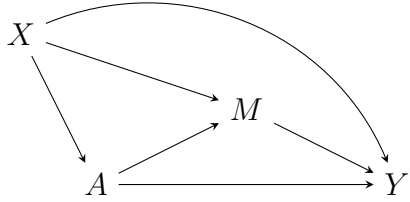

In this section, we describe the single-mediator model and review conditions sufficient for identification of the direct and indirect causal effects.
We consider a model with binary treatment variable $A\in\{0,1\}$, pre-treatment covariates $X\in\mathcal{X}$, post-treatment mediator variables $M\in\mathcal{M}$, and outcome variable $Y\in\mathcal{Y}$. In this work, we use the potential outcomes framework \citep{rubin1974estimating}.
We denote the potential outcome variable of $Y$, had the treatment and mediator variables been set to value $A=a$ and $M=m$ (possibly contrary to the fact) by $Y^{(a,m)}$. 
Similarly, we define $M^{(a)}$ as the potential outcome variable of $M$ had the treatment variables been set to value $A=a$.
Based on variables $Y^{(a,m)}$ and $M^{(a)}$, we define $Y^{(a)}=Y^{(a,M^{(a)})}$.

Consider the parameter average treatment effect (ATE) of the treatment $A$ on the outcome $Y$, represented by $\theta_{ATE}=E[Y^{(1)}-Y^{(0)}]$. The main goal in causal mediation analysis is to quantify the part of the ATE that is mediated by the variable $M$, and the part that is directly reaching the outcome variable. To this end, the ATE can be partitioned as follows \citep{robins1992identifiability,pearl2001direct},
\begingroup
\allowdisplaybreaks	
\begin{align*}
\theta_{ATE}
&=E[Y^{(1)}-Y^{(0)}]\\
&=E[Y^{(1,M^{(1)})}-Y^{(0,M^{(0)})}]\\
&=\underbrace{E[Y^{(1,M^{(1)})}-Y^{(1,M^{(0)})}]}_{\text{NIE}}
+\underbrace{E[Y^{(1,M^{(0)})}-Y^{(0,M^{(0)})}]}_{\text{NDE}}.
\end{align*}
\endgroup
The first and the second terms in the last expression are called the natural indirect effect (NIE) and the natural direct effect (NDE) of the treatment on the outcome, respectively \citep{pearl2001direct}.\footnote{These two terms are called the total indirect effect and the pure direct effect, respectively by \cite{robins1992identifiability}.}
NIE captures the change in the expectation of the outcome in a hypothetical scenario where the value of the treatment variable is fixed at $a=1$, while the mediator behaves as if the treatment had been changed from value $a=0$ to $a=1$. 
NDE captures the change in the expectation of the outcome in a hypothetical scenario where the value of the treatment variable is changed from value $a=0$ to $a=1$, while the mediator behaves as if the treatment is fixed at $a=0$. 

Following majority of the work in the field of causal inference, we require consistency, meaning that $M^{(a)}=M$ if $A=a$, and $Y^{(a,m)}=Y$ if $A=a$ and $M=m$, as well as positivity, meaning that for all $m$, $a$, and $x$, we have $p(m\mid a,x)>0$, and  $p(a\mid x)>0$. We also require that the model does not contain any unobserved confounders, which is a common and standard assumption in the causal mediation analysis literature. This can be formalized using the sequential exchangeability assumption \citep{imai2010general}. 
\begin{assumption}[Sequential exchangeability]
\label{assumption:med}
For any $a,a'\in\{0,1\}$, and any value of the mediator $m$, we have
\begin{enumerate}[label=(\roman*)]
	\item $M^{(a)}\independent A\mid X$,
    \item $Y^{(a,m)}\independent A\mid X$, and $Y^{(a,m)}\independent M^{(a)}\mid \{A=a,X\}$,    
	\item $Y^{(a,m)}\independent M^{(a')}\mid X$.
\end{enumerate}
\end{assumption}
Intuitively, Part (i) requires that the treatment-mediator relation is not confounded by unobserved variables, Part (ii) of the assumption requires that the treatment-outcome and mediator-outcome relations are not confounded by unobserved variables, (see Figure \ref{fig:MemMod}) and Part (iii), which is a so-called cross-world independence assumption, requires that the potential outcomes of $Y$ and $M$ even for different assignments of the treatment are conditionally independent. All three parts are untestable and should be decided based on expert's knowledge and the researcher's belief about how the variables in the system are generated.

We notice that under consistency, positivity, and sequential exchangeability, we have
\begin{align*}
E[Y^{(1,M^{(1)})}]=E[Y^{(1)}]=E\{E[Y^{(1)}\mid X]\}=E\{E[Y^{(1)}\mid A=1,X]\}=E\{E[Y\mid A=1,X]\}.
\end{align*}
That is $E[Y^{(1,M^{(1)})}]$ is identified. Similar argument holds for $E[Y^{(0,M^{(0)})}]$.
The parameter $\theta_0=E[Y^{(1,M^{(0)})}]$ requires more caution.
It is known that under the assumptions of our model, this parameter can be identified as follows \citep{pearl2001direct,imai2010general}.
\begingroup
\allowdisplaybreaks	
\begin{align*}
\theta_0
&=E[Y^{(1,M^{(0)})}]\\
&=\int yp(Y^{(1,m)}=y\mid x)p(M^{(0)}=m\mid x)p(x)dydmdx\\
&=\int yp(y\mid a=1,m,x)p(m\mid a=0,x)p(x)dydmdx\\
&=:\psi_0.
\end{align*}
\endgroup
The functional $\psi_0$ is called the mediation functional (also known as mediation formula). This quantity is purely a functional of the observed data distribution, $P_0$. Note that under the identification assumptions in our model, $\psi_0=\theta_0$, but in general, these two parameters are not necessarily equal. In the single-mediator model, we refer to $\psi_0$ as the parameter of interest. Our goal is to estimate this parameter of interest from the independent and identically distributed observations $\{O_i:=(X,A,M,Y)_i\}_{i=1}^n$ drawn from $P_0$.

\subsection{Multiple Mediators}
In this section, we introduce the multiple-mediator model. The binary treatment variable $A$, pre-treatment covariates $X$, and outcome variable $Y$ are defined as in the single-mediator setting. Following the terminology of \cite{xia2022decomposition}, we denote the post-treatment mediator vector by $M=(M_1,M_2,\dots,M_k)$, where $M_1$, $M_2$, $\dots$, $M_k$ represent $k$ mediators of interest, each of which may be multivariate. No causal ordering or structural assumptions are imposed among the mediators. Analogously, we denote the potential outcome variable of $Y$, had the treatment and mediators been set to value $A=a$, $M_1=m_1$, $M_2=m_2$, $\dots$, $M_k=m_k$ by $Y^{(a,m_1,m_2,\dots,m_k)}$; $M_j^{(a)}$ ($j=1,2,\dots,k$) is the potential outcome variable of $M_j$ had the treatment variables been set to value $A=a$. Based on variables $Y^{(a,m_1,m_2,\dots,m_k)}$ and $M_j^{(a)}$, we define $Y^{(a)}=Y^{(a,M_1^{(a)},M_2^{(a)},\dots,M_k^{(a)})}$.

When multiple mediators are present in the system, the NIE contains a joint effect of the multiple mediators. Therefore, further decomposition of the total NIE is often desired. We consider the decomposition,
\begin{align*}
    \text{NIE}&=E[Y^{(1,M_1^{(1)},M_2^{(1)},\dots,M_k^{(1)})}-Y^{(1,M_1^{(0)},M_2^{(0)},\dots,M_k^{(0)})}]\\
    &=\sum_{j=1}^k\underbrace{E[Y^{(1,M_1^{(1)},M_2^{(1)},\dots,M_k^{(1)})}-Y^{(1,M_{j}^{(0)},M_{-j}^{(1)})}]}_{\text{EIE}_{M_j}}\\
    &~~~-\underbrace{(k-1)E[Y^{(1,M_1^{(1)},M_2^{(1)},\dots,M_k^{(1)})}]+E[Y^{(1,M_1^{(0)},M_2^{(0)},\dots,M_k^{(0)})}]-\sum_{j=1}^k E[Y^{(1,M_{j}^{(0)},M_{-j}^{(1)})}]}_{\text{INT}},
\end{align*}
where $M_{-j}^{(a)}$ denotes the vector of potential outcome variables for all mediators other than the $j$th one, under treatment level $a$. The first $k$ terms in the last expression are called the exit indirect effect (EIE) through $M_j$, $j=1,2,\dots,k$, respectively, and the last term is an interaction effect between $M_1$, $M_2$, $\dots$, $M_k$.

We still require consistency, meaning that, $M_j^{(a)}=M_j$ if $A=a$ for $j=1,2,\dots,k$, and $Y^{(a,m_1,m_2,\dots,m_k)}=Y$ if $A=a$, $M_1=m_1$, $M_2=m_2$, $\dots$, $M_k=m_k$, as well as positivity, meaning that for all $m_j,a,$ and $x$, we have $p(m_1,m_2,\dots,m_k\mid a,x)>0$, $p(m_j\mid a,x)>0$, for $j=1,2,\dots,k$, and $p(a\mid x)>0$.

The no-unobserved-confounder assumption is still required, and the corresponding sequential exchangeability assumptions are provided as follows.

\begin{assumption}[Sequential exchangeability (multiple)]
\label{assumption:med-multi}
    For any $a,a'\in\{0,1\}$, and any values $m_1$ and $m_2$ of the mediators, we have
\begin{enumerate}[label=(\roman*)]
	\item $\{Y^{(a,m_1,m_2,\dots,m_k)},M_1^{(a)},M_2^{(a)},\dots,M_k^{(a)}\}\independent A\mid X$,
	\item $Y^{(a,m_1,m_2,\dots,m_k)}\independent \{M_1^{(a)},M_2^{(a)},\dots,M_k^{(a)}\}\mid A=a,X$,
	\item $Y^{(a,m_1,m_2,\dots,m_k)}\independent \{M_1^{(a')},M_2^{(a')},\dots,M_k^{(a')}\}\mid X$.
\end{enumerate}
\end{assumption}

Interpretation of Assumption \ref{assumption:med-multi} is similar to that for Assumption \ref{assumption:med}.

Under Assumption \ref{assumption:med-multi}, $E[Y^{(1,M_1^{(1)},M_2^{(1)},\dots,M_k^{(1)})}]$ and $E[Y^{(1,M_1^{(0)},M_2^{(0)},\dots,M_k^{(0)})}]$ can be identified in the same manner as in the single-mediator setting, since the vector $(M_1^{(a)},M_2^{(a)},\dots,M_k^{(a)})$ may be treated as a single composite mediator $M^{(a)}$. Therefore, NIE in the multiple-mediator case is also identified, which we denote as $\xi_0^{\text{multi}}$. 

\begin{assumption}
\label{assumption:no-heterogeneity} 
For any values $m_1$, $m_2$, $\dots$, $m_k$ of the mediators, we have
\begin{align*}
    E[Y^{(1,M_1^{(1)},M_2^{(1)},\dots,M_k^{(1)})}-Y^{(1,M_{j}^{(0)},M_{-j}^{(1)})}\mid M_{-j}^{(1)}=m_{-j},X]=E[Y^{(1,M_j^{(1)},m_{-j})}-Y^{(1,M_j^{(0)},m_{-j})}\mid X],
\end{align*}
where $m_{-j}$ denotes the vector of mediator values corresponding to all mediators except the $j$th one.
\end{assumption}
The conditions in Assumption \ref{assumption:no-heterogeneity} state that one mediator does not introduce additional effect heterogeneity of the treatment via the other mediators.
Note that the identification of the cross-world quantities $E[Y^{(1,M_{j}^{(0)},M_{-j}^{(1)})}]$ for $j=1,2,\dots,k$ require stronger sequential exchangeability assumptions \citep{taguri2018causal}. However, since our primary interest lies in the exit indirect effects $\text{EIE}_{M_j}$, we instead impose weaker identification conditions that suffice for identifying these effects jointly, without requiring identification of each cross-world mean separately.

Under Assumptions \ref{assumption:med-multi} and \ref{assumption:no-heterogeneity}, the interventional indirect effects $\text{EIE}_{M_j}$ $(j=1,2,\dots,k)$ are identifiable \citep{xia2022decomposition}.\footnote{\cite{xia2022decomposition} presented identification for the case of two mediators.} Specifically,
\begin{align*}
    &\text{EIE}_{M_j}\\
    &=E[Y^{(1,M_1^{(1)},M_2^{(1)},\dots,M_k^{(1)})}-Y^{(1,M_{j}^{(0)},M_{-j}^{(1)})}]\\
    &=\iiiint yp(Y^{(1,m_{j},m_{-j})}=y\mid x)p(x)\\
    &~~~~~~~~~~~\times\big(p(M_j^{(1)}=m_j,M_{-j}^{(1)}=m_{-j},\mid x)-p(M_j^{(0)}=m_j,M_{-j}^{(1)}=m_{-j}\mid x)\big)dydm_jdm_{-j}dx\\
    &=\iiiint yp(y\mid a=1,m_{j},m_{-j},x)\\
    &~~~~~~~~~~~\times\big(p(m_j\mid a=1,x)-p(m_j\mid a=0,x)\big)p(m_{-j}\mid a=1,x)p(x)dydm_jdm_{-j}dx\\
    &=:\Delta^{M_j}.
\end{align*}
The interaction effect is then identified as
\begin{align*}
    INT=\xi_0^{\text{multi}}-\sum_{j=1}^k\Delta^{M_j}:=\Delta^{\text{INT}}.
\end{align*}
Both $\Delta^{M_j}$ and $\Delta^{\text{INT}}$ are functionals of the observed data distribution $P_0$. Since $\Delta^{\text{INT}}$ is fully determined by $\xi_0^{\text{multi}}$ and $\Delta^{M_j}$, we focus on estimation of $\Delta^{M_j}$ in what follows. Under the multiple-mediator model, our objective is to estimate $\Delta^{M_j}$ from the independent and identically distributed observations $\{O_i:=(X,A,M_1,M_2,\dots,M_k,Y)_i\}_{i=1}^n$ drawn from $P_0$.

\section{Single-Mediator Case}\label{sec:single}

\subsection{Influence Function-Based Estimator}
\label{sec:IF}
In this section, we review the influence function-based estimator of $\psi_0$ proposed by \cite{tchetgen2012semiparametric} and its multiple-robustness property. The explicit expression for the influence function of $\psi_0$ is given in the Appendix, and we denote it by $IF_{\psi_0}$. Based on $IF_{\psi_0}$, the authors propose first estimating the following nuisance functions: $ p(A=1\mid X=\cdot)$, $E[Y\mid A=1,M=\cdot,X=\cdot]$, and $ p(M=\cdot\mid A=a,X=\cdot)$, for $a\in\{0,1\}$. Then, estimate the mediation functional by
\begin{align*}
\hat\psi^{\text{TTS}}=&E_n\bigg[\frac{I(A=1)\hat{p}(M\mid A=0,X)}{\hat{p}(A=1\mid X)\hat{p}(M\mid A=1,X)}\{Y-\hat{E}[Y\mid A=1,M,X]\}\\
&+\frac{I(A=0)}{1-\hat{p}(A=1\mid X)}\{\hat{E}[Y\mid A=1,M,X]-\hat \eta(1,0,X)\}+\hat \eta(1,0,X)\bigg],
\end{align*}
where $E_n$ is the empirical expectation operator (treating nuisance functions as non-random), and 
\[
\hat \eta(1,0,X)=\int\hat{E}[Y\mid A=1,M=m,X]\hat{p}(M=m\mid A=0,X)dm.
\]
The authors showed that this estimator is multiply robust in the following sense,
\begin{proposition}
\label{prop:MRTTS}
$\hat\psi^{\text{TTS}}$ is unbiased if at least one of the following pairs of nuisance function estimators is correctly specified.
\begin{enumerate}[label=(\roman*)]
\item $\{\hat{p}(A=1\mid X=\cdot),\hat E[Y\mid A=1,M=\cdot,X=\cdot]\}$,
\item $\{\hat{p}(A=1\mid X=\cdot),\hat{p}(M=\cdot\mid A=a,X=\cdot)\}$,
\item $\{\hat E[Y\mid A=1,M=\cdot,X=\cdot],\hat{p}(M=\cdot\mid A=a,X=\cdot)\}$.
\end{enumerate}
\end{proposition}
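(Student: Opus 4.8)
The plan is to reduce the statement to an oracle (fixed-nuisance) bias computation and then exhibit the multiply-robust cancellation structure. Since $E_n$ averages i.i.d. copies of a single summand and the nuisance estimators are treated as fixed (non-random) functions, $\hat\psi^{\text{TTS}}$ is unbiased precisely when the population expectation of that summand under $P_0$ equals $\psi_0$. Writing the true nuisances as $\pi_0(X)=p(A=1\mid X)$, $\mu_0(m,X)=E[Y\mid A=1,M=m,X]$, and $f_{0,a}(m,X)=p(M=m\mid A=a,X)$, and the postulated ones as $\hat\pi,\hat\mu,\hat f_a$, I first record that $\psi_0=E[\eta_0(X)]$ with $\eta_0(X)=\int\mu_0(m,X)f_{0,0}(m,X)\,dm$. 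The goal becomes showing $E_{P_0}[\hat\psi^{\text{TTS}}]-\psi_0=0$ in each of the three cases.

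First I would compute $E_{P_0}[\hat\psi^{\text{TTS}}]$ by iterated conditioning, taking $E[\,\cdot\mid X]$ and using $P_0(A=1\mid X)=\pi_0(X)$, the fact that $M\mid\{A=a,X\}\sim f_{0,a}(\cdot,X)$, and $E[Y\mid A=1,M,X]=\mu_0(M,X)$. The $I(A=1)$ term contributes a factor $\pi_0/\hat\pi$ and averages $Y-\hat\mu$ to $\mu_0-\hat\mu$ against the true density $f_{0,1}$; the $I(A=0)$ term contributes a factor $(1-\pi_0)/(1-\hat\pi)$ and averages $\hat\mu-\hat\eta$ against $f_{0,0}$; and $\hat\eta(1,0,X)$ passes through unchanged. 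Collecting terms gives a conditional bias of the schematic form
\begin{align*}
b(X)=\frac{\pi_0}{\hat\pi}\int\frac{\hat f_0\,f_{0,1}}{\hat f_1}\,(\mu_0-\hat\mu)\,dm+\frac{1-\pi_0}{1-\hat\pi}\big(\tilde\eta-\hat\eta\big)+\hat\eta-\eta_0,
\end{align*}
where I abbreviate $\tilde\eta(X)=\int\hat\mu f_{0,0}\,dm$ and $\hat\eta(X)=\int\hat\mu\hat f_0\,dm$, all integrands evaluated at $(m,X)$. Since $E_{P_0}[\hat\psi^{\text{TTS}}]-\psi_0=E[b(X)]$, it suffices to show $b(X)\equiv 0$ in each case.

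The core of the argument is then a case analysis exploiting that every surviving piece of $b(X)$ is a product carrying a discrepancy from at least two distinct nuisances. In case (i) ($\hat\pi=\pi_0$, $\hat\mu=\mu_0$) both probability ratios equal one and $\mu_0-\hat\mu=0$ kills the first integral, while $\tilde\eta=\eta_0$, leaving $b=(\tilde\eta-\hat\eta)+\hat\eta-\eta_0=0$. In case (ii) ($\hat\pi=\pi_0$, $\hat f_a=f_{0,a}$) the ratios are one and the mediator weight telescopes, $\hat f_0 f_{0,1}/\hat f_1=f_{0,0}$, so the first integral becomes $\eta_0-\tilde\eta$; moreover $\hat\eta=\tilde\eta$, and the remaining terms sum to $\tilde\eta-\eta_0$, giving $b=0$. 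In case (iii) ($\hat\mu=\mu_0$, $\hat f_a=f_{0,a}$) the first integral vanishes through $\mu_0-\hat\mu=0$, and $\hat\eta=\tilde\eta=\eta_0$ forces the second term to zero and $\hat\eta-\eta_0=0$, so $b=0$.

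The main obstacle is bookkeeping rather than conceptual: one must correctly track how the true mediator densities enter when averaging over $M$, and in particular verify that the mismatched density ratio $\hat f_0 f_{0,1}/\hat f_1$ collapses to $f_{0,0}$ exactly when $\hat f_a=f_{0,a}$, and that the coupled quantity $\hat\eta=\int\hat\mu\hat f_0\,dm$ simultaneously behaves as expected under each correctness pattern (it equals $\tilde\eta$ when $\hat f_0=f_{0,0}$, and equals $\eta_0$ when additionally $\hat\mu=\mu_0$). A cleaner but equivalent route is to rearrange $b(X)$ once and for all into an explicit sum of cross-products of the errors $(\hat\pi-\pi_0)$, $(\hat\mu-\mu_0)$, and $(\hat f_a-f_{0,a})$, from which all three cases follow simultaneously; the algebra there is routine but requires care in pairing terms so that no single-error term remains. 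Finally, I would emphasize that the claim is an oracle (fixed-nuisance) unbiasedness statement, distinct from---and a precursor to---the asymptotic results for the estimator with estimated nuisances established later.
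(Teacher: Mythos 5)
Your proof is correct. Note, however, that the paper does not supply its own proof of this proposition: it is a reviewed result credited to \cite{tchetgen2012semiparametric}, and no argument for it appears in the appendix, so there is no internal proof to compare against line by line. Your route---treating the nuisances as fixed so that unbiasedness reduces to a population expectation of one summand, conditioning on $X$ to obtain the conditional bias $b(X)$, and then checking the three correctness patterns---is the standard argument, and it mirrors the strategy the paper uses for its \emph{own} multiple-robustness results (the bias decomposition of Theorem~\ref{thm:main} followed by the case analysis in Proposition~\ref{prop:MR}, and analogously Proposition~\ref{prop:MR_multi}). I verified the three cases in your decomposition $b(X)=\frac{\pi_0}{\hat\pi}\int\frac{\hat f_0 f_{0,1}}{\hat f_1}(\mu_0-\hat\mu)\,dm+\frac{1-\pi_0}{1-\hat\pi}(\tilde\eta-\hat\eta)+\hat\eta-\eta_0$: in case (i) the ratios equal one and $\tilde\eta=\eta_0$, so the sum telescopes; in case (ii) the weight collapses to $f_{0,0}$, giving first term $\eta_0-\tilde\eta$, and the sum $(\eta_0-\tilde\eta)+(\tilde\eta-\hat\eta)+(\hat\eta-\eta_0)$ vanishes (in fact even before invoking $\hat\eta=\tilde\eta$); in case (iii) each piece vanishes individually since $\hat\eta=\tilde\eta=\eta_0$. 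The only point worth making explicit is the implicit regularity required for $b(X)$ to be well defined---positivity of $\hat p(A=1\mid X)$, $1-\hat p(A=1\mid X)$, and $\hat p(M\mid A=1,X)$, and integrability of the weighted terms---which the definition of $\hat\psi^{\text{TTS}}$ already presupposes; with that noted, your argument is complete.
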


As mentioned in Section \ref{secLintro}, the main concerns with $\hat\psi^{\text{TTS}}$ are the complexity of the nuisance
functions and variance inflation from treatment and mediator mechanisms in the denominator. Below, we propose an alternative estimation strategy that mitigates these issues.

\subsection{Proposed Estimation Approach}
\label{sec:proposal}
\subsubsection{Two-Stage Nuisance Function Estimation}

In this section, we present our two-stage nuisance function estimation approach. We start by introducing a reparametrization of the likelihood function based on the following nuisance functions.
\begin{align*}
&\pi_1(X=\cdot):=\frac{1}{p(A=0\mid X=\cdot)},\\
&\pi_2(M=\cdot,X=\cdot):=\frac{p(M=\cdot\mid A=0,X=\cdot)}{p(M=\cdot,A=1\mid X=\cdot)},\\
&\mu_1(M=\cdot,X=\cdot):=E[Y\mid A=1,M=\cdot,X=\cdot],\\
&\mu_2(X=\cdot):=E[\mu_1(M,X)\mid A=0,X=\cdot].
\end{align*}

Note that the parameter of interest can be written as
\[
\psi_0=E[\mu_2(X)].
\]

We start by presenting the steps of our proposed estimation method, which contain cross-fitting \citep{chernozhukov2018double} for separating the samples used for estimating the nuisance functions from that used for estimating the parameter of interest.

\medskip
\noindent
\begin{sloppypar}{\bf Estimating the nuisance functions.} 
We partition the samples into $L$ folds $\{I_1,\dots,I_L\}$ with roughly equal size $m=n/L$. For $\ell\in\{1,\dots,L\}$, we estimate the nuisance functions as $(\hat{\pi}_{1,\ell},\hat{\pi}_{2,\ell},\hat{\mu}_{1,\ell},\hat{\mu}_{2,\ell})$ on data from all folds but $I_{\ell}$, which we denote as $I_\ell^c$. In order to estimate the nuisance functions, we proceed in the following two stages.
\end{sloppypar}

\medskip
\noindent
{\bf Stage 1:} 
\begin{enumerate}[label=(\roman*)]
\item 
We estimate $\pi_1$ using the following balancing estimator:\\ We choose $\hat\pi_{1,\ell}$ such that it satisfies
\begin{equation}
\label{eq:balance1}
E\left[\left\{(1-A)\hat\pi_{1,\ell}(X)-1\right\}h_1(X)\right]\approx0,~~~~~~\forall h_1\in\mathcal{H}_1,
\end{equation}
where $\mathcal{H}_1\subset L^2(P_{0,X})$ is a function space chosen by the researcher.
\item We estimate $\mu_1$ by regressing $Y$ on $\{M,X,A=1\}$ to obtain $\hat\mu_{1,\ell}$.
\end{enumerate}

\noindent
{\bf Stage 2:} 
\begin{enumerate}[label=(\roman*)]
\item 
We estimate $\pi_2$ using the following balancing estimator:\\ Given $\hat\pi_{1,\ell}$ from Stage 1, we choose $\hat\pi_{2,\ell}$ such that it satisfies
\begin{equation}
\label{eq:balance2}
E\left[\left\{A\hat\pi_{2,\ell}(M,X)-(1-A)\hat\pi_{1,\ell}(X)\right\}h_2(M,X)\right]\approx0,~~~~~~\forall h_2\in\mathcal{H}_2,
\end{equation}
where $\mathcal{H}_2\subset L^2(P_{0,(M,X)})$ is a function space chosen by the researcher.
\item Given $\hat\mu_{1,\ell}$ from Stage 1, we estimate $\mu_2$ by regressing $\hat\mu_{1,\ell}(X,M)$ on $\{X,A=0\}$ to obtain $\hat\mu_{2,\ell}$.
\end{enumerate}

\medskip
\noindent
{\bf Estimating the parameter of interest.}
For any choice of nuisance functions $\{\tilde\pi_1,\tilde\pi_2,\tilde\mu_1,\tilde\mu_2\}$, let
\begin{align*}
\phi(O;\tilde\pi_1,\tilde\pi_2,\tilde\mu_1,\tilde\mu_2):=&\tilde\mu_2(X)+(1-A)\tilde\pi_1(X)\{\tilde\mu_1(M,X)-\tilde\mu_2(X)\}\\
&+A\tilde\pi_2(M,X)\{Y-\tilde\mu_1(M,X)\}.
\end{align*}

For all $\ell$, let $\hat\psi^{\text{2S}}_\ell$ be the estimator of $\psi_0$ defined as,
\begin{align*}
\hat\psi^{\text{2S}}_\ell
=E_m\left[ \phi(O;\hat\pi_{1,\ell},\hat\pi_{2,\ell},\hat\mu_{1,\ell},\hat\mu_{2,\ell})\right],
\end{align*}
where the empirical expectation is evaluated on $I_\ell$. Our final estimator of the parameter of interest will be the following.
\begin{align*}
    \hat\psi^{\text{2S}}=\frac{1}{L}\sum_{\ell=1}^L \hat\psi^{\text{2S}}_\ell.
\end{align*}

The use of the cross-fitting approach plays a key role in establishing consistency and asymptotic normality of our proposed estimator, shown in Theorem \ref{thm:CAN}.

We propose a nonparametric estimator for Stage 1 based on minimax optimization. To make the notation simple, throughout the rest of Section \ref{sec:proposal}, assume we use a sample of size $n$ for estimating the nuisance functions. A direct implementation is to solve the following optimization problem.

\[
\tilde\pi_1=\arg\min_{\pi_1\in\Pi_1}\max_{h_1\in\mathcal{H}_1}E_n\left[\left\{(1-A)\pi_1(X)-1\right\}h_{1}(X)\right],
\]
where $\Pi_1\subset L^2(P_{0,X})$ is a function space chosen by the researcher. Unfortunately, this estimator will be unstable and highly sensitive to noise. The following result suggests an alternative.
\begin{proposition}
\label{prop:stabel_pi1}
$\breve{\pi}_1$ obtained from the following minimax optimization satisfies Equation \eqref{eq:balance1} with equality.
\begin{equation*}
\breve{\pi}_1=\arg\min_{\pi_1\in L^2(P_{0,X})}\max_{h_1\in L^2(P_{0,X})}E\left[\left\{(1-A)\pi_1(X)-1\right\}h_{1}(X)-\frac{1}{4}h^2_{1}(X)\right].
\end{equation*}
\end{proposition}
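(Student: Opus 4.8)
The plan is to solve the minimax problem in the natural order---inner maximization over $h_1$ first, then outer minimization over $\pi_1$---using the fact that the penalty $-\frac{1}{4} h_1^2$ renders the inner problem a pointwise concave quadratic whose optimal value is a squared $L^2$ norm. First I would reduce the inner objective by conditioning on $X$. Because $\pi_1$ and $h_1$ depend on $X$ only, the tower property gives
\[
E\left[\{(1-A)\pi_1(X)-1\}h_1(X)\right]=E\left[b_{\pi_1}(X)\,h_1(X)\right],\qquad b_{\pi_1}(X):=\pi_1(X)\,p(A=0\mid X)-1,
\]
so for fixed $\pi_1$ the inner objective equals $E\left[b_{\pi_1}(X)h_1(X)-\frac{1}{4} h_1^2(X)\right]$.

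The key step is to complete the square. Pointwise in $X$ one has $b_{\pi_1}h_1-\frac{1}{4} h_1^2=b_{\pi_1}^2-\frac{1}{4}\left(h_1-2b_{\pi_1}\right)^2$, hence
\[
E\left[b_{\pi_1}(X)h_1(X)-\frac{1}{4} h_1^2(X)\right]=E\left[b_{\pi_1}^2(X)\right]-\frac{1}{4} E\left[\left(h_1(X)-2b_{\pi_1}(X)\right)^2\right]\le E\left[b_{\pi_1}^2(X)\right],
\]
with equality attained by the measurable, square-integrable choice $h_1^\star(X)=2b_{\pi_1}(X)$. This is exactly where the coefficient $\frac{1}{4}$ matters: it is calibrated so that the penalized inner value collapses to the squared $L^2$ norm of the conditional balancing residual $b_{\pi_1}$, turning the problem into a clean projection and sidestepping any delicate measurable-selection argument.

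It then remains to minimize $E\left[b_{\pi_1}^2(X)\right]=E\left[\left(\pi_1(X)p(A=0\mid X)-1\right)^2\right]$ over $\pi_1\in L^2(P_{0,X})$. This objective is nonnegative and pointwise in $X$, so under positivity $p(A=0\mid x)>0$ it is driven to its minimum value $0$ by $\breve\pi_1(x)=1/p(A=0\mid x)$, which coincides with the target nuisance $\pi_1$ defined earlier and lies in $L^2(P_{0,X})$ under (strong) positivity. At this minimizer the residual vanishes, $b_{\breve\pi_1}(X)=0$ almost surely, so that for every $h_1\in L^2(P_{0,X})$---in particular for every $h_1\in\mathcal{H}_1\subset L^2(P_{0,X})$---
\[
E\left[\{(1-A)\breve\pi_1(X)-1\}h_1(X)\right]=E\left[b_{\breve\pi_1}(X)h_1(X)\right]=0,
\]
which is Equation \eqref{eq:balance1} with equality.

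The step I expect to require the most care is confirming that the two pointwise optimizations are legitimate over the infinite-dimensional class $L^2(P_{0,X})$ rather than merely heuristic. The completing-the-square identity handles the inner maximization rigorously, since it gives a uniform upper bound with an explicit attaining element, while the outer minimization needs only that $b_{\pi_1}\in L^2$ (guaranteed by $\pi_1\in L^2(P_{0,X})$ together with $0\le p(A=0\mid X)\le 1$) and that the minimizer $\breve\pi_1$ itself belongs to $L^2(P_{0,X})$, which is where the positivity assumption enters. Notably, no minimax interchange or saddle-point theorem is needed, because the inner value is computed in closed form before the outer minimization is carried out.
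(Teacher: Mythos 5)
Your proof is correct and takes essentially the same route as the paper's: both compute the inner maximum in closed form (the paper completes the square with the full variable $(1-A)\pi_1(X)-1$ and identifies the maximizer as twice its conditional expectation given $X$, while you apply the tower property first and complete the square pointwise in the residual $b_{\pi_1}(X)=\pi_1(X)p(A=0\mid X)-1$, which is the same quantity), reduce the outer problem to minimizing $E\big[b_{\pi_1}^2(X)\big]$, and conclude that any minimizer has vanishing residual since zero is attained by $1/p(A=0\mid X)$. Your added remark that strong positivity is needed for $1/p(A=0\mid X)$ to lie in $L^2(P_{0,X})$ is a minor but valid refinement the paper leaves implicit.
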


Inspired by Proposition \ref{prop:stabel_pi1}, we estimate $\pi_1$ as,
\begin{equation}
\label{eq:estpi1}
\hat\pi_1=\arg\min_{\pi_1\in\Pi_1}\max_{h_1\in\mathcal{H}_1}E_n\left[\left\{(1-A)\pi_1(X)-1\right\}h_{1}(X)-\frac{1}{4}h^2_{1}(X)\right]+R_{n}^1(\pi_1,h_1),
\end{equation}
where $R_n^1(\cdot,\cdot)$ is a regularizer. We will give a specific choice for $R_n^1(\cdot,\cdot)$ in Section \ref{sec_drest}. 
\begin{remark}
\cite{imai2014covariate} proposed the covariate balancing propensity score (CBPS) method for estimating the average treatment effect. Our proposed technique for estimating $\hat\pi_1$ in Equation \eqref{eq:estpi1} can be viewed as a nonparametric version of their covariate balancing approach. Therefore, our proposed nonparametric estimator $\hat\pi_1$ can also be used in the setting of \cite{imai2014covariate} for estimating average treatment effect without resorting to parametric models for the inverse of the propensity score.
\end{remark}

We have similar arguments regarding $\pi_2$ in Stage 2, which lead to the following result.
\begin{proposition}
\label{prop:stabel_pi2}
For any give $\hat\pi_1$,
$\breve{\pi}_2$ obtained from the following minimax optimization satisfies Equation \eqref{eq:balance2} with equality.
\begin{equation*}
\breve{\pi}_2=\arg\min_{\pi_2\in L^2(P_{0,(M,X)})}\max_{h_2\in L^2(P_{0,(M,X)})}E\Big[\{A\pi_2(M,X)-(1-A)\hat\pi_1(X)\}h_{2}(M,X)-\frac{1}{4}h^2_{2}(M,X)\Big].
\end{equation*}
\end{proposition}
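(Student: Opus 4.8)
The plan is to solve the minimax problem in closed form, mirroring the argument that underlies Proposition \ref{prop:stabel_pi1}: first carry out the inner maximization exactly via a completion-of-squares identity, then read off the outer minimizer. Fix $\pi_2\in L^2(P_{0,(M,X)})$. Since every candidate $h_2$ depends only on $(M,X)$, I would apply the tower property to replace the bracketed factor by its conditional expectation given $(M,X)$. Writing
\[
w(M,X):=E\big[A\pi_2(M,X)-(1-A)\hat\pi_1(X)\mid M,X\big]=\pi_2(M,X)\,p(A=1\mid M,X)-\hat\pi_1(X)\,p(A=0\mid M,X),
\]
where the second equality uses that $\pi_2$ and $\hat\pi_1$ are $(M,X)$-measurable, the objective collapses to $E\big[w(M,X)\,h_2(M,X)-\tfrac14 h_2^2(M,X)\big]$.

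Next I would complete the square pointwise using the identity $wh_2-\tfrac14 h_2^2=w^2-\tfrac14(h_2-2w)^2$, so that
\[
\max_{h_2\in L^2(P_{0,(M,X)})}E\Big[w\,h_2-\tfrac14 h_2^2\Big]=E\big[w(M,X)^2\big],
\]
attained at $h_2^\star=2w$ (which is square-integrable because $\pi_2,\hat\pi_1\in L^2$ and the conditional treatment probabilities are bounded under positivity). The outer problem then reduces to $\min_{\pi_2}E[w(M,X)^2]\ge 0$, and this nonnegative lower bound is achieved by the explicit choice $\breve\pi_2(M,X)=\hat\pi_1(X)\,p(A=0\mid M,X)/p(A=1\mid M,X)$, which lies in $L^2(P_{0,(M,X)})$ under positivity and forces $w\equiv 0$ almost surely. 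Consequently the minimum value of the outer problem is $0$, so any minimizer $\breve\pi_2$ must satisfy $E[w(\breve\pi_2)^2]=0$, i.e. $w(\breve\pi_2)=0$ $P_0$-almost surely. Substituting $w=0$ back through the tower property gives, for every $h_2$,
\[
E\big[\{A\breve\pi_2(M,X)-(1-A)\hat\pi_1(X)\}\,h_2(M,X)\big]=E\big[w(M,X)\,h_2(M,X)\big]=0,
\]
which is exactly Equation \eqref{eq:balance2} holding with equality, in particular for all $h_2\in\mathcal{H}_2$.

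The main obstacle I anticipate is the rigorous justification of the inner maximization step: I must argue that optimizing over the entire space $L^2(P_{0,(M,X)})$ genuinely reduces to the pointwise maximization of the integrand, and that the optimizer $h_2^\star=2w$ as well as $w$ itself are square-integrable. This is precisely where the quadratic penalty $-\tfrac14 h_2^2$ is indispensable: without it the inner supremum is $+\infty$ whenever $w\not\equiv 0$, so the unregularized problem is ill-posed and unstable, whereas the penalty turns it into a strictly concave program with a unique, explicit maximizer and a nonnegative outer objective whose minimizer coincides with the exact balancing solution.
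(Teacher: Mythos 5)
Your proof is correct and takes essentially the same route as the paper's: both complete the square in the inner maximization to identify the optimizer $h_2^\star=2E[A\pi_2(M,X)-(1-A)\hat\pi_1(X)\mid M,X]$, reduce the outer problem to minimizing $E\big[E[A\pi_2(M,X)-(1-A)\hat\pi_1(X)\mid M,X]^2\big]$, and observe that zero is attained by the explicit choice $\hat\pi_1(X)\,p(A=0\mid M,X)/p(A=1\mid M,X)$ (which is exactly the paper's $p(M,A=0\mid X)\hat\pi_1(X)/p(M,A=1\mid X)$), forcing the conditional expectation, and hence the balancing condition, to vanish. The only cosmetic difference is that you apply the tower property before completing the square, while the paper completes the square first and reads off the conditional expectation as the $L^2$ projection; your added remarks on square-integrability and on non-uniqueness of the minimizer are welcome but not a change of method.
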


Therefore, in Stage 2, to estimate $\pi_2$, we can use the following estimator.
\begin{equation}
\label{eq:estpi2}
\hat\pi_2=\arg\min_{\pi_2\in\Pi_2}\max_{h_2\in\mathcal{H}_2}E_n\Big[\{A\pi_2(M,X)-(1-A)\hat\pi_1(X)\}h_2(M,X)-\frac{1}{4}h^2_2(M,X)\Big]+R_{n}^2(\pi_2,h_2),
\end{equation}
where $\Pi_2\subset L^2(P_{0,(M,X)})$ is a user-specified function class, and $R_n^2(\cdot,\cdot)$ is a regularizer.

In order to estimate $\mu_1$ in Stage 1 and $\mu_2$ in Stage 2 of the approach, one may use any standard parametric or nonparametric regression method.

\subsubsection{A Kernel Method-Based Implementation}
\label{sec_drest}

In this section, we describe solving the minimax problems \eqref{eq:estpi1} and \eqref{eq:estpi2} when RKHSes are used as the hypothesis classes. We use the method of \cite{pmlr-v151-ghassami22a}, which extends the nonparametric estimation approach proposed by \cite{dikkala2020minimax} for solving integral equations via adversarial learning within the semiparametric proximal causal inference framework.

Let $\Pi_1$, $\mathcal{H}_1$, $\Pi_2$, and $\mathcal{H}_2$ be RKHSes with kernels $K_{\Pi_1}$, $K_{\mathcal{H}_1}$, $K_{\Pi_2}$, and $K_{\mathcal{H}_2}$, respectively, equipped with the RKHS norms $\|\cdot\|_{\Pi_1}$, $\|\cdot\|_{\mathcal{H}_1}$, $\|\cdot\|_{\Pi_2}$, and $\|\cdot\|_{\mathcal{H}_2}$. We propose the following Tikhonov regularization-based optimization problems:
\begin{align}
\hat\pi_1&=\arg\min_{\pi_1\in\Pi_1}\max_{h_1\in\mathcal{H}_1}E_n\left[\left\{(1-A)\pi_1(X)-1\right\}h_{1}(X)-\frac{1}{4}h^2_{1}(X)\right]-\lambda_{\mathcal{H}_1}\|h_1\|_{\mathcal{H}_1}^2+\lambda_{\Pi_1}\|\pi_1\|_{\Pi_1}^2, \label{eq:estpi1reg}\\
\hat\pi_2&=\arg\min_{\pi_2\in\Pi_2}\max_{h_2\in\mathcal{H}_2}E_n\left[\left\{A\hat\pi_2(M,X)-(1-A)\hat\pi_1(X)\right\}h_2(M,X)-\frac{1}{4}h^2_2(M,X)\right]\notag\\
&~~~~~~~~~~~~~~~~~~~~~~~~~~~~~~-\lambda_{\mathcal{H}_2}\|h_2\|_{\mathcal{H}_2}^2+\lambda_{\Pi_2}\|\pi_2\|_{\Pi_2}^2, \label{eq:estpi2reg}
\end{align}
where $\lambda_{\mathcal{H}_1}$, $\lambda_{\Pi_1}$, $\lambda_{\mathcal{H}_2}$, and $\lambda_{\Pi_2}$ are regularization hyperparameters that are decreasing in $n$. The convergence analysis of the proposed minimax estimators can be done similar to the analysis presented in \citep{dikkala2020minimax,pmlr-v151-ghassami22a}, and hence omitted here.

\begin{sloppypar} Define $K_{\Pi_1,n}=(K_{\Pi_1}(X_i,X_j))_{i,j=1}^n$, $K_{\Pi_2,n}=(K_{\Pi_2}((M,X)_i,(M,X)_j))_{i,j=1}^n$, $K_{\mathcal{H}_1,n}=(K_{\mathcal{H}_1}(X_i,X_j))_{i,j=1}^n$, and $K_{\mathcal{H}_2,n}=(K_{\mathcal{H}_2}((M,X)_i,(M,X)_j))_{i,j=1}^n$ as the empirical kernel matrices corresponding to spaces $\Pi_1$, $\Pi_2$, $\mathcal{H}_1$,  and $\mathcal{H}_2$, respectively. The following result provides closed-form solutions for $\hat\pi_1$ and $\hat\pi_2$.
\end{sloppypar}

\begin{proposition}\label{prop:pisol}
~\\\vspace{-5mm}
\begin{enumerate}[label=(\alph*)]
	\item Equation \eqref{eq:estpi1reg} achieves its optimum at $\hat{\pi}_1=\sum_{i=1}^n\alpha_i K_{\Pi_1}(X_i,\cdot)$, with $\alpha$ defined as,
    \begin{align*}
        \alpha = \Big(K_{\Pi_1,n} diag(1-A)\Gamma_1 diag(1-A)K_{\Pi_1,n}+n^2\lambda_{\Pi_1} K_{\Pi_1,n}\Big)^{\dagger}K_{\Pi_1,n}diag(1-A)\Gamma_1 e_n,
    \end{align*}
    where $\Gamma_1=\frac{1}{4}K_{\mathcal{H}_1,n}(\frac{1}{4n}K_{\mathcal{H}_1,n}+\lambda_{\mathcal{H}_1}I_n)^{-1}$, $diag(1-A)$ is a diagonal matrix with $1-A_i$ as the $i$-th diagonal entry, $e_n :=\begin{pmatrix}
        1 & 1 & \cdots & 1
    \end{pmatrix}^\top$, and $\dagger$ denotes Moore-Penrose pseudo-inverse.
    \bigskip

    \item  Equation \eqref{eq:estpi2reg} achieves its optimum at $\hat{\pi}_2=\sum_{i=1}^n\beta_i K_{\Pi_2}((M,X)_i,\cdot)$, with $\beta$ defined as,
    \begin{align*}
        \beta = \Big(K_{\Pi_2,n} diag(A)\Gamma_2 diag(A)K_{\Pi_2,n}+n^2\lambda_{\Pi_2} K_{\Pi_2,n}\Big)^{\dagger}K_{\Pi_2,n}diag(A)\Gamma_2 \big((1-A)\hat{\pi}_1(X)\big)_n,
    \end{align*}
    where $\Gamma_2=\frac{1}{4}K_{\mathcal{H}_2,n}(\frac{1}{4n}K_{\mathcal{H}_2,n}+\lambda_{\mathcal{H}_2}I_n)^{-1}$, $diag(A)$ is a diagonal matrix with $A_i$ as the $i$-th diagonal entry, and $\big((1-A)\hat{\pi}_1(X)\big)_n :=\begin{pmatrix}
        (1-A_1)\hat{\pi}_1(X_1) & (1-A_2)\hat{\pi}_1(X_2) & \cdots & (1-A_n)\hat{\pi}_1(X_n)
    \end{pmatrix}^\top$.
    \end{enumerate}
\end{proposition}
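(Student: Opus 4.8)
The plan is to reduce each minimax problem to a finite-dimensional saddle-point problem via a representer argument, solve the inner (concave) maximization in closed form to obtain a value function, and then minimize the resulting outer (convex) quadratic through its normal equations. I will describe the argument for part (a); part (b) is identical after the substitutions $\pi_1\mapsto\pi_2$, $h_1\mapsto h_2$, $X\mapsto(M,X)$, $diag(1-A)\mapsto diag(A)$, and $e_n\mapsto\big((1-A)\hat\pi_1(X)\big)_n$, since the linear term $A\pi_2(M,X)-(1-A)\hat\pi_1(X)$ plays the role of $(1-A)\pi_1(X)-1$.

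First I would invoke a generalized representer theorem for the saddle point of \eqref{eq:estpi1reg}. The integrand depends on $\pi_1$ and $h_1$ only through the evaluations $\{\pi_1(X_i)\}$ and $\{h_1(X_i)\}$, while the penalties are monotone in $\|\pi_1\|_{\Pi_1}$ and $\|h_1\|_{\mathcal{H}_1}$. Projecting either function onto the span of the kernel sections at the sample points leaves all evaluations unchanged and cannot increase the RKHS norm; since the outer problem minimizes $+\lambda_{\Pi_1}\|\pi_1\|_{\Pi_1}^2$ and the inner problem maximizes against $-\lambda_{\mathcal{H}_1}\|h_1\|_{\mathcal{H}_1}^2$, both optima are attained inside those spans. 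Hence I may write $\hat\pi_1=\sum_i\alpha_i K_{\Pi_1}(X_i,\cdot)$ and $h_1=\sum_i\gamma_i K_{\mathcal{H}_1}(X_i,\cdot)$, so that the evaluation vectors become $p:=K_{\Pi_1,n}\alpha$ and $q:=K_{\mathcal{H}_1,n}\gamma$, and the norms become $\alpha^\top K_{\Pi_1,n}\alpha$ and $\gamma^\top K_{\mathcal{H}_1,n}\gamma$. Writing $D:=diag(1-A)$, the objective becomes $\frac1n(Dp-e_n)^\top K_{\mathcal{H}_1,n}\gamma-\frac{1}{4n}\gamma^\top K_{\mathcal{H}_1,n}^2\gamma-\lambda_{\mathcal{H}_1}\gamma^\top K_{\mathcal{H}_1,n}\gamma+\lambda_{\Pi_1}\alpha^\top K_{\Pi_1,n}\alpha$.

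Next I would carry out the inner maximization over $\gamma$, a concave quadratic. Its stationarity condition factors as $K_{\mathcal{H}_1,n}\big[\frac1n(Dp-e_n)-(\frac{1}{2n}K_{\mathcal{H}_1,n}+2\lambda_{\mathcal{H}_1}I_n)\gamma\big]=0$; since $\frac{1}{2n}K_{\mathcal{H}_1,n}+2\lambda_{\mathcal{H}_1}I_n$ is strictly positive definite, setting the bracket to zero gives an explicit maximizer $\gamma^\star$, and substituting it yields the value function $\frac1{n^2}(Dp-e_n)^\top\Gamma_1(Dp-e_n)$. The key algebraic step is the identity $K_{\mathcal{H}_1,n}(\frac{1}{2n}K_{\mathcal{H}_1,n}+2\lambda_{\mathcal{H}_1}I_n)^{-1}=2\Gamma_1$, obtained by factoring $\frac{1}{2n}K_{\mathcal{H}_1,n}+2\lambda_{\mathcal{H}_1}I_n=2(\frac{1}{4n}K_{\mathcal{H}_1,n}+\lambda_{\mathcal{H}_1}I_n)$ and recognizing $\Gamma_1=\frac14 K_{\mathcal{H}_1,n}(\frac{1}{4n}K_{\mathcal{H}_1,n}+\lambda_{\mathcal{H}_1}I_n)^{-1}$; this is precisely where the constants $\tfrac14$ and $\tfrac{1}{4n}$ in the definition of $\Gamma_1$ originate. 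Since $\Gamma_1\succeq 0$, the outer problem reduces to minimizing the convex quadratic $\frac1{n^2}(DK_{\Pi_1,n}\alpha-e_n)^\top\Gamma_1(DK_{\Pi_1,n}\alpha-e_n)+\lambda_{\Pi_1}\alpha^\top K_{\Pi_1,n}\alpha$. Setting its gradient to zero and clearing the factor $n^2$ gives the normal equations $\big(K_{\Pi_1,n}D\Gamma_1 D K_{\Pi_1,n}+n^2\lambda_{\Pi_1}K_{\Pi_1,n}\big)\alpha=K_{\Pi_1,n}D\Gamma_1 e_n$, whose solution is the claimed pseudoinverse expression.

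The main obstacle I anticipate is not any single computation but the careful bookkeeping of two subtleties. First, the generalized representer theorem must be justified in the correct direction for each player, using the sign and monotonicity of the respective penalty, so that the finite-dimensional reduction is lossless. Second, because $K_{\Pi_1,n}$ and the other Gram matrices may be singular, the stationarity system need not have a unique solution; I would therefore verify consistency — both sides share the left factor $K_{\Pi_1,n}$, so the system is solvable — and argue that the Moore--Penrose pseudoinverse selects a valid minimizer, with all objective-equivalent choices of $\alpha$ yielding the same function $\hat\pi_1$. With these two points handled, the remaining algebra is routine, and part (b) follows verbatim under the substitutions noted above.
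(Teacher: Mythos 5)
Your proposal is correct and follows essentially the same route as the paper's proof: a representer-theorem reduction for both players, closed-form solution of the inner concave maximization yielding the quadratic value function $\xi_n(\pi_1)^\top\Gamma_1\,\xi_n(\pi_1)$ (your identity $K_{\mathcal{H}_1,n}(\tfrac{1}{2n}K_{\mathcal{H}_1,n}+2\lambda_{\mathcal{H}_1}I_n)^{-1}=2\Gamma_1$ matches the paper's computation of the optimal inner coefficients), and then the normal equations for the outer convex quadratic solved via the Moore--Penrose pseudoinverse. Your added attention to the singular-Gram-matrix case (consistency of the normal equations and well-definedness of the pseudoinverse solution) is a point the paper's proof passes over silently, but it does not change the argument.
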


\subsection{Theoretical Analysis}
\label{sec:analyses}

\subsubsection{Bias Analysis and Robustness Properties}

We begin by showing that the true nuisance functions $\pi_1$ and $\pi_2$ satisfy conditions \eqref{eq:balance1} and \eqref{eq:balance2} with equality, as stated in the following result.
\begin{proposition}
\label{prop:true}
~\\\vspace{-5mm}
\begin{enumerate}[label=(\alph*)]
	\item For the true nuisance function $\pi_1$, we have
	\[
	E\left[\left\{(1-A)\pi_1(X)-1\right\}h_1(X)\right]=0,~~~~~~\forall h_1\in L^2(P_{0,X}).
	\]
	\item For the true nuisance functions $\pi_1$ and $\pi_2$, we have
	\[
	E\left[\left\{A\pi_2(M,X)-(1-A)\pi_1(X)\right\}h_2(M,X)\right]=0,~~~~~~\forall h_2\in L^2(P_{0,(M,X)}).
	\]
\end{enumerate}
\end{proposition}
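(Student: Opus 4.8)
The plan is to prove both identities by direct computation via the law of iterated expectations, conditioning on $X$ and exploiting the way each inverse-probability weight cancels a conditional treatment probability. No machinery beyond tower property and the definitions of $\pi_1$ and $\pi_2$ is needed.

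For part (a), I would condition $E[(1-A)\pi_1(X)h_1(X)]$ on $X$. Since $\pi_1(X)$ and $h_1(X)$ are $X$-measurable, they factor out of the inner expectation, leaving $E[(1-A)\mid X]=p(A=0\mid X)$. Substituting $\pi_1(X)=1/p(A=0\mid X)$ gives $E[(1-A)\pi_1(X)h_1(X)\mid X]=h_1(X)$, so that $E[(1-A)\pi_1(X)h_1(X)]=E[h_1(X)]$, which exactly matches the term $E[1\cdot h_1(X)]$. Subtracting yields zero for every $h_1\in L^2(P_{0,X})$.

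For part (b), the strategy is to show that both terms inside the braces, once integrated, collapse to the same quantity $E[h_2(M,X)\mid A=0,X]$, so their difference vanishes. Conditioning the first term on $X$ and using the multiplier $A$ to restrict to the $A=1$ slice gives $E[A\pi_2(M,X)h_2(M,X)\mid X]=\int \pi_2(m,X)h_2(m,X)\,p(m,A=1\mid X)\,dm$; substituting $\pi_2(m,X)=p(m\mid A=0,X)/p(m,A=1\mid X)$ cancels the factor $p(m,A=1\mid X)$, leaving $\int h_2(m,X)\,p(m\mid A=0,X)\,dm=E[h_2(M,X)\mid A=0,X]$. Conditioning the second term on $X$ and using the $(1-A)$ multiplier gives $\pi_1(X)\int h_2(m,X)\,p(m,A=0\mid X)\,dm$; writing $p(m,A=0\mid X)=p(m\mid A=0,X)\,p(A=0\mid X)$ and substituting $\pi_1(X)=1/p(A=0\mid X)$ again produces $E[h_2(M,X)\mid A=0,X]$. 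Taking the outer expectation over $X$ of the difference then gives zero.

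The computation is elementary, so the main point requiring care — rather than a genuine obstacle — is the cancellation step in part (b): one must verify that $\pi_2$ is applied only where $A=1$ (enforced by the $A$ multiplier), so that the denominator $p(m,A=1\mid X)$ is strictly positive there and the substitution is legitimate. This is precisely where positivity enters, and it is what allows the weight $\pi_2$ to convert a mediator expectation under $A=1$ into one under $A=0$, matching the term generated by $\pi_1$.
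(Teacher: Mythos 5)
Your proof is correct and follows essentially the same route as the paper: both parts are direct computations using the tower property together with substitution of the definitions of $\pi_1$ and $\pi_2$. The only cosmetic difference is in part (b), where the paper conditions on $(M,X)$ and shows the resulting density-ratio difference vanishes pointwise, whereas you condition on $X$ and show that both terms reduce to $E[h_2(M,X)\mid A=0,X]$; the underlying algebra is equivalent.
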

Part~2 of Proposition~\ref{prop:true} was also leveraged by \cite{chan2016efficient}. However, our motivation is fundamentally different: we use this identity as a bias-targeting device for constructing nuisance estimators that directly control the leading bias terms of the final IF-based estimator. Moreover, we operationalize it through a nonparametric adversarial (minimax) optimization framework for balancing, yielding flexible RKHS-based estimators with closed-form solutions.

Suppose conditions \eqref{eq:balance1} and \eqref{eq:balance2} hold with equality and $\mathcal{H}_1=L^2(P_{0,X})$ and $\mathcal{H}_2=L^2(P_{0,(M,X)})$. Then, it is easy to see that $\hat\pi_1=\pi_1$ and $\hat\pi_2=\pi_2$ with probability one. However, if $\mathcal{H}_1$ and $\mathcal{H}_2$ are restricted hypothesis spaces rather than the full $L^2$ spaces, conditions \eqref{eq:balance1} and \eqref{eq:balance2} may still hold with equality even though $\hat\pi_1 \neq \pi_1$ and $\hat\pi_2 \neq \pi_2$. Nevertheless, satisfying conditions \eqref{eq:balance1} and \eqref{eq:balance2} can be sufficient for controlling the bias of the final estimator of the parameter of interest. We formalize this argument by looking at the structure of the bias of $\hat\psi^{\text{2S}}$.

\begin{theorem}
\label{thm:main}
 The bias of the estimator $\hat\psi^{\text{2S}}$ satisfies the following equality:
\begin{align*}
E[\hat\psi^{\text{2S}}]-\psi_0
&=E\left[ \left\{(1-A)\hat\pi_1(X)-1\right\}\{\mu_2(X)-\hat\mu_2(X)\}\right]\\
&~~~+E\left[\left\{A\hat\pi_2(M,X)-(1-A)\hat\pi_1(X)\right\}\{\mu_1(M,X)-\hat\mu_1(M,X)\} \right].
\end{align*}
\end{theorem}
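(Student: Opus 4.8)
The plan is to reduce the claim to a population-level identity for a single fold and then average. By the cross-fitting construction, the estimates $(\hat\pi_{1,\ell},\hat\pi_{2,\ell},\hat\mu_{1,\ell},\hat\mu_{2,\ell})$ are functions of $I_\ell^c$ only, hence independent of the fresh observations in $I_\ell$ on which $\hat\psi^{\text{2S}}_\ell$ is evaluated. Conditioning on $I_\ell^c$ therefore lets me treat the hatted nuisance functions as fixed and replace the empirical average $E_m$ by the population expectation over a single $O\sim P_0$. Thus it suffices to prove, for arbitrary fixed functions $(\tilde\pi_1,\tilde\pi_2,\tilde\mu_1,\tilde\mu_2)$, that $E[\phi(O;\tilde\pi_1,\tilde\pi_2,\tilde\mu_1,\tilde\mu_2)]-\psi_0$ equals the two displayed bias terms with the hats replaced by tildes; substituting the estimates and taking the outer expectation over $I_\ell^c$, then averaging over $\ell$, yields the theorem.

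For the population identity I would start from $\psi_0=E[\mu_2(X)]$ and expand $E[\phi]-\psi_0$ using the definition of $\phi$. First I would simplify the treatment term: conditioning on $\{A=1,M,X\}$ and using $E[Y\mid A=1,M,X]=\mu_1(M,X)$ gives $E[A\tilde\pi_2(M,X)\{Y-\tilde\mu_1(M,X)\}]=E[A\tilde\pi_2(M,X)\{\mu_1(M,X)-\tilde\mu_1(M,X)\}]$. Next, in the control term I would insert the true functions via the algebraic decomposition $\tilde\mu_1-\tilde\mu_2=(\tilde\mu_1-\mu_1)+(\mu_1-\mu_2)+(\mu_2-\tilde\mu_2)$, which splits $E[(1-A)\tilde\pi_1(X)\{\tilde\mu_1(M,X)-\tilde\mu_2(X)\}]$ into three pieces.

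The key step is to show that the middle piece vanishes: $E[(1-A)\tilde\pi_1(X)\{\mu_1(M,X)-\mu_2(X)\}]=0$. Conditioning on $\{A=0,X\}$, the $(1-A)$ indicator restricts to $A=0$, and the inner conditional expectation is $E[\mu_1(M,X)\mid A=0,X]-\mu_2(X)$, which is exactly zero by the defining relation $\mu_2(X)=E[\mu_1(M,X)\mid A=0,X]$. This is the only place where the nested structure of the true nuisances is used, and it is the crux of the argument. After discarding this term, the remaining pieces are regrouped by the residual they multiply: collecting all terms proportional to $\mu_1-\tilde\mu_1$ gives $E[\{A\tilde\pi_2(M,X)-(1-A)\tilde\pi_1(X)\}\{\mu_1(M,X)-\tilde\mu_1(M,X)\}]$, while collecting the terms proportional to $\mu_2-\tilde\mu_2$, together with $E[\tilde\mu_2(X)-\mu_2(X)]$, gives $E[\{(1-A)\tilde\pi_1(X)-1\}\{\mu_2(X)-\tilde\mu_2(X)\}]$.

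I expect the main obstacle to be organizational rather than conceptual: keeping the bookkeeping of the resulting terms straight and ensuring each conditional-expectation reduction is justified by the appropriate tower-property conditioning (on $\{A=1,M,X\}$ for the treatment term and on $\{A=0,X\}$ for the control term). A secondary point requiring care is the correct invocation of cross-fitting independence so that the nuisance estimates may legitimately be pulled out of the inner expectation as fixed functions; this is what makes the population identity, proved for deterministic nuisances, transfer verbatim to the estimated ones.
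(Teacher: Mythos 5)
Your proposal is correct and follows essentially the same route as the paper's proof: both reduce the treatment term via the tower property on $(M,X)$ using $E[AY\mid M,X]=E[A\mid M,X]\mu_1(M,X)$, and both exploit the nested definition $\mu_2(X)=E[\mu_1(M,X)\mid A=0,X]$ through conditioning on $X$, before regrouping the residual terms. Your telescoping decomposition of $\tilde\mu_1-\tilde\mu_2$ and the explicit cross-fitting conditioning on $I_\ell^c$ are just a slightly more organized presentation of the same algebra the paper carries out by sequential add-and-subtract steps.
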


Note that our estimator $\hat\psi^{\text{2S}}$ is based on an influence function, which always leads to a second-order bias. From the structure of the bias of the estimator $\hat\psi^{\text{2S}}$, it is evident that if there exists $h_1\in\mathcal{H}_1$ which closely approximates $\{\mu_2(X)-\hat\mu_2(X)\}$, and there exists $h_2\in\mathcal{H}_2$ which closely approximates $\{\mu_1(M,X)-\hat\mu_1(M,X)\}$, and further conditions \eqref{eq:balance1} and \eqref{eq:balance2} hold, then we can eliminate the bias of the estimator $\hat\psi^{\text{2S}}$ even if we do not have a perfect fit of $\pi_1$ and $\pi_2$. This is in fact the main motivation of our two-stage approach. It substitutes fitting $p(A=1\mid X)$ and $p(M\mid A,X)$, which can be a difficult task, by a much weaker requirement based on balancing.

We, next, analyze the robustness of our proposed method against misspecification of the nuisance functions. Under the following positivity assumption, our approach possesses a multiple-robustness property, as stated in Proposition \ref{prop:MR}.

\begin{assumption}\label{ass:stg_pos}
    For some $\epsilon>0$, we have $p(A=a\mid X=x)>\epsilon$ and $p(M=m\mid A=a, X=x)>\epsilon$, for all $m$, $a$, and $x$.
\end{assumption}

\begin{proposition}
\label{prop:MR} Under Assumption \ref{ass:stg_pos} , 
$\hat\psi^{\text{2S}}$ is asymptotically unbiased if at least one of the following conditions holds:
\begin{enumerate}[label=(\roman*)]
\item $\|\hat\pi_1(X)-\pi_1(X)\|_2=o_p(1),$ and $\|\hat\pi_2(M,X)-\pi_2(M,X)\|_2=o_p(1)$;
\item $\|\hat\mu_1(M,X)-\mu_1(M,X)\|_2=o_p(1),$ and $\|\hat\mu_2(X)-\mu_2(X)\|_2=o_p(1)$;
\item $\|\hat\pi_1(X)-\pi_1(X)\|_2=o_p(1),$ and $\|\hat\mu_1(M,X)-\mu_1(M,X)\|_2=o_p(1)$.
\end{enumerate}
\end{proposition}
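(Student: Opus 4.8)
The plan is to start from the exact bias identity in Theorem \ref{thm:main}, writing $E[\hat\psi^{\text{2S}}]-\psi_0 = B_1 + B_2$ with
\[
B_1 = E\left[\left\{(1-A)\hat\pi_1(X)-1\right\}\{\mu_2(X)-\hat\mu_2(X)\}\right], \quad B_2 = E\left[\left\{A\hat\pi_2(M,X)-(1-A)\hat\pi_1(X)\right\}\{\mu_1(M,X)-\hat\mu_1(M,X)\}\right],
\]
and to show that each of the three listed conditions forces $B_1+B_2 = o_p(1)$. Because the estimators are cross-fitted, I would condition on the auxiliary fold $I_\ell^c$, so that $\hat\pi_1,\hat\pi_2,\hat\mu_1,\hat\mu_2$ may be treated as fixed (deterministic) functions when evaluating the expectation $E$; in particular the differences $\mu_2-\hat\mu_2 \in L^2(P_{0,X})$ and $\mu_1-\hat\mu_1 \in L^2(P_{0,(M,X)})$ are legitimate test functions, which is exactly what is needed to invoke Proposition \ref{prop:true}.

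The central algebraic step is a ``subtract-a-zero'' device driven by Proposition \ref{prop:true}. Whenever the relevant treatment nuisance is consistent, I would insert its population counterpart: by Proposition \ref{prop:true}(a), $E[\{(1-A)\pi_1(X)-1\}\{\mu_2-\hat\mu_2\}]=0$, so
\[
B_1 = E\left[(1-A)\{\hat\pi_1(X)-\pi_1(X)\}\{\mu_2(X)-\hat\mu_2(X)\}\right];
\]
similarly, by Proposition \ref{prop:true}(b), $E[\{A\pi_2(M,X)-(1-A)\pi_1(X)\}\{\mu_1-\hat\mu_1\}]=0$, so
\[
B_2 = E\left[\left\{A(\hat\pi_2-\pi_2)(M,X)-(1-A)(\hat\pi_1-\pi_1)(X)\right\}\{\mu_1(M,X)-\hat\mu_1(M,X)\}\right].
\]
This rewriting converts the balancing-residual factors into genuine nuisance estimation errors $\hat\pi_1-\pi_1$ and $\hat\pi_2-\pi_2$, exposing the product (second-order) structure that underlies multiple robustness.

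I would then bound $B_1$ and $B_2$ by Cauchy--Schwarz and treat the three cases in turn. Under (i), both treatment nuisances are consistent, so using the two displays above, $|B_1|\le \|\hat\pi_1-\pi_1\|_2\,\|\mu_2-\hat\mu_2\|_2$ and $|B_2|\le (\|\hat\pi_2-\pi_2\|_2+\|\hat\pi_1-\pi_1\|_2)\,\|\mu_1-\hat\mu_1\|_2$, each an $o_p(1)\cdot O_p(1)$ product. Under (ii), both outcome nuisances are consistent, so I would apply Cauchy--Schwarz directly to the original $B_1,B_2$, giving $|B_1|\le \|(1-A)\hat\pi_1-1\|_2\,\|\mu_2-\hat\mu_2\|_2$ and $|B_2|\le \|A\hat\pi_2-(1-A)\hat\pi_1\|_2\,\|\mu_1-\hat\mu_1\|_2$, again $O_p(1)\cdot o_p(1)$. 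Under (iii), I would use the consistency of $\pi_1$ to control $B_1$ via the first display ($|B_1|\le \|\hat\pi_1-\pi_1\|_2\,\|\mu_2-\hat\mu_2\|_2$) and the consistency of $\mu_1$ to control $B_2$ via its original form ($|B_2|\le \|A\hat\pi_2-(1-A)\hat\pi_1\|_2\,\|\mu_1-\hat\mu_1\|_2$); each is once more an $o_p(1)\cdot O_p(1)$ product.

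The main obstacle is establishing the $O_p(1)$ boundedness of whichever factor is \emph{not} assumed to vanish, namely $\|(1-A)\hat\pi_1-1\|_2$ and $\|A\hat\pi_2-(1-A)\hat\pi_1\|_2$ when the treatment nuisances are inconsistent, and $\|\mu_2-\hat\mu_2\|_2$, $\|\mu_1-\hat\mu_1\|_2$ when the outcome nuisances are inconsistent. For the true nuisances these norms are finite because Assumption \ref{ass:stg_pos} bounds $\pi_1=1/p(A=0\mid X)$ and $\pi_2$ away from infinity and $\mu_1,\mu_2$ are bounded conditional means; the delicate part is transferring this to the cross-fitted RKHS-regularized estimates, for which I would invoke their uniform-in-probability boundedness as a regularity consequence of the Tikhonov regularization together with the positivity bound. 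Granting this, every case reduces to a product of an $o_p(1)$ term and an $O_p(1)$ term, so $B_1+B_2=o_p(1)$ and $\hat\psi^{\text{2S}}$ is asymptotically unbiased.
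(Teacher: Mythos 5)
Your proposal is correct and follows essentially the same route as the paper's proof: both start from the bias identity of Theorem \ref{thm:main}, convert the balancing residuals into the nuisance errors $\hat\pi_1-\pi_1$ and $\hat\pi_2-\pi_2$ (your ``subtract-a-zero'' step via Proposition \ref{prop:true} is just the tower-property rewriting that the paper performs directly, using $p(A=0\mid X)\pi_1(X)=1$ and its analogue for $\pi_2$), and finish with Cauchy--Schwarz and the same three-case analysis. The $O_p(1)$ boundedness of whichever factors are not assumed to vanish, which you flag and defer to a regularity condition, is likewise left implicit in the paper's own proof, so your treatment matches it in both structure and level of rigor.
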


Figure \ref{fig:MR} summarizes the result in Proposition \ref{prop:MR}.

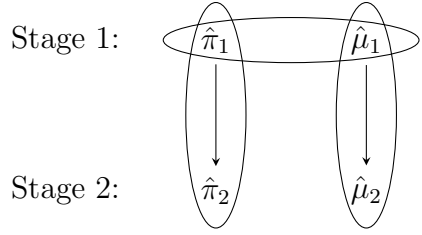
\begin{figure}[t!]
\centering
		\tikzstyle{block} = [draw, circle, inner sep=2.5pt, fill=lightgray]
		\tikzstyle{input} = [coordinate]
		\tikzstyle{output} = [coordinate]
        \begin{tikzpicture}
            \tikzset{edge/.style = {->,> = latex'}}
            \node[] (pi1) at  (-2,0) {$\hat\pi_1$};
            \node[] (pi2) at  (-2,-2) {$\hat\pi_2$};
            \node[] (mu1) at  (0,0) {$\hat\mu_1$};
            \node[] (mu2) at  (0,-2) {$\hat\mu_2$};
            \node[] (y) at  (-4,0) { Stage 1:};
            \node[] (y) at  (-4,-2) { Stage 2:};
            \draw (-1,0) ellipse (17mm and 3mm);
            \draw (-2,-1) ellipse (4mm and 15mm);
            \draw (0,-1) ellipse (4mm and 15mm);                        
            \draw[-stealth] (pi1) to (pi2);                                                         
            \draw[-stealth] (mu1) to (mu2);                                                         
        \end{tikzpicture}
        \caption{Nuisance functions estimated in each stage of the approach. The arrow from $\hat\pi_1$ to $\hat\pi_2$ indicates that the estimator of $\pi_1$ is used in estimating $\pi_2$. The arrow from $\hat\mu_1$ to $\hat\mu_2$ is interpreted similarly. Correct specification of any of the pairs of functions in an ellipse suffices unbiasedness of the final estimator of the parameter of interest.}
        \label{fig:MR}
\end{figure}

\subsubsection{Asymptotic Normality}

We next provide sufficient conditions for root-$n$ consistency and asymptotic normality of the estimator $\hat\psi^{\text{2S}}$.

\begin{assumption}
\label{assm:CAN}
~\\\vspace{-5mm}
\begin{enumerate}[label=(\alph*)]
	\item $\|\mu_1(M,X)-\hat\mu_{1,\ell}(M,X)\|_2=o_p(1)$, $\|\mu_2(X)-\hat\mu_{2,\ell}(X)\|_2=o_p(1)$, $\|\mu_1^2(M,X)-\hat\mu_{1,\ell}^2(M,X)\|_2=o_p(1)$, and $\|\mu_2^2(X)-\hat\mu_{2,\ell}^2(X)\|_2=o_p(1)$.
    \item $\|(\pi_1(X)-\hat\pi_{1,\ell}(X))^2\|_2=o_p(1)$ and $\|(\pi_2(M,X)-\hat\pi_{2,\ell}(M,X))^2\|_2=o_p(1)$.
    \item There exists some constant $C$ such that $\|\mu_{1}^2(M,X)\|_2$, $\|\mu_2^2(X)\|_2$, and  $E[Y^4]$, are upper bounded by $C$.
\item $\big\|p(A=0\mid X)\hat\pi_{1,\ell}(X)-1\big\|_2\|\mu_2(X)-\hat\mu_{2,\ell}(X)\|_2=o_p(n^{-1/2})$, and\\ $\big\|p(A=1\mid M,X)\hat\pi_{2,\ell}(M,X)-p(A=0\mid M,X)\hat\pi_{1,\ell}(X)\big\|_2\|\mu_1(M,X)-\hat\mu_{1,\ell}(M,X)\|_2=o_p(n^{-1/2})$.
\end{enumerate}
\end{assumption}

For any choice of the nuisance functions $\{\tilde\pi_1,\tilde\pi_2,\tilde\mu_1,\tilde\mu_2\}$, define,
\begin{align*}
    \Phi(O;\tilde\pi_1,\tilde\pi_2,\tilde\mu_1,\tilde\mu_2,\psi):=\phi(O;\tilde\pi_1,\tilde\pi_2,\tilde\mu_1,\tilde\mu_2)-\psi.
\end{align*}
We have the following result regarding the asymptotic behavior of $\hat\psi^{\text{2S}}$.
\begin{theorem}
\label{thm:CAN}
Under Assumptions \ref{ass:stg_pos} and ~\ref{assm:CAN}, the estimator $\hat\psi^{\text{2S}}$ is $\sqrt{n}$-consistent and asymptotically normal and satisfies
\[
\sqrt{n}(\hat\psi^{\text{2S}}-\psi_0)=\sqrt{n}E_n\left[\Phi(O;\pi_1,\pi_2,\mu_1,\mu_2,\psi_0)\right]+o_p(1).
\]
\end{theorem}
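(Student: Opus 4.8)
The plan is to exploit the cross-fitting structure to reduce the claim to a conditional analysis on each held-out fold, following the standard template for debiased estimators \citep{chernozhukov2018double}. Writing $\hat\psi^{\text{2S}}$ as an average over the $L$ folds, I would center each fold's contribution at the true nuisances and decompose
\[
E_m\!\left[\phi(O;\hat\pi_{1,\ell},\hat\pi_{2,\ell},\hat\mu_{1,\ell},\hat\mu_{2,\ell})\right]-\psi_0
= \underbrace{E_m\!\left[\phi(O;\pi_1,\pi_2,\mu_1,\mu_2)\right]-\psi_0}_{\text{(I) oracle}}
+ \underbrace{\big(E_m-E^{(\ell)}\big)[\Delta_\ell]}_{\text{(II) empirical process}}
+ \underbrace{E^{(\ell)}[\Delta_\ell]}_{\text{(III) conditional bias}},
\]
where $\Delta_\ell:=\phi(O;\hat\pi_{1,\ell},\hat\pi_{2,\ell},\hat\mu_{1,\ell},\hat\mu_{2,\ell})-\phi(O;\pi_1,\pi_2,\mu_1,\mu_2)$ and $E^{(\ell)}[\cdot]:=E[\cdot\mid I_\ell^c]$ denotes conditional expectation treating the fold-$\ell$ nuisances as fixed. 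Since $m=n/L$ with $L$ fixed and the folds partition the sample, averaging the oracle terms (I) over $\ell$ reproduces exactly $E_n[\phi(O;\pi_1,\pi_2,\mu_1,\mu_2)]-\psi_0=E_n[\Phi(O;\pi_1,\pi_2,\mu_1,\mu_2,\psi_0)]$, the target leading term; it then remains to show that (II) and (III) are $o_p(n^{-1/2})$ in each fold.

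For the empirical-process term (II), the key point is that, conditionally on $I_\ell^c$, the estimated nuisances are fixed and $(E_m-E^{(\ell)})[\Delta_\ell]$ is a conditionally mean-zero average over the independent draws in $I_\ell$; hence its conditional variance is bounded by $m^{-1}\|\Delta_\ell\|_2^2$, and a conditional Chebyshev argument gives $(E_m-E^{(\ell)})[\Delta_\ell]=O_p\!\big(m^{-1/2}\|\Delta_\ell\|_2\big)$. The main obstacle is therefore to establish $\|\Delta_\ell\|_2=o_p(1)$. I would expand $\Delta_\ell$ into a sum of products of nuisance differences---terms such as $(\hat\mu_{2,\ell}-\mu_2)$, $(1-A)\hat\pi_{1,\ell}(\hat\mu_{1,\ell}-\mu_1)$, $(1-A)(\hat\pi_{1,\ell}-\pi_1)(\mu_1-\mu_2)$, and $A(\hat\pi_{2,\ell}-\pi_2)(Y-\mu_1)$---and bound each in $L^2(P_0)$ via Cauchy--Schwarz. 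This is precisely where the squared-consistency conditions of Assumption \ref{assm:CAN}(a)--(b) (namely $\|\hat\mu_{j,\ell}^2-\mu_j^2\|_2=o_p(1)$ and $\|(\hat\pi_{j,\ell}-\pi_j)^2\|_2=o_p(1)$), the moment bounds of Assumption \ref{assm:CAN}(c) (controlling $\|\mu_1^2\|_2$, $\|\mu_2^2\|_2$, and $E[Y^4]$), and the positivity bounds of Assumption \ref{ass:stg_pos} (uniformly bounding $\pi_1$ and $\pi_2$) combine to force $\|\Delta_\ell\|_2=o_p(1)$, yielding (II) $=o_p(n^{-1/2})$. Crucially, cross-fitting removes any need for Donsker-type complexity conditions on the nuisance classes.

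For the conditional-bias term (III), conditioning on $I_\ell^c$ freezes the nuisances, so $E^{(\ell)}[\Delta_\ell]=E^{(\ell)}[\phi(O;\hat\pi_{1,\ell},\hat\pi_{2,\ell},\hat\mu_{1,\ell},\hat\mu_{2,\ell})]-\psi_0$ equals, by Theorem \ref{thm:main} applied with the fold-$\ell$ nuisances held fixed,
\[
E\!\left[\{(1-A)\hat\pi_{1,\ell}(X)-1\}\{\mu_2(X)-\hat\mu_{2,\ell}(X)\}\right]
+E\!\left[\{A\hat\pi_{2,\ell}(M,X)-(1-A)\hat\pi_{1,\ell}(X)\}\{\mu_1(M,X)-\hat\mu_{1,\ell}(M,X)\}\right].
\]
I would then take iterated expectations: conditioning the first term on $X$ replaces $(1-A)\hat\pi_{1,\ell}(X)$ by $p(A=0\mid X)\hat\pi_{1,\ell}(X)$, and conditioning the second term on $(M,X)$ replaces $A\hat\pi_{2,\ell}(M,X)-(1-A)\hat\pi_{1,\ell}(X)$ by $p(A=1\mid M,X)\hat\pi_{2,\ell}(M,X)-p(A=0\mid M,X)\hat\pi_{1,\ell}(X)$. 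Applying Cauchy--Schwarz to each resulting product and invoking exactly the two product-rate conditions of Assumption \ref{assm:CAN}(d) shows (III) $=o_p(n^{-1/2})$.

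Combining the three pieces across the fixed number $L$ of folds yields $\sqrt{n}(\hat\psi^{\text{2S}}-\psi_0)=\sqrt{n}\,E_n[\Phi(O;\pi_1,\pi_2,\mu_1,\mu_2,\psi_0)]+o_p(1)$. Finally, I would apply the Lindeberg--L\'evy central limit theorem to the leading term: its summands are i.i.d.\ with mean zero, since the influence-function form ensures $E[\Phi(O;\pi_1,\pi_2,\mu_1,\mu_2,\psi_0)]=0$ (Theorem \ref{thm:main} at the true nuisances gives zero bias), and with finite variance guaranteed by Assumption \ref{ass:stg_pos} together with the fourth-moment and squared-nuisance bounds in Assumption \ref{assm:CAN}(c). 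I expect the delicate step to be the $L^2$ expansion bounding $\|\Delta_\ell\|_2$, as it is precisely what necessitates the squared-difference formulations in Assumption \ref{assm:CAN}(a)--(b) rather than plain $L^2$ consistency of the nuisance estimators.
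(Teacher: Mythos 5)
Your proposal is correct and follows essentially the same route as the paper's proof: the same three-way split into an oracle term, a cross-fitted empirical-process term, and a conditional-bias term (the paper's $T_2$, $T_1$, $T_3$), with the empirical-process term controlled by conditional Chebyshev using Assumptions \ref{assm:CAN}(a)--(c) and \ref{ass:stg_pos}, the conditional bias controlled via Theorem \ref{thm:main}, iterated expectations, Cauchy--Schwarz, and Assumption \ref{assm:CAN}(d), and the CLT applied to the leading term. The only (immaterial) difference is that you exploit conditional centering to dispense with bounding the conditional mean of the empirical-process term, which the paper bounds separately.
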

As a corollary of Theorem \ref{thm:CAN}, one can use the influence function of $\psi_0$ to obtain confidence intervals for the parameter of interest.
For $\ell\in\{1,\dots,L\}$, we estimate the variance of the influence function on fold $I_\ell$ as,
\begin{align*}
    \hat\sigma_{\ell}^2=E_m\big[ \Phi^2(O;\hat{\pi}_{1,\ell},\hat{\pi}_{2,\ell},\hat{\mu}_{1,\ell},\hat{\mu}_{2,\ell},\hat{\psi}^{\text{2S}})\big],
\end{align*}
and define,
\begin{align*}
    \hat\sigma^2=\frac{1}{L}\sum_{\ell=1}^L \hat\sigma_{\ell}^2.
\end{align*}
Then the $(1-\alpha)$ confidence interval of $\psi_0$ can be obtained as,
\begin{align*}
    \hat{\psi}^{\text{2S}}\pm z_{1-\alpha/2}\frac{\hat\sigma}{\sqrt{n}},
\end{align*}
where $z_{1-\alpha/2}$ is the $(1-\alpha/2)$-quantile of the standard normal distribution.

\section{Multiple-Mediator Case}\label{sec:multi}
\subsection{Influence Function-Based Estimator}
In this section, we review the influence function-based estimator of $\Delta^{M_j}$ proposed by \cite{xia2022decomposition} and discuss its multiple-robustness property. The explicit expression for the influence function of $\Delta^{M_j}$ is given in the Appendix, and we denote it by $IF_{\Delta^{M_j}}$. Based on $IF_{\Delta^{M_j}}$, one can first estimate the following nuisance functions.
\begin{enumerate}[label=(\roman*)]
	\item $ p(A=1\mid X=\cdot)$,
    \item $ p(M_j=\cdot\mid A=a,X=\cdot)$, for $a\in\{0,1\}$,
    \item $ p(M_{-j}=\cdot\mid A=a,X=\cdot)$, for $a\in\{0,1\}$,
    \item $ p(M_j=\cdot,M_{-j}=\cdot\mid A=a,X=\cdot)$, for $a\in\{0,1\}$,
	\item $E[Y\mid A=1,M_j=\cdot,M_{-j}=\cdot,X=\cdot]$,
\end{enumerate}
where $M_{-j}$ denotes the vector of all mediator variables other than the $j$th one.
Then, estimate $\Delta^{M_j}$ by
\begin{align*}
&\hat\Delta^{M_j,\text{XC}}\\
&=E_n\Big[\frac{I(A=1)}{\hat{p}(A=1\mid X)}\Big(1-\frac{\hat{p}(M_j\mid A=0,X)}{\hat{p}(M_j\mid A=1,X)}\Big)\frac{\hat{p}(M_j\mid A=1,X)\hat{p}(M_{-j}\mid A=1,X)}{\hat{p}(M_j,M_{-j}\mid A=1,X)}\\
    &~~~~~~\times\big(Y-\hat{E}[Y\mid A=1,M_j,M_{-j},X]\big)\\
    &~~~+\frac{I(A=1)}{\hat{p}(A=1\mid X)}\int \hat{E}[Y\mid A=1,M_j=m_j,M_{-j},X]\\
    &~~~~~~~~~~~~~~~~~~~~~~~\times\Big(\hat{p}(M_j=m_j\mid A=1,X)-\hat{p}(M_j=m_j\mid A=0,X)\Big)dm_j\\
    &~~~+\frac{I(A=1)}{\hat{p}(A=1\mid X)}\hat{\eta}_1(M_j,X)-\frac{I(A=0)}{1-\hat{p}(A=1\mid X)}\hat{\eta}_1(M_j,X)\\
    &~~~+\Big(1-\frac{2I(A=1)}{\hat{p}(A=1\mid X)}\Big)\int\hat{\eta}_1(m_j,X)\hat{p}(M_j=m_j\mid A=1,X)dm_j\\
    &~~~-\Big(1-\frac{I(A=1)}{\hat{p}(A=1\mid X)}-\frac{I(A=0)}{1-\hat{p}(A=1\mid X)}\Big)\int\hat{\eta}_1(m_j,X)\hat{p}(M_j=m_j\mid A=0,X)dm_j\Big],
\end{align*}
where \[
\hat \eta_1(M_j,X)=\int\hat{E}[Y\mid M_j,M_{-j}=m_{-j},X]\hat{p}(M_{-j}=m_{-j}\mid A=1,X)dm_{-j}.
\]
This estimator is multiply robust in the following sense \citep{xia2022decomposition}:
\begin{proposition}
\label{prop:MRXC}
$\hat\Delta^{M_j,\text{XC}}$ is unbiased if at least one of the following pairs of nuisance function estimators is correctly specified.
\begin{enumerate}[label=(\roman*)]
\item $\{\hat{p}(A=1\mid X=\cdot),\hat{p}(M_j=\cdot\mid A=a,X=\cdot),\hat{p}(M_{-j}=\cdot\mid A=a,X=\cdot),\\~~\hat{p}(M_j=\cdot,M_{-j}=\cdot\mid A=1,X=\cdot)\}$
\item $\{\hat{p}(A=1\mid X=\cdot),\hat{p}(M_j=\cdot\mid A=a,X=\cdot),\hat E[Y\mid A=1,M_j=\cdot,M_{-j}=\cdot,X=\cdot]\}$,
\item $\{\hat{p}(A=1\mid X=\cdot),\hat{p}(M_{-j}=\cdot\mid A=a,X=\cdot),\hat E[Y\mid A=1,M_j=\cdot,M_{-j}=\cdot,X=\cdot]\}$,
\item $\{\hat{p}(M_j=\cdot\mid A=a,X=\cdot),\hat{p}(M_{-j}=\cdot\mid A=a,X=\cdot),\hat E[Y\mid A=1,M_j=\cdot,M_{-j}=\cdot,X=\cdot]\}$.
\end{enumerate}
\end{proposition}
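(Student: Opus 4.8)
The plan is to establish a bias formula for $\hat\Delta^{M_j,\text{XC}}$ in the same spirit as Theorem~\ref{thm:main}: replace each estimated nuisance by an arbitrary deterministic limit, compute the population expectation of the integrand, and show that the resulting bias factorizes into a sum of product terms, each a product of \emph{errors} in distinct nuisance components. Concretely, write the true functions as $g_0(X)=p(A=1\mid X)$, $f_{j}^{a}=p(M_j\mid A=a,X)$, $f_{-j}^{a}=p(M_{-j}\mid A=a,X)$, $f_{j,-j}^{a}=p(M_j,M_{-j}\mid A=a,X)$, and $b_0=E[Y\mid A=1,M_j,M_{-j},X]$, and denote the plugged-in limits by $\tilde g,\tilde f_j^a,\tilde f_{-j}^a,\tilde f_{j,-j}^a,\tilde b$. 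Writing $\hat\Delta^{M_j,\text{XC}}=E_n[\varphi(O;\hat\nu)]$, where $\varphi$ is the bracketed integrand and $\hat\nu$ collects all estimated nuisances, unbiasedness is equivalent to $E[\varphi(O;\tilde\nu)]=\Delta^{M_j}$ under each correctness scenario (treating the cross-fit nuisances as fixed, exactly as in Proposition~\ref{prop:MRTTS}).

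The first step is to reduce $E[\varphi(O;\tilde\nu)]$ by iterated expectations, peeling off conditioning variables from the inside out. I would first take the conditional expectation of $Y$ given $(A=1,M_j,M_{-j},X)$, which replaces the residual $Y-\tilde b$ by $b_0-\tilde b$; then integrate out $M_{-j}$ given $(A,M_j,X)$ using the true joint and marginal densities; then integrate out $M_j$ given $(A,X)$; and finally take the expectation of the treatment indicators using $E[I(A=1)\mid X]=g_0$ and $E[I(A=0)\mid X]=1-g_0$. The nested function $\hat\eta_1$, being an integral of $\hat b$ against $\tilde f_{-j}^{1}$, must be tracked carefully, since it carries both an outcome-regression error and an $M_{-j}$-density error simultaneously.

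The crux is to collect the resulting expressions into a clean factorized form. I expect $E[\varphi(O;\tilde\nu)]-\Delta^{M_j}$ to be a sum of products in which each summand contains at least one factor drawn from: a treatment-mechanism error $\big(g_0/\tilde g-1\big)$ (or its $A=0$ analogue), an $M_j$-density error $\big(f_j^a-\tilde f_j^a\big)$, an $M_{-j}$-density error $\big(f_{-j}^a-\tilde f_{-j}^a\big)$, a joint-density error $\big(f_{j,-j}^1-\tilde f_{j,-j}^1\big)$, and an outcome-regression error $\big(b_0-\tilde b\big)$. The final step is purely combinatorial: verify that each of the conditions (i)--(iv) fixes a subset of components whose correctness forces at least one factor in \emph{every} summand to vanish; equivalently, the error set of each product term must intersect each of the four specified component sets, so that these four sets \emph{cover} all product-bias terms. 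The main obstacle is the bookkeeping in this factorization: the estimator contains several $\hat\eta_1$ contributions paired with both $\hat p(M_j\mid A=1,X)$ and $\hat p(M_j\mid A=0,X)$ and with both treatment indicators, so ensuring that the spurious cross-terms cancel and that the surviving products align exactly with the four listed combinations requires careful grouping rather than any single clever step.
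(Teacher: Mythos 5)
First, a point of comparison: the paper itself does not prove Proposition~\ref{prop:MRXC}. It is quoted as an established result of \cite{xia2022decomposition}, and no proof of it appears in the appendix; the closest in-paper analogue is the pair Theorem~\ref{thm:main_multi} and Proposition~\ref{prop:MR_multi}, where the authors carry out, for their own estimator $\hat\Delta^{M_j,\text{2S}}$, exactly the program you outline (explicit bias decomposition, then case-by-case verification).

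Your strategy is the right one in outline --- fix the nuisance limits, reduce $E[\varphi(O;\tilde\nu)]$ by iterated expectations, organize the bias, and check that each of (i)--(iv) annihilates it --- but as written there is a genuine gap: the decisive claim, that the bias ``factorizes into a sum of product terms, each a product of errors in distinct nuisance components,'' is conjectured, not derived, and it is unlikely to hold in the clean form your final combinatorial step requires. The paper's own expansion for the analogous estimator (Theorem~\ref{thm:main_multi}) shows what actually arises: besides genuine product terms, one gets coupled remainders such as
\begin{align*}
E\Big[E\big[\{\eta_1^\ast(M_j,X)-\hat{\eta}_1(M_j,X)\}\{\rho(M_j,X)-\hat{\rho}(M_j,X)\}\mid A=1,X\big]\Big],
\end{align*}
where $\eta_1^\ast$ is itself built from the possibly misspecified outcome regression --- precisely the $\hat\eta_1$ coupling you flag but do not resolve. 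Correspondingly, in the paper's Proposition~\ref{prop:MR_multi} the fourth robustness scenario is not proved by a vanishing-factor argument at all: the relevant terms ($B_{22}$ and $B_{32}$ there) must be expanded and shown to cancel jointly, using correctness of several components at once. The same phenomenon occurs for $\hat\Delta^{M_j,\text{XC}}$ under your scenario (i) (all densities and the propensity correct, outcome regression arbitrary): the expectation of the first term does not vanish factor-by-factor; its $\tilde b$ contribution must cancel against the second term, and the three $\hat\eta_1$ terms must cancel among themselves after the treatment indicators are integrated out. Until that explicit decomposition and grouping is actually carried out --- the work you defer as ``bookkeeping'' --- the proposal is a plan for a proof rather than a proof; the cancellation structure is the entire mathematical content of the proposition.
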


Similar to $\hat\psi^{\text{TTS}}$, $\hat\Delta^{M_j,\text{XC}}$ has the same issue with the complexity of the nuisance
functions and variance inflation from treatment and mediator mechanisms in the denominator. In the next section, we propose an alternative estimation strategy that mitigates these issues in the two-mediator case.

\subsection{Proposed Estimation Approach}
In this section, we present our two-stage nuisance function estimation approach. We introduce a reparametrization of the likelihood function based on the following nuisance functions.
\begin{align*}
&\pi(X=\cdot):=\frac{1}{p(A=1\mid X=\cdot)},\\
&\rho(M_j=\cdot,X=\cdot)=\frac{p(M_j=\cdot\mid A=0,X=\cdot)}{p(M_j=\cdot\mid A=1,X=\cdot)}\\
&\omega(M_j=\cdot,M_{-j}=\cdot,X=\cdot)=\frac{p(M_j=\cdot\mid A=1,X=\cdot)p(M_{-j}=\cdot\mid A=1,X=\cdot)}{p(M_j=\cdot,M_{-j}=\cdot\mid A=1,X=\cdot)}\\
&\mu(M_j=\cdot,M_{-j}=\cdot,X=\cdot)=E[Y\mid A=1,M_j=\cdot,M_{-j}=\cdot,X=\cdot]\\
&\eta_1(M_{j}=\cdot,X=\cdot)=\int E[Y\mid A=1,M_j=\cdot,m_{-j},X=\cdot]f(m_{-j}\mid A=1,X=\cdot)dm_{-j}
\end{align*}

Using the nuisances above, we further define,
\begin{itemize}
    \item $R(M_j,M_{-j},X)=\big(1-\rho(M_j,X)\big)\omega(M_j,M_{-j},X)$,
    \item $\gamma_{1,1}(X)=E[\eta_1(M_j,X)\mid A=1,X]$,
    \item $\gamma_{1,0}(X)=E[\eta_1(M_j,X)\rho(M_j,X)\mid A=1,X]$,
    \item $\gamma_{1}(X)=\gamma_{1,1}(X)-\gamma_{1,0}(X)$,
    \item $\eta_2(M_{-j},X)=E_{M_j\mid A=1,X}[\mu(M_j,M_{-j},X)\big(1-\rho(M_j,X)\big)\mid A=1,X]$.
\end{itemize}

For any choice of nuisance functions $\{\tilde\pi,\tilde\rho,\tilde\omega,\tilde\mu,\tilde\eta_1\}$, the working nuisance function for $R$ is defined as,
\begin{align*}
    \tilde{R}(m_j,m_{-j},x)=(1-\tilde\rho(m_j,x))\tilde\omega(m_j,m_{-j},x).
\end{align*}
We further consider working nuisance functions $\{\tilde\gamma_{1,1},\tilde\gamma_{1,0},\tilde\gamma_{1},\tilde\eta_2\}$. We impose the following compatibility condition:
\begin{assumption}[Compatibility]\label{ass:compatibility}
\begin{enumerate}[label=(\roman*)]
\item[]
    \item $\tilde\eta_2$ is compatible with $\{\tilde\rho,\tilde\mu\}$ in the sense that for all $x$, $m_{-j}$,
    \begin{align*}
        \tilde\eta_2(m_{-j},x)=E[\tilde\mu(M_j,m_{-j},X)\big(1-\tilde\rho(M_j,X)\big)\mid A=1,X=x].
    \end{align*}
    \item $\tilde\gamma_1$ is compatible with $\{\tilde\rho,\tilde\eta_1\}$ in the sense that for all $x$,
    \begin{align*}
    \tilde\gamma_{1,1}(x)&=E[\tilde\eta_1(M_j,X)\mid A=1,X=x],\\
        \tilde\gamma_{1,0}(x)&=E[\tilde\eta_1(M_j,X)\tilde\rho(M_j,X)\mid A=1,X=x],\\
        \tilde\gamma_1(x)&=\tilde\gamma_{1,1}(x)-\tilde\gamma_{1,0}(x).
    \end{align*}
\end{enumerate}
    
\end{assumption}

\begin{remark}
Assumption~\ref{ass:compatibility} does not require that $\eta_2$ or $\gamma_1$ be correctly specified as stand-alone nuisance components. Rather, it requires that the \emph{working} versions of these functions be constructed so that they are internally consistent with the working models for $\rho,\mu,$ and $\eta_1$ through the defining conditional-expectation relationships in the assumption.

A convenient way to guarantee compatibility is via a sequential regression (or plug-in) construction. For example, given working estimators $\tilde\mu$ and $\tilde\rho$, one may \emph{define} $\tilde\eta_2(m_{-j},x)$ as the conditional expectation
\[
\tilde\eta_2(m_{-j},x)
:=E\!\left[\tilde\mu(M_j,m_{-j},X)\big(1-\tilde\rho(M_j,X)\big)\mid A=1,X=x\right],
\]
and estimate it by regressing the pseudo-outcome $\tilde\mu(M_j,m_{-j},X)\{1-\tilde\rho(M_j,X)\}$ on $X$ among treated units. Similarly, given $\tilde\eta_1$ and $\tilde\rho$, one can obtain $\tilde\gamma_{1,1}$ and $\tilde\gamma_{1,0}$ by regressing $\tilde\eta_1(M_j,X)$ and $\tilde\eta_1(M_j,X)\tilde\rho(M_j,X)$ on $X$ among treated units, and then set $\tilde\gamma_1=\tilde\gamma_{1,1}-\tilde\gamma_{1,0}$. Under standard smoothness and regularity conditions, flexible nonparametric regression methods can consistently estimate these conditional expectations, ensuring compatibility by construction. Importantly, this strategy only requires regressions of functions of $(M_j,X)$ on $X$ within the treated group and imposes no additional modeling restrictions elsewhere.

Alternatively, compatibility can be particularly transparent in settings where the conditional distribution $p(M_j\mid A=1,X)$ is known by design. This occurs, for example, when $M_j$ is a treatment-induced \emph{randomized intermediate} whose distribution is controlled by the intervention protocol among treated units---such as an assigned dosage, a randomized encouragement intensity, or a scheduled treatment component delivered according to a known rule that depends on baseline covariates $X$ (e.g., adaptive randomization with known assignment probabilities). In such cases, the conditional expectations in Assumption~\ref{ass:compatibility} can be evaluated using the known $p(M_j\mid A=1,X)$, and the compatibility condition is automatically satisfied for nuisance functions constructed from these defining equations.
\end{remark}

We next describe the steps of our proposed estimation method in the multiple-mediator case.

\medskip
\noindent
\begin{sloppypar}{\bf Estimating the nuisance functions.} 
We partition the samples into $L$ folds $\{I_1,\dots,I_L\}$ with roughly equal size $m=n/L$. For $\ell\in\{1,\dots,L\}$, we estimate the nuisance functions as $(\hat{\pi}_{\ell},\hat{\rho}_{\ell},\hat{\omega}_{\ell},\hat{\mu}_{\ell},\hat{\eta}_{1,\ell})$ on data from all folds but $I_{\ell}$, which we denote as $I_\ell^c$. To estimate the nuisance functions, we proceed in the following two stages.
\end{sloppypar}

\medskip
\noindent
{\bf Stage 1:} 
\begin{enumerate}[label=(\roman*)]
\item 
We estimate $\pi$ using the following balancing estimator:\\ We choose $\hat\pi_{\ell}$ such that it satisfies
\begin{equation}
\label{eq:balance3}
E\left[\left\{A\hat\pi_{\ell}(X)-1\right\}g_1(X)\right]\approx0,~~~~~~\forall g_1\in\mathcal{G}_1,
\end{equation}
where $\mathcal{G}_1\subset L^2(P_{0,X})$ is a function space chosen by the researcher.
\item We estimate $\mu$ by regressing $Y$ on $\{M_1,M_2,,\dots,M_k,X,A=1\}$ to obtain $\hat\mu_{\ell}$.
\end{enumerate}

\noindent
{\bf Stage 2:} 
\begin{enumerate}[label=(\roman*)]
\item 
We estimate $\rho$ using the following balancing estimator:\\ Given $\hat\pi_{\ell}$ from Stage 1, we choose $\hat\rho_{\ell}$ such that it satisfies
\begin{equation}
\label{eq:balance4}
E\left[\left\{A\hat\pi_{\ell}(X)\hat\rho(M_j,X)-\frac{(1-A)\hat\pi_{\ell}(X)}{\hat\pi_{\ell}(X)-1}\right\}g_2(M_j,X)\right]\approx0,~~~~~~\forall g_2\in\mathcal{G}_2,
\end{equation}
where $\mathcal{G}_2\subset L^2(P_{0,(M_j,X)})$ is a function space chosen by the researcher.
\item 
We estimate $\omega$ using the following balancing estimator:\\ Given $\hat\pi_{\ell}$ from Stage 1, we choose $\hat\omega_{\ell}$ such that it satisfies
\begin{equation}
\label{eq:balance5}
E\left[A\hat\pi_{\ell}(X)\hat\omega(M_j,M_{-j},X)g_3(M_j,M_{-j},X)-A\hat\pi_{\ell}(X)\hat{f}(M_{-j},X)\right]\approx0,~~~~~~\forall g_3\in\mathcal{G}_3,
\end{equation}
where $\mathcal{G}_3\subset L^2(P_{0,(M_1,M_2,\dots,M_k,X)})$ is a function space chosen by the researcher, and $\hat{f}(m_{-j},x)$ is an estimator for
\begin{equation*}
    E[g_3(M_j,m_{-j},x)\mid A=1,X]~~~~~~\forall m_{-j},x.
\end{equation*}
\item Given $\hat\mu_{\ell}$ from Stage 1, for each fixed value $m_j$, we estimate $\eta_1(m_j,X)$ by first fixing $M_j=m_j$ in $\mu_{\ell}$, and then regressing $\hat\mu_{\ell}(M_j=m_j,M_{-j},X)$ on $\{M_{-j},X,A=1\}$ to obtain $\hat\eta_{1,\ell}$.
\end{enumerate}

\medskip
\noindent
{\bf Estimating the parameter of interest.}
For any choice of nuisance functions $\{\tilde\pi,\tilde\rho,\tilde\omega,\tilde\mu,\tilde\eta_1\}$, let
\begin{align*}
\zeta(O;\tilde\pi,\tilde\rho,\tilde\omega,\tilde\mu,\tilde\eta_1,\tilde{R},\tilde\gamma_{1,1},\tilde\gamma_{1,0},\tilde\gamma_{1},\tilde\eta_2):=&A\tilde\pi(X)\{\tilde{R}(M_j,M_{-j},X)\big(Y-\tilde\mu(M_j,M_{-j},X)\big)\\
&+\tilde\eta_2(M_{-j},X)-\tilde\gamma_1(X)+\tilde\eta_1(M_j,X)-\tilde\gamma_{1,1}(X)\}\\
&+\frac{(1-A)\tilde\pi(X)}{\tilde\pi(X)-1}\{\tilde\gamma_{1,0}(X)-\tilde\eta_1(M_j,X)\}+\tilde\gamma_1(X).
\end{align*}

For all $\ell$, let $\hat\Delta^{M_j,\text{2S}}_\ell$ be the estimator of $\Delta^{M_j}$ defined as,
\begin{align*}
\hat\Delta^{M_j,\text{2S}}_\ell
=E_m\left[ \zeta(O;\hat\pi_{\ell},\hat\rho_{\ell},\hat\omega_{\ell},\hat\mu_{\ell},\hat\eta_{1,\ell},\tilde{R},\tilde\gamma_{1,1},\tilde\gamma_{1,0},\tilde\gamma_1,\tilde\eta_2)\right],
\end{align*}
where the empirical expectation is evaluated on $I_\ell$. Our final estimator of the parameter of interest will be the following.
\begin{align*}
    \hat\Delta^{M_j,\text{2S}}=\frac{1}{L}\sum_{\ell=1}^L \hat\Delta^{M_j,\text{2S}}_\ell.
\end{align*}

We propose the following nonparametric minimax estimator for $\pi,\rho$, and $\omega$ in the multiple-mediator case.

\begin{proposition}
\label{prop:stabel_pi}
$\breve{\pi}$ obtained from the following minimax optimization satisfies Equation \eqref{eq:balance3} with equality.
\begin{equation*}
\breve{\pi}=\arg\min_{\pi\in L^2(P_{0,X})}\max_{g_1\in L^2(P_{0,X})}E\left[\left\{A\pi(X)-1\right\}g_{1}(X)-\frac{1}{4}g^2_{1}(X)\right].
\end{equation*}
\end{proposition}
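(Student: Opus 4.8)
The plan is to exploit the concave, quadratic-in-$g_1$ structure of the inner problem so that the minimax collapses to a least-squares problem whose solution is the true inverse treatment mechanism, mirroring the proof of Proposition~\ref{prop:stabel_pi1}. First I would solve the inner maximization in closed form. For a fixed $\pi$, conditioning on $X$ and using the tower property (both $g_1$ and $\pi$ depend only on $X$, and $E[A\mid X]=p(A=1\mid X)$) rewrites the objective as
\[
E\big[\{A\pi(X)-1\}g_1(X)-\tfrac14 g_1^2(X)\big]=E\big[r(X)g_1(X)-\tfrac14 g_1^2(X)\big],
\]
where $r(X):=\pi(X)\,p(A=1\mid X)-1$. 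Since $\pi\in L^2(P_{0,X})$ and $0\le p(A=1\mid X)\le 1$, we have $r\in L^2(P_{0,X})$. Pointwise in $X$ the integrand is a strictly concave quadratic in the scalar $g_1(X)$, so the supremum over $g_1\in L^2(P_{0,X})$ is attained pointwise at $g_1(X)=2r(X)$, which again lies in $L^2(P_{0,X})$, yielding the reduced (inner) value $E[r^2(X)]$.

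Next I would carry out the outer minimization over $\pi$. The reduced objective $E\big[(\pi(X)\,p(A=1\mid X)-1)^2\big]$ is nonnegative, and under positivity $p(A=1\mid X)>0$, so the choice $\pi(X)=1/p(A=1\mid X)$ belongs to $L^2(P_{0,X})$ and attains the value $0$. Hence $\breve\pi(X)=1/p(A=1\mid X)$ is the minimizer (unique almost surely by positivity).

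Finally I would verify the balancing identity. At $\breve\pi$ we have $r(X)=\breve\pi(X)\,p(A=1\mid X)-1=0$ almost surely, so for every $g_1\in L^2(P_{0,X})$,
\[
E\big[\{A\breve\pi(X)-1\}g_1(X)\big]=E\big[r(X)g_1(X)\big]=0,
\]
which is exactly Equation~\eqref{eq:balance3} with equality. I do not anticipate a genuine obstacle, as the argument is the population-level analogue of Proposition~\ref{prop:stabel_pi1}; the only point requiring care is justifying the pointwise inner maximization---confirming that the maximizer $2r(X)$ stays in the full space $L^2(P_{0,X})$ and that the supremum may be exchanged with the expectation---which follows from $r\in L^2(P_{0,X})$ together with the concavity of the integrand.
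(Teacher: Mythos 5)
Your proposal is correct and takes essentially the same approach as the paper: the paper's proof simply defers to Proposition \ref{prop:stabel_pi1}, whose argument---solve the inner maximization in closed form (maximizer equal to twice the conditional mean of $A\pi(X)-1$ given $X$), reduce the outer problem to minimizing $E\big[\{p(A=1\mid X)\pi(X)-1\}^2\big]$, and observe that zero is achieved by the true inverse propensity, which forces the balancing identity---is exactly what you reproduce with $A$ in place of $1-A$. The only point worth flagging is that $1/p(A=1\mid X)\in L^2(P_{0,X})$ requires positivity bounded away from zero (Assumption \ref{ass:stg_pos_multi}) rather than merely $p(A=1\mid X)>0$ as you state, but this gap is shared with the paper's own argument.
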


Therefore, we estimate $\pi$ as,
\begin{equation}
\label{eq:estpi}
\hat\pi=\arg\min_{\pi\in\Pi}\max_{g_1\in\mathcal{G}_1}E_n\left[\left\{A\pi(X)-1\right\}g_{1}(X)-\frac{1}{4}g^2_{1}(X)\right]+R_{n}^3(\pi,g_1),
\end{equation}
where $R_n^3(\cdot,\cdot)$ is a regularizer.

\begin{proposition}
\label{prop:stabel_rho}
For any give $\hat\pi$,
$\breve{\rho}$ obtained from the following minimax optimization satisfies Equation \eqref{eq:balance4} with equality.
\begin{align*}
\breve{\rho}=\arg\min_{\rho\in L^2(P_{0,(M_j,X)})}\max_{g_2\in L^2(P_{0,(M_j,X)})}E\Big[&\Big\{A\hat\pi(X)\rho(M_j,X)-\frac{(1-A)\hat\pi(X)}{\hat\pi(X)-1}\Big\}g_{2}(M_j,X)\\
&-\frac{1}{4}g^2_{2}(M_j,X)\Big].
\end{align*}
\end{proposition}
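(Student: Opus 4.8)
The plan is to follow the template already used for Propositions~\ref{prop:stabel_pi1}, \ref{prop:stabel_pi2}, and \ref{prop:stabel_pi}: solve the inner maximization in closed form, reduce the saddle-point problem to an $L^2$ least-squares minimization, and then invoke positivity to show that the optimal value is zero. First I would use the law of iterated expectations to absorb the dependence on $A$: since the adversary $g_2$ depends only on $(M_j,X)$ whereas the bracketed term also involves $A$, I define the conditional discrepancy
\[
b_\rho(M_j,X):=E\!\left[A\hat\pi(X)\rho(M_j,X)-\frac{(1-A)\hat\pi(X)}{\hat\pi(X)-1}\;\Big|\;M_j,X\right]=\hat\pi(X)\,p(A=1\mid M_j,X)\,\rho(M_j,X)-\frac{\hat\pi(X)}{\hat\pi(X)-1}\,p(A=0\mid M_j,X),
\]
so that the cross term in the objective equals $E[b_\rho(M_j,X)\,g_2(M_j,X)]$ and the whole objective becomes $E[b_\rho g_2-\tfrac14 g_2^2]$.

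Second, I would solve the inner maximization. For each fixed $\rho$ the objective is a strictly concave quadratic functional of $g_2$, so pointwise maximization over $g_2\in L^2(P_{0,(M_j,X)})$ is attained at $g_2^\star=2b_\rho$ and yields optimal value $E[b_\rho^2]=\|b_\rho\|_2^2$. The outer problem therefore reduces to $\breve\rho=\arg\min_{\rho}\|b_\rho\|_2^2\ge 0$. Since $b_\rho$ is affine in $\rho$, the minimum value is zero if and only if the linear equation $b_\rho=0$ admits a solution in $L^2$; under positivity ($p(A=1\mid M_j,X)$ bounded away from $0$, and $\hat\pi(X)-1$ bounded away from $0$), that equation is solved explicitly by
\[
\rho^\star(M_j,X)=\frac{1}{\hat\pi(X)-1}\cdot\frac{p(A=0\mid M_j,X)}{p(A=1\mid M_j,X)}\in L^2(P_{0,(M_j,X)}).
\]
Hence the infimum equals zero and is attained at $\breve\rho=\rho^\star$, so $b_{\breve\rho}=0$ almost surely. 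This immediately gives $E[\{A\hat\pi(X)\breve\rho(M_j,X)-\frac{(1-A)\hat\pi(X)}{\hat\pi(X)-1}\}g_2(M_j,X)]=E[b_{\breve\rho}(M_j,X)\,g_2(M_j,X)]=0$ for every $g_2\in L^2(P_{0,(M_j,X)})$, which is exactly \eqref{eq:balance4} with equality.

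I expect the only nonroutine points to be technical regularity issues rather than conceptual ones: justifying that the pointwise maximizer $g_2^\star=2b_\rho$ indeed lies in $L^2$ and that the interchange of the pointwise maximization with the expectation is valid (a measurable-selection argument), together with checking $b_\rho,\rho^\star\in L^2$. All of these are controlled by the positivity/boundedness conditions on $p(A=1\mid M_j,X)$ and on $\hat\pi$ (in particular that $\hat\pi(X)$ is bounded and bounded away from $1$, which holds for the true $\pi=1/p(A=1\mid X)>1$ and hence for any sufficiently accurate estimate). The argument is structurally identical to that for Proposition~\ref{prop:stabel_pi2}, with $\rho$ playing the role of $\pi_2$ and the reweighting factor $(1-A)\hat\pi/(\hat\pi-1)$ replacing $(1-A)\hat\pi_1$.
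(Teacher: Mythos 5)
Your proof is correct and is essentially the argument the paper intends: the paper omits this proof, stating only that it parallels the proof of Proposition~\ref{prop:stabel_pi2}, and your derivation (solve the inner concave maximization at $g_2^\star=2b_\rho$, reduce the saddle point to minimizing $\|b_\rho\|_2^2$, and exhibit the explicit zero $\rho^\star=\frac{1}{\hat\pi(X)-1}\cdot\frac{p(A=0\mid M_j,X)}{p(A=1\mid M_j,X)}$, which indeed reduces to the true $\rho$ when $\hat\pi=\pi$) is exactly that template carried out for $\rho$. The only cosmetic remark is that uniqueness of the minimizer is not needed: it suffices that the minimum value is zero, so that \emph{any} minimizer $\breve\rho$ satisfies $b_{\breve\rho}=0$ almost surely, which is how the paper phrases the analogous step.
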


Therefore, to estimate $\rho$ in stage 2, we can use the following estimator.
\begin{equation}
\label{eq:estrho}
\hat\rho=\arg\min_{\rho\in\Re}\max_{g_2\in\mathcal G_2}E_n\Big[\Big\{A\hat\pi(X)\rho(M_j,X)-\frac{(1-A)\hat\pi(X)}{\hat\pi(X)-1}\Big\}g_{2}(M_j,X)-\frac{1}{4}g^2_{2}(M_j,X)\Big]+R_{n}^4(\rho,g_2),
\end{equation}
where $\Re\subset L^2(P_{0,(M_j,X)})$ is a user-specified function class, and $R_n^4(\cdot,\cdot)$ is a regularizer.

\begin{proposition}
\label{prop:stabel_omega}
For any give $\hat\pi$,
$\breve{\omega}$ obtained from the following minimax optimization satisfies Equation \eqref{eq:balance5} with equality.
\begin{align*}
\breve{\omega}=\arg\min_{\omega\in L^2(P_{0,(M_1,\dots,M_k,X)})}\max_{g_3\in L^2(P_{0,(M_1,\dots,M_k,X)})}&E\Big[A\hat\pi(X)\omega(M_j,M_{-j},X)g_{3}(M_j,M_{-j},X)\\
&-A\hat\pi(X)\hat{f}(M_{-j},X)-\frac{1}{4}g^2_{3}(M_j,M_{-j},X)\Big],
\end{align*}
where $\hat{f}(m_{-j},x)$ is an estimator for $E[g_{3}(M_j,m_{-j},X)\mid A=1,X=x]$ for any $m_{-j}$, $x$.
\end{proposition}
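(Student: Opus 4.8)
The plan is to mirror the argument used for Propositions~\ref{prop:stabel_pi1} and~\ref{prop:stabel_pi2}: solve the inner maximization in closed form, then show that the outer minimizer drives the relevant balancing functional to zero. The one genuinely new feature is that the penalized objective is no longer separable pointwise in $g_3$. Because $\hat{f}(m_{-j},x)$ equals the conditional expectation $E[g_3(M_j,m_{-j},X)\mid A=1,X=x]$, it integrates $g_3$ over the $M_j$-coordinate, so perturbing $g_3$ at a single point couples the objective across all $M_j$ sharing the same $(m_{-j},x)$. I would therefore carry out the inner maximization at the level of the whole Hilbert space $L^2(P_{0,(M_1,\dots,M_k,X)})$ rather than pointwise, treating $\hat f$ in both the objective and Equation~\eqref{eq:balance5} as the exact conditional expectation operator applied to $g_3$.

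First I would fix $\hat\pi$ and isolate the bilinear part of the objective, $B(\omega,g_3):=E[A\hat\pi(X)\omega(M_j,M_{-j},X)g_3(M_j,M_{-j},X)-A\hat\pi(X)\hat{f}(M_{-j},X)]$, so that the full objective reads $J(\omega,g_3)=B(\omega,g_3)-\frac{1}{4}E[g_3^2]$. For each fixed $\omega$, the map $g_3\mapsto B(\omega,g_3)$ is a bounded linear functional on $L^2(P_{0,(M_1,\dots,M_k,X)})$, with boundedness following from positivity, which keeps $\hat\pi$ bounded. By the Riesz representation theorem there is a representer $r_\omega$ with $B(\omega,g_3)=\langle r_\omega,g_3\rangle$, the inner product being that of $L^2(P_{0,(M_1,\dots,M_k,X)})$. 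Writing out the two conditional expectations by the tower rule, I expect $r_\omega=p(A=1\mid M_j,M_{-j},X)\hat\pi(X)\omega(M_j,M_{-j},X)-r_2$, where $r_2(m_j,m_{-j},x)=p(A=1\mid x)\hat\pi(x)\,p(m_j\mid A=1,x)p(m_{-j}\mid A=1,x)/p(m_j,m_{-j}\mid x)$ is the representer obtained by re-expressing $E[A\hat\pi\hat{f}]$ as an inner product against $g_3$.

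The inner maximization is then immediate by completing the square: $J(\omega,g_3)=\|r_\omega\|^2-\frac{1}{4}\|g_3-2r_\omega\|^2$, so $\max_{g_3}J(\omega,g_3)=\|r_\omega\|^2\ge 0$, attained at $g_3=2r_\omega$. It remains to minimize $\|r_\omega\|^2$ over $\omega$ and to show the minimum is zero. Here I would exhibit the true nuisance $\omega^\star=p(M_j\mid A=1,X)p(M_{-j}\mid A=1,X)/p(M_j,M_{-j}\mid A=1,X)$ and verify $r_{\omega^\star}=0$ by Bayes' rule, using $p(A=1\mid m_j,m_{-j},x)=p(m_j,m_{-j}\mid A=1,x)p(A=1\mid x)/p(m_j,m_{-j}\mid x)$ to cancel the first term of $r_{\omega^\star}$ against $r_2$; note the common factor $\hat\pi(x)$ cancels, so this holds for any fixed $\hat\pi$. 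Since $\omega^\star\in L^2$ under positivity, the value $0$ is attained, so $\min_\omega\|r_\omega\|^2=0$, and any minimizer $\breve\omega$ satisfies $\|r_{\breve\omega}\|^2=0$, hence $r_{\breve\omega}=0$, hence $B(\breve\omega,g_3)=\langle r_{\breve\omega},g_3\rangle=0$ for every $g_3\in L^2(P_{0,(M_1,\dots,M_k,X)})$, which is exactly Equation~\eqref{eq:balance5} with equality.

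The main obstacle, and the only place this proof departs from the single-mediator versions, is the non-pointwise coupling introduced by $\hat{f}$. Once one commits to reasoning in $L^2$ through the Riesz representer rather than maximizing pointwise, the remaining work is the routine Bayes-rule identity showing $r_{\omega^\star}=0$; the only care needed there is bookkeeping of which conditional densities appear, together with confirming that positivity places all representers in $L^2$ so that the completion-of-squares and Riesz steps are valid.
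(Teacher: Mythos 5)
Your proof is correct, and its core mechanism---complete the square in the inner maximization, reduce the minimax to minimizing a nonnegative quadratic, and show zero is attained exactly at a balancing solution---is the same as the paper's; but the execution differs enough to be worth comparing. The paper's proof never computes a representer: it \emph{supposes} there exists $\omega_0$ satisfying \eqref{eq:balance5} with equality (its existence for arbitrary $\hat\pi$ is inherited from the argument of Proposition~\ref{prop:true_multi}(c), since the factor $\hat\pi(x)p(A=1\mid x)$ multiplies both terms and plays no role in the cancellation), substitutes $E[A\hat\pi(X)\hat f(M_{-j},X)]=E[A\hat\pi(X)\omega_0(M_j,M_{-j},X)g_3(M_j,M_{-j},X)]$ into the objective, completes the square pointwise around $A\hat\pi(X)(\omega-\omega_0)$, and concludes that the inner maximum equals $E\big[\big(A\hat\pi(X)(\omega-\omega_0)\big)^2\big]$, which is minimized at value zero by $\omega_0$; hence $\breve\omega=\omega_0$. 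Your version replaces the abstract $\omega_0$ by the explicit Riesz representer $r_\omega$ and the explicit ratio $\omega^\star$, verified to annihilate $r_\omega$ by Bayes' rule. This buys two things: self-containedness (you construct the balancing element rather than citing it), and a more careful treatment of the constraint that $g_3$ may depend only on $(M_j,M_{-j},X)$---your maximizer $2r_\omega$ lies in that class, whereas the paper's stated maximizer $A\hat\pi(X)(\omega-\omega_0)$ depends on $A$ and drops a factor of $2$, so its claimed optimal value $E\big[\big(A\hat\pi(X)(\omega-\omega_0)\big)^2\big]$ should really be $E\big[\big(p(A=1\mid M_j,M_{-j},X)\hat\pi(X)(\omega-\omega_0)\big)^2\big]=\|r_\omega\|^2$. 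These are harmless slips in the paper (both quantities vanish exactly when $\omega=\omega_0$ on the relevant support, so the conclusion stands), but they are slips your conditional-expectation bookkeeping avoids; what the paper's shortcut buys in exchange is brevity and the transparent identification $\breve\omega=\omega_0$.
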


Therefore, we can estimate $\omega$ in stage 2 as,
\begin{align}
\label{eq:estomega}
\hat\omega=\arg\min_{\omega\in\Omega}\max_{g_3\in\mathcal G_3}E_n\Big[&A\hat\pi(X)\omega(M_j,M_{-j},X)g_{3}(M_j,M_{-j},X)-A\hat\pi(X)\hat{f}(M_{-j},X)\nonumber\\
&-\frac{1}{4}g^2_{3}(M_j,M_{-j},X)\Big]
+R_{n}^5(\omega,g_3),
\end{align}
where $\Omega\subset L^2(P_{0,(M_1,M_2,\dots,M_k,X)})$ is a user-specified function class, and $R_n^5(\cdot,\cdot)$ is a regularizer.

One may use any standard parametric or nonparametric regression method to estimate $\mu$ in Stage 1 and $\eta_1$ in Stage 2 of the approach.

\subsubsection{A Kernel-Based Approach to Solving for $\omega$}
\label{sec_drest_multi}
We omit the derivation for $\pi$ and $\rho$ when RKHSes are used as the corresponding hypothesis classes, as the arguments closely parallel those for $\pi_1$ and $\pi_2$ in the single-mediator setting. Given $\hat\pi$, we focus on deriving the solution for 
$\omega$ when an RKHS is adopted as its hypothesis class.

Let $\mathcal{G}_3$, $\Omega$, and $\Gamma$ be RKHSes with kernels $K_{\mathcal{G}_3}$, $K_{\Omega}$, and $K_{\Gamma}$, respectively, equipped with the RKHS norms $\|\cdot\|_{\mathcal{G}_3}$, $\|\cdot\|_{\Omega}$, and $\|\cdot\|_{\Gamma}$. We propose the following Tikhonov regularization-based optimization problems:
\begin{align}
\label{eq:estomegareg}
\hat\omega=\arg\min_{\omega\in\Omega}\max_{g_3\in\mathcal G_3}E_n\Big[&A\hat\pi(X)\omega(M_j,M_{-j},X)g_{3}(M_j,M_{-j},X)-A\hat\pi(X)\hat{f}(M_{-j},X)\nonumber\\
&-\frac{1}{4}g^2_{3}(M_j,M_{-j},X)\Big]
-\lambda_{\mathcal{G}_3}\|g_3\|_{\mathcal{G}_3}^2+\lambda_{\Omega}\|\omega\|_{\Omega}^2,
\end{align}
where
\begin{align*}
\hat{f}(m_{-j},x)&=\langle g_3,\hat\mu_{\lambda,m_{-j},n_1}(x)\rangle_{\mathcal{G}_3},
\end{align*}
in which $n_1$ is the sample size in the subset with $\{A_i=1\}$ and
\begin{align}\label{eq:estg3}
    \hat{\mu}_{\lambda,m_{-j},n_1}=\arg\min_{\mu\in\mathcal{H}_{\Gamma}}\sum_{r:A_r=1}\|K_{\mathcal{G}_3}\big((M_{j,r},m_{-j},X_r),\cdot\big)-\mu(X_r)\|_{\mathcal{G}_3}^2+\lambda\|\mu\|_{\mathcal{H}_\Gamma}^2,
\end{align}
where $\mathcal{H}_\Gamma$ is an RKHS of functions from $\mathcal{X}$ to $\mathcal{G}_3$ with elements defined as functions via $(g_3,K_{\Gamma}(x,\cdot))(x'):=K_{\Gamma}(x,x')g_3$, with inner product
\begin{align*}
    \langle g_{3,1}K_{\Gamma}(x,\cdot),g_{3,2}K_{\Gamma}(x',\cdot)\rangle_{\Gamma}:=\langle g_{3,1},g_{3,2}\rangle K_{\Gamma}(x,x'),
\end{align*}
for all $g_{3,1},g_{3,2}\in\mathcal{G}_3$.

The estimation of the conditional expectation $E[g_{3}(M_j,m_{-j},X)\mid A=1,X=x]$ builds on the framework of conditional mean embeddings in reproducing kernel Hilbert spaces, as developed by \cite{song2009hilbert} and \cite{grunewalder2012conditional}.

\begin{sloppypar} Define $K_{\mathcal{G}_3,n}=(K_{\mathcal{G}_3}((M_j,M_{-j},X)_i, (M_j,M_{-j},X)_s))_{i,s=1}^n$, $K_{\Omega,n}=(K_{\Omega}((M_j,M_{-j},X)_i, (M_j,M_{-j},X)_s)))_{i,s=1}^n$, and $K_{\Gamma,n_1}=(K_{\Gamma}(X_i,X_s))_{i,s:A_i=1,A_s=1}$ as the empirical kernel matrices corresponding to spaces $\mathcal{G}_3$, $\Omega$ and $\Gamma$ respectively. The following result provides closed-form solutions for $\hat\omega$.
\end{sloppypar}

\begin{proposition}\label{prop:omegasol}

Equation \eqref{eq:estomegareg} achieves its optimum at $\hat{\omega}=\sum_{i=1}^n\alpha_{\omega,i}K_{\Omega}((M_j,M_{-j},X)_i,\cdot)$, with $\alpha_{\omega}$ defined as,
    \begin{align*}
    \alpha_{\omega}&=\Bigg(K_{\Omega,n}diag(A\hat{\pi}(X))\frac{1}{4}K_{\mathcal{G}_3,n}\Big(\frac{1}{4n}K_{\mathcal{G}_3,n}+\lambda_{\mathcal{G}_3}I_n\Big)^{-1}diag(A\hat{\pi}(X))K_{\Omega,n}+n^2\lambda_{\Omega}K_{\Omega,n}\Bigg)^{\dagger}\\
    &~~~~K_{\Omega,n}diag(A\hat{\pi}(X))\frac{1}{4}\Big(\frac{1}{4n}K_{\mathcal{G}_3,n}+\lambda_{\mathcal{G}_3}I_n\Big)^{-1}C(A\hat{\pi}(X))_n
\end{align*}
\begin{sloppypar}where $diag(A\hat{\pi}(X))$ is a diagonal matrix with $A_i\hat\pi(X_i)$ as the $i$-th diagonal entry, $\big(A\hat{\pi}(X)\big)_n :=\begin{pmatrix}
        A_1\hat{\pi}(X_1) & A_2\hat{\pi}(X_2) & \cdots & A_n\hat{\pi}(X_n)
        \end{pmatrix}^\top$, and $C$ is an $n\times n$ matrix with $C_{si}=\sum_{r:A_r=1}K_{\mathcal{G}_3}\big((M_{j,s},M_{-j,s},X_s),(M_{j,r},M_{-j,i},X_r)\big)((K_{\Gamma,n_1}+\lambda I_{n_1})^{-1}K_{\Gamma,n_1n})_{ri}$, where $K_{\Gamma,n_1n}$ is an $n_1\times n$ matrix with $(K_{\Gamma,n_1n})_{rs}=K(X_r,X_s)$ for $r,s$ such that $A_r=1$ and $s=1,\dots,n$.
\end{sloppypar}

\end{proposition}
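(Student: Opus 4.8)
The plan is to exploit the fact that problem~\eqref{eq:estomegareg} is a minimax problem in which the inner maximization over $g_3$ is a strictly concave quadratic functional (owing to the $-\tfrac14 E_n[g_3^2]$ and $-\lambda_{\mathcal G_3}\|g_3\|_{\mathcal G_3}^2$ terms), while the outer minimization over $\omega$ is a convex quadratic; solving the two in sequence yields the closed form. The reduction to finite dimensions follows from the representer theorem, exactly as in the single-mediator case of Proposition~\ref{prop:pisol}, except that the term $\hat f$ introduces an additional linear functional of $g_3$ that must be tracked through the conditional mean embedding.

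First I would treat $\hat f(M_{-j,i},X_i)$ as a bounded linear functional of $g_3$. From its definition $\hat f(m_{-j},x)=\langle g_3,\hat\mu_{\lambda,m_{-j},n_1}(x)\rangle_{\mathcal G_3}$, together with the closed form of the conditional mean embedding estimator in~\eqref{eq:estg3}, the Riesz representer of $g_3\mapsto \hat f(M_{-j,i},X_i)$ in $\mathcal G_3$ is the linear combination $\sum_{r:A_r=1} w_{ri}\,K_{\mathcal G_3}\big((M_{j,r},M_{-j,i},X_r),\cdot\big)$ with weights $w_{ri}=\big((K_{\Gamma,n_1}+\lambda I_{n_1})^{-1}K_{\Gamma,n_1n}\big)_{ri}$. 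Consequently every occurrence of $g_3$ in the inner objective is through its evaluations at the $n$ sample points $(M_j,M_{-j},X)_i$ and through inner products against these treated-sample kernel sections. The maximizer therefore lies in the finite-dimensional subspace spanned by $\{K_{\mathcal G_3}((M_j,M_{-j},X)_i,\cdot)\}_{i=1}^n$ together with these additional sections, so the inner maximization reduces to a finite-dimensional concave quadratic that I solve in closed form by setting its gradient to zero.

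Having solved the inner problem, I would substitute the optimal $g_3$ back to obtain the inner value as a convex quadratic in the evaluation vector $(\omega((M_j,M_{-j},X)_i))_i$. Collecting the kernel evaluations into the matrix $K_{\mathcal G_3,n}$ and the factor $\tfrac14 K_{\mathcal G_3,n}(\tfrac{1}{4n}K_{\mathcal G_3,n}+\lambda_{\mathcal G_3}I_n)^{-1}$---the analogue of $\Gamma_1,\Gamma_2$ from Proposition~\ref{prop:pisol}---and encoding the inner products of kernel sections against the representers of $\hat f$ in the matrix $C$, whose entries $C_{si}$ are precisely $\langle K_{\mathcal G_3}((M_j,M_{-j},X)_s,\cdot),\sum_{r:A_r=1}w_{ri}K_{\mathcal G_3}((M_{j,r},M_{-j,i},X_r),\cdot)\rangle_{\mathcal G_3}$, produces the stated expression. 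By the representer theorem $\hat\omega=\sum_i\alpha_{\omega,i}K_\Omega((M_j,M_{-j},X)_i,\cdot)$, so its evaluations are $K_{\Omega,n}\alpha_\omega$; substituting and minimizing the resulting finite-dimensional Tikhonov-regularized convex quadratic in $\alpha_\omega$ via its normal equations, then using the Moore-Penrose pseudo-inverse to handle rank deficiency of $K_{\Omega,n}$, yields the claimed formula for $\alpha_\omega$.

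The main obstacle I anticipate is the careful bookkeeping of the $\hat f$ term: unlike the single-mediator propositions, where the inner functional depends on the adversary only through pointwise evaluations, here $\hat f$ depends on $g_3$ through a conditional mean embedding whose representer mixes the sample values $M_{-j,i}$ into the first argument of the kernel. Verifying that this representer is indeed $\sum_{r:A_r=1}w_{ri}K_{\mathcal G_3}((M_{j,r},M_{-j,i},X_r),\cdot)$---and hence that the matrix $C$ takes the stated doubly-indexed form---is the delicate step; the remaining minimax quadratic algebra parallels the derivation of Proposition~\ref{prop:pisol} and is routine.
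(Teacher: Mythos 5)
Your proposal is correct and follows the same route as the paper's proof: plug in the conditional-mean-embedding form of $\hat f$ from \eqref{eq:estg3}, reduce the inner maximization over $g_3$ to a finite-dimensional concave quadratic, substitute its optimum to obtain a quadratic form in the evaluations of $\omega$ (with the same factor $\tfrac14 K_{\mathcal G_3,n}(\tfrac{1}{4n}K_{\mathcal G_3,n}+\lambda_{\mathcal G_3}I_n)^{-1}$ and with your $C_{si}$ agreeing entrywise with the paper's), and then solve the outer Tikhonov problem via its normal equations with a Moore--Penrose pseudo-inverse. The one genuine difference is the span in which the inner problem is solved. The paper invokes the representer theorem to restrict $g_3$ to $\mathrm{span}\{K_{\mathcal G_3}((M_j,M_{-j},X)_i,\cdot)\}_{i=1}^n$; as you correctly observe, that is not what the representer theorem yields here, because the $\hat f$ term evaluates $g_3$ at the mixed points $(M_{j,r},M_{-j,i},X_r)$, so the exact maximizer also has components along those kernel sections---your enlarged span is the right one. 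However, to land exactly on the stated formula from the enlarged-span computation, one further observation is needed which your write-up passes over: writing $z$ for the Riesz representer of $g_3\mapsto E_n[A\hat\pi(X)\hat f(M_{-j},X)]$ and $z_\perp$ for its component orthogonal to the sample-section span, the inner optimum over the enlarged span equals the paper's restricted quadratic form plus $\tfrac{1}{4\lambda_{\mathcal G_3}}\|z_\perp\|_{\mathcal G_3}^2$. Since $z$, and hence $z_\perp$, does not involve $\omega$, this extra term is a constant for the outer minimization, so the argmin---and therefore $\alpha_\omega$---is exactly as stated. With that one line added your derivation is complete, and it in fact tightens the paper's slightly loose appeal to the representer theorem, since both computations produce the same $\hat\omega$.
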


\begin{remark}
    Solving for $\omega$ requires first estimating $\hat{f}$ via Equation \eqref{eq:estg3} to obtain $\hat{\mu}_{\lambda,m_{-j},n_1}$. The resulting estimate $\hat{\mu}_{\lambda,m_{-j},n_1}$ is reflected in the estimated coefficient vector $\alpha_\omega$ through the matrix $C$. Although $C$ does not admit a simple matrix-matrix multiplication representation and must be constructed elementwise, its computation remains efficient. Specifically, the only matrix inversion involved in forming each element of $C$ is $(K_{\Gamma,n_1}+\lambda I_{n_1})^{-1}$, which needs to be computed only once. Consequently, the overall construction of $C$, and hence the construction of $\alpha_\omega$, is not computationally intensive.
\end{remark}

\subsection{Bias Analysis}
We begin by showing that the true nuisance functions $\pi$, $\rho$ and $\omega$ satisfy conditions \eqref{eq:balance3}, \eqref{eq:balance4}, and \eqref{eq:balance5} with
equality, as stated in the following result.
\begin{proposition}
\label{prop:true_multi}
~\\\vspace{-5mm}
\begin{enumerate}[label=(\alph*)]
	\item For the true nuisance function $\pi$, we have
	\[
	E\left[\left\{A\pi(X)-1\right\}g_1(X)\right]=0,~~~~~~\forall g_1\in L^2(P_{0,X}).
	\]
	\item For the true nuisance functions $\pi$ and $\rho$, we have
	\[
	E\left[\left\{A\pi(X)\rho(M_j,X)-\frac{(1-A)\pi(X)}{\pi(X)-1}\right\}g_2(M_j,X)\right]=0,~~~~~~\forall g_2\in L^2(P_{0,(M_j,X)}).
	\]
    \item For the true nuisance functions $\pi$ and $\omega$, we have
	\begin{align*}
	E[A\pi(X)\omega(M_j,M_{-j},X)&g_3(M_j,M_{-j},X)-A\pi(X)f_3(M_{-j},X)]=0,\\
    &\forall g_3\in L^2(P_{0,(M_1,
    \dots,M_k,X)}),
	\end{align*}
    where $f_3(m_{-j},x)=E[g_3(M_j,m_{-j},X)\mid A=1,X=x]$ for any $m_{-j}$, $x$.
\end{enumerate}
\end{proposition}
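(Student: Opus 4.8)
The plan is to establish all three identities with a common two-step template: first peel off the treatment indicator using the law of iterated expectations (decomposing $E[\cdot\mid X]=\sum_a E[\cdot\mid A=a,X]\,p(A=a\mid X)$ so that the indicator selects a single treatment arm), and then use the definitions of $\pi$, $\rho$, and $\omega$ to cancel the treatment-probability factors and carry out the change of measure each weight encodes. Throughout I will use $\pi(X)=1/p(A=1\mid X)$ together with the model's positivity conditions, which guarantee that the densities appearing in the denominators are bounded away from zero, so the manipulations are well defined.

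For part (a), conditioning $E[\{A\pi(X)-1\}g_1(X)]$ on $X$ makes the $A$-factor contribute $p(A=1\mid X)$, which cancels $\pi(X)$ and leaves $E[g_1(X)]-E[g_1(X)]=0$. For part (b), I would first rewrite the second weight via $\pi(X)-1=p(A=0\mid X)/p(A=1\mid X)$, giving $(1-A)\pi(X)/(\pi(X)-1)=(1-A)/p(A=0\mid X)$. Restricting the first term to the treated arm and invoking $\rho(M_j,X)=p(M_j\mid A=0,X)/p(M_j\mid A=1,X)$ converts $E[\rho(M_j,X)g_2(M_j,X)\mid A=1,X]$ into $E[g_2(M_j,X)\mid A=0,X]$, a reweighting from the treated to the control conditional law of $M_j$; restricting the second term to the control arm produces the identical quantity, so the two terms cancel.

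For part (c), I would again restrict the first term to the treated arm, where the definition $\omega=p(M_j\mid A=1,X)\,p(M_{-j}\mid A=1,X)/p(M_j,M_{-j}\mid A=1,X)$ cancels the joint treated density supplied by the conditional expectation, leaving $g_3$ integrated against the product of the two treated marginals. A Fubini interchange that integrates out $M_j$ first reconstructs $f_3(M_{-j},X)=E[g_3(M_j,M_{-j},X)\mid A=1,X]$, so that the first term equals $E\{E[f_3(M_{-j},X)\mid A=1,X]\}$; the second term collapses to the same expression after the $A$-induced cancellation, and the difference vanishes.

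None of these steps is genuinely difficult; the argument is a chain of iterated-expectation and density-cancellation moves, mirroring Proposition~\ref{prop:true} in the single-mediator setting. The only points warranting explicit care are the algebraic simplification $(1-A)\pi(X)/(\pi(X)-1)=(1-A)/p(A=0\mid X)$ in part (b) and the Fubini interchange in part (c) that permits integrating out $M_j$ before $M_{-j}$ to recover $f_3$; these are the steps I would write out in full.
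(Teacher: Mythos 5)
Your proposal is correct and follows essentially the same route as the paper's proof: iterated expectations to peel off the treatment indicator, cancellation of $\pi(X)$ against $p(A=1\mid X)$, the identity $(1-A)\pi(X)/(\pi(X)-1)=(1-A)/p(A=0\mid X)$, the change of measure encoded by $\rho$ in part (b), and the Fubini step recovering $f_3$ in part (c). The only difference is organizational (you condition on $X$ first and then select the treatment arm, whereas the paper conditions on the mediators and manipulates density ratios via Bayes' rule), which does not change the substance of the argument.
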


In the multiple-mediator setting, satisfying conditions \eqref{eq:balance3}, \eqref{eq:balance4}, and \eqref{eq:balance5} is sufficient to control the bias of the final estimator $\hat\Delta^{M_j,\text{2S}}$, even when the corresponding nuisance functions are estimated within restricted hypothesis spaces. As in the single-mediator case, this conclusion follows from examining the structure of the bias of $\hat\Delta^{M_j,\text{2S}}$.

\begin{theorem}
\label{thm:main_multi}
 The bias of the estimator $\hat\Delta^{M_j,\text{2S}}$ satisfies the following equality:
\begin{align*}
&E[\hat\Delta^{M_j,\text{2S}}]-\Delta^{M_j}\\
&=E\Big[\Big\{A\hat{\pi}(X)-1\Big\}E\big[\hat{R}(M_j,M_{-j},X)\big\{\mu(M_j,M_{-j},X)-\hat{\mu}(M_j,M_{-j},X)\big\}\mid A=1,X\big]\Big]\\
    &~~~+E\Big[\Big\{A\hat{\pi}(X)-1\Big\}
    \Big\{E[\hat{\eta}_2(M_{-j},X)\mid A=1,X]-\hat{\gamma}_1(X)\Big\}\Big]\\
    &~~~+E\Big[\Big\{\frac{(1-A)\hat{\pi}(X)}{\hat{\pi}(X)-1}-1\Big\}\Big\{\hat{\gamma}_{1,0}(X)-E[\hat{\eta}_1(M_j,X)\mid A=0,X]\Big\}\Big]\\
    &~~~+{E\Big[E\big[\{\hat{R}(M_j,M_{-j},X)-R(M_j,M_{-j},X)\}\big\{\mu(M_j,M_{-j},X)-\hat{\mu}(M_j,M_{-j},X)\big\}\mid A=1,X\big]\Big]}\\
&~~~+E\Big[E\big[\{\eta_1^\ast(M_j,X)-\hat{\eta}_1(M_j,X)\}\big\{\rho(M_j,X)-\hat{\rho}(M_j,X)\big\}\mid A=1,X\big]\Big],
\end{align*}
where $\eta_1^\ast(M_j,X)=\int \hat{\mu}(M_j,m_{-j},X)p(M_{-j}=m_{-j}\mid A=1,X)dm_{-j}$.
\end{theorem}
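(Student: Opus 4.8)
The plan is to reduce the bias to an exact oracle identity by expanding $E[\zeta]$ at the estimated nuisances and rearranging, paralleling the single-mediator argument behind Theorem~\ref{thm:main} but with substantially more bookkeeping. Because the nuisances $(\hat\pi_\ell,\hat\rho_\ell,\hat\omega_\ell,\hat\mu_\ell,\hat\eta_{1,\ell})$ are fit on $I_\ell^c$ while $\hat\Delta^{M_j,\text{2S}}_\ell$ is evaluated on the independent fold $I_\ell$, conditioning on $I_\ell^c$ lets me treat all hatted functions (including the compatible working functions $\hat R,\hat\gamma_{1,1},\hat\gamma_{1,0},\hat\gamma_1,\hat\eta_2$) as fixed, so that $E[\hat\Delta^{M_j,\text{2S}}]=E[\zeta(O;\hat\pi,\hat\rho,\hat\omega,\hat\mu,\hat\eta_1,\hat R,\hat\gamma_{1,1},\hat\gamma_{1,0},\hat\gamma_1,\hat\eta_2)]$ over a fresh draw $O\sim P_0$; I then drop the $\ell$ subscript throughout. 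Writing $E_a[\cdot]$ for $E[\cdot\mid A=a,X]$ and $p_a(X)=p(A=a\mid X)$, the whole proof is a deterministic manipulation valid for any compatible nuisances, with no balancing conditions invoked.

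Next I would split $E[\zeta]$ into the four groups dictated by its form: the outcome-residual term $A\hat\pi\hat R(Y-\hat\mu)$, the treated mean-shift group $A\hat\pi(\hat\eta_2-\hat\gamma_1+\hat\eta_1-\hat\gamma_{1,1})$, the control group $\frac{(1-A)\hat\pi}{\hat\pi-1}(\hat\gamma_{1,0}-\hat\eta_1)$, and the standalone $\hat\gamma_1$. The two workhorse identities, both from iterated expectations and the fact that $\hat\pi$ is a function of $X$, are $E[A\hat\pi(X)g(M,X)]=E[\hat\pi(X)p_1(X)E_1[g]]$ and $E[\tfrac{(1-A)\hat\pi(X)}{\hat\pi(X)-1}g(M_j,X)]=E[\tfrac{\hat\pi(X)}{\hat\pi(X)-1}p_0(X)E_0[g]]$. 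Conditioning on $(M,X)$ inside the residual term collapses $Y-\hat\mu$ to $\mu-\hat\mu$ via $E[Y\mid A=1,M_j,M_{-j},X]=\mu$, and Assumption~\ref{ass:compatibility}(ii) removes the treated $\hat\eta_1$ contribution since $\hat\gamma_{1,1}(X)=E[\hat\eta_1(M_j,X)\mid A=1,X]$ forces $E[\hat\eta_1-\hat\gamma_{1,1}\mid A=1,X]=0$.

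I would then extract the first three right-hand-side lines of the decomposition (numbering the five lines as Terms 1--5 in order) by writing each reweighting factor as $\hat\pi p_1=(\hat\pi p_1-1)+1$ in the treated groups and $\tfrac{\hat\pi}{\hat\pi-1}p_0=(\tfrac{\hat\pi}{\hat\pi-1}p_0-1)+1$ in the control group; the ``$-1$'' pieces reproduce $E[(A\hat\pi-1)(\cdots)]$ and $E[(\tfrac{(1-A)\hat\pi}{\hat\pi-1}-1)(\cdots)]$, matching Terms 1--3 verbatim, while the ``$+1$'' pieces leave a collection of leftover expectations. Collecting these, the two $\pm E[\hat\gamma_1]$ contributions cancel, leaving $E[E_1[\hat R(\mu-\hat\mu)]]+E[E_1[\hat\eta_2]]+E[\hat\gamma_{1,0}]-E[E_0[\hat\eta_1]]$. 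Splitting $\hat R=R+(\hat R-R)$ in the first leftover immediately produces Term 4, so the crux is to show that the remaining leftover minus $\Delta^{M_j}$ equals Term 5.

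The main obstacle is this final collapse, which I would handle by reducing each surviving leftover to a single integral over $M_j$. The key algebraic fact is $R\,p(m_j,m_{-j}\mid 1,x)=(1-\rho(m_j,x))\,p(m_j\mid 1,x)\,p(m_{-j}\mid 1,x)$, so $\omega$ cancels and the $m_{-j}$-integral of $\mu$ (resp.\ $\hat\mu$) against $p(m_{-j}\mid 1,x)$ produces $\eta_1$ (resp.\ $\eta_1^\ast$); this gives $E_1[R(\mu-\hat\mu)]=E_1[(1-\rho)(\eta_1-\eta_1^\ast)]$ and, by the same computation at the truth, the identification rewrite $\Delta^{M_j}=E[E_1[(1-\rho)\eta_1]]$. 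Assumption~\ref{ass:compatibility}(i) gives $E_1[\hat\eta_2]=E_1[(1-\hat\rho)\eta_1^\ast]$, Assumption~\ref{ass:compatibility}(ii) gives $\hat\gamma_{1,0}=E_1[\hat\eta_1\hat\rho]$, and the change of measure $p(m_j\mid 0,x)=\rho(m_j,x)p(m_j\mid 1,x)$ gives $E_0[\hat\eta_1]=E_1[\hat\eta_1\rho]$. Substituting all four, the integrands combine as $(1-\rho)(\eta_1-\eta_1^\ast)+(1-\hat\rho)\eta_1^\ast+\hat\eta_1\hat\rho-\hat\eta_1\rho-(1-\rho)\eta_1$, which telescopes to $(\rho-\hat\rho)(\eta_1^\ast-\hat\eta_1)$; taking $E[E_1[\cdot]]$ yields exactly Term 5 and closes the identity. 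I expect the tracking of signs and of which conditional measure each factor is evaluated under to be the only genuine source of error, so I would organize the computation around the fixed abbreviations $E_1,E_0,\eta_1,\eta_1^\ast$ to keep it transparent.
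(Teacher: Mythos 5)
Your proposal is correct and follows essentially the same route as the paper's proof: expand $E[\zeta]$ with $\Delta^{M_j}=E[\gamma_1(X)]$, use iterated expectations and compatibility to kill the treated $\hat\eta_1-\hat\gamma_{1,1}$ piece, split the weights as $(\text{weight}-1)+1$ to extract the first three terms, split $\hat R=R+(\hat R-R)$ for the fourth, and collapse the leftovers via the cancellation $R\,p(m_j,m_{-j}\mid A=1,x)=(1-\rho)p(m_j\mid A=1,x)p(m_{-j}\mid A=1,x)$, compatibility, and the change of measure $p(m_j\mid A=0,x)=\rho\,p(m_j\mid A=1,x)$, with the final telescoping to $(\rho-\hat\rho)(\eta_1^\ast-\hat\eta_1)$. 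This matches the paper's argument step for step, including the identities for $T_1$ through $T_4$.
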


The influence function-based estimator $\hat\Delta^{M_j,\text{2S}}$ exhibits a second-order bias. An examination of its bias structure shows that this bias can be eliminated if the function classes $\mathcal{G}_1$, $\mathcal{G}_2$, and $\mathcal{G}_3$ contain elements that adequately approximate the corresponding outcome regression estimation errors, and if the balancing conditions \eqref{eq:balance3}, \eqref{eq:balance4}, and \eqref{eq:balance5} are satisfied. Importantly, bias control does not require correct specification of the nuisance functions $\pi$, $\rho$ and $\omega$. This observation motivates our two-stage procedure in the multiple-mediator setting, which replaces the difficult task of modeling joint and conditional mediator distributions with weaker balancing requirements.

We next study the robustness of the proposed method to misspecification of the nuisance functions. Under the positivity condition stated in Assumption \ref{ass:stg_pos_multi}, the estimator enjoys a multiple-robustness property, formalized in Proposition \ref{prop:MR_multi}.

\begin{assumption}\label{ass:stg_pos_multi}
    For some $\epsilon>0$, we have $p(A=a\mid X=x)>\epsilon$, $p(M_j=m_j\mid A=a, X=x)>\epsilon$ for $j=1,2,\dots,k$, and $p(M_1=m_1,M_2=m_2,\dots,M_j=m_j\mid A=a, X=x)>\epsilon$, for all $m_1$, $m_2$,$\dots$, $m_k$ $a$, and $x$.
\end{assumption}

\begin{proposition}
\label{prop:MR_multi} Under Assumptions \ref{ass:compatibility} and \ref{ass:stg_pos_multi}, 
$\hat\Delta^{M_j,\text{2S}}$ is unbiased if at least one of the following sets of nuisance function estimators is correctly specified:
\begin{enumerate}[label=(\roman*)]
\item $\{\hat\pi,\hat\rho,\hat\omega\}$;
\item $\{\hat\pi,\hat\rho,\hat\mu\}$; 
\item $\{\hat\pi,\hat\mu,\hat\eta_1\}$;
\item  $\{\hat\mu,\rho,\hat\omega,\hat\eta_1\}$.
\end{enumerate}
\end{proposition}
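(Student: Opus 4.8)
The plan is to begin from the exact bias identity of Theorem~\ref{thm:main_multi}, which expresses $E[\hat\Delta^{M_j,\text{2S}}]-\Delta^{M_j}$ as a sum of five terms; label them $T_1,\dots,T_5$ in order of appearance. I will show that under each of the four specification patterns (i)--(iv) every term vanishes, so that the bias is exactly zero. Two cancellation mechanisms are available. The \emph{orthogonality mechanism}: the leading factors $\{A\hat\pi(X)-1\}$ (in $T_1,T_2$) and $\{\tfrac{(1-A)\hat\pi(X)}{\hat\pi(X)-1}-1\}$ (in $T_3$) multiply quantities that are measurable in $X$ alone, so whenever $\hat\pi=\pi$ these three terms vanish --- for $T_1,T_2$ by Proposition~\ref{prop:true_multi}(a), and for $T_3$ by the elementary identity $E[\tfrac{(1-A)\pi(X)}{\pi(X)-1}\mid X]=1$, which follows from $\pi(X)=1/p(A=1\mid X)$ and hence $\tfrac{\pi(X)}{\pi(X)-1}=1/p(A=0\mid X)$. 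The \emph{difference-factor mechanism}: $T_1,T_4$ carry the factor $\{\mu-\hat\mu\}$, $T_4$ additionally carries $\{\hat R-R\}$, and $T_5$ carries $\{\rho-\hat\rho\}$ and the $\hat\mu$-dependent difference $\{\eta_1^\ast-\hat\eta_1\}$; each such factor is identically zero once the corresponding outcome-side nuisance is correct. Positivity (Assumption~\ref{ass:stg_pos_multi}) ensures all the $X$-measurable brackets lie in $L^2$, legitimizing every appeal to Proposition~\ref{prop:true_multi}(a).

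Scenarios (i), (ii), and (iii) all impose $\hat\pi=\pi$, so the orthogonality mechanism immediately clears $T_1,T_2,T_3$, and it remains only to clear $T_4,T_5$. Under (i), $\hat\rho=\rho$ and $\hat\omega=\omega$ give $\hat R=(1-\hat\rho)\hat\omega=R$, so $T_4=0$, and $\hat\rho=\rho$ gives $T_5=0$. Under (ii), $\hat\mu=\mu$ forces $\mu-\hat\mu=0$, clearing $T_4$, while $\hat\rho=\rho$ clears $T_5$. Under (iii), $\hat\mu=\mu$ again clears $T_4$; for $T_5$, I will note that $\eta_1^\ast$ is simply $\hat\mu$ averaged over $p(M_{-j}\mid A=1,X)$, so $\hat\mu=\mu$ yields $\eta_1^\ast=\eta_1$, and together with $\hat\eta_1=\eta_1$ this makes $\eta_1^\ast-\hat\eta_1=0$. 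Thus all three of these cases are routine once the term bookkeeping is in place.

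The main obstacle is scenario (iv), in which $\hat\pi$ is \emph{mis}specified, so the orthogonality mechanism is unavailable and the $X$-measurable brackets multiplying the $\pi$-factors in $T_2$ and $T_3$ must be shown to vanish \emph{pointwise} in $X$; this is exactly where the compatibility Assumption~\ref{ass:compatibility} is essential. For $T_2$, I will substitute $\hat\mu=\mu$, $\hat\rho=\rho$, $\hat\eta_1=\eta_1$ into the compatibility definitions of $\hat\eta_2,\hat\gamma_{1,1},\hat\gamma_{1,0},\hat\gamma_1$ and expand both $E[\hat\eta_2(M_{-j},X)\mid A=1,X]$ and $\hat\gamma_1(X)$ as the same double integral of $\mu(m_j,m_{-j},X)\{1-\rho(m_j,X)\}$ against the product measure $p(m_j\mid A=1,X)\,p(m_{-j}\mid A=1,X)$, using the representation $\eta_1(m_j,X)=\int\mu(m_j,m_{-j},X)p(m_{-j}\mid A=1,X)dm_{-j}$ to rewrite $\hat\gamma_1$; since the two expansions coincide, the bracket of $T_2$ is identically zero. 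For $T_3$, I will show $\hat\gamma_{1,0}(X)=E[\hat\eta_1(M_j,X)\mid A=0,X]$ via the reweighting identity $\rho(m_j,X)p(m_j\mid A=1,X)=p(m_j\mid A=0,X)$, so the bracket of $T_3$ vanishes as well. Finally $T_1,T_4$ are cleared by $\mu-\hat\mu=0$ (with $\hat R=R$) and $T_5$ by $\rho-\hat\rho=0$. Collecting the four cases establishes the stated multiple robustness, with scenario (iv)'s pointwise cancellation of the $\pi$-weighted brackets being the only nontrivial computation.
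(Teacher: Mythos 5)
Your proof is correct and takes essentially the same route as the paper's: both start from the bias decomposition of Theorem~\ref{thm:main_multi}, dispose of cases (i)--(iii) by combining $\hat\pi=\pi$ (which annihilates the first three terms) with the vanishing difference factors $\mu-\hat\mu$, $\hat R-R$, $\rho-\hat\rho$, and $\eta_1^\ast-\hat\eta_1$, and handle case (iv) by showing that the $X$-measurable brackets of the second and third terms vanish pointwise under Assumption~\ref{ass:compatibility} together with $\hat\mu=\mu$, $\hat\rho=\rho$, $\hat\eta_1=\eta_1$. The only cosmetic difference is that you expand $E[\hat\eta_2(M_{-j},X)\mid A=1,X]$ and $\hat\gamma_1(X)$ to a common double integral against $p(m_j\mid A=1,X)\,p(m_{-j}\mid A=1,X)$, whereas the paper reaches the same cancellation by an add-and-subtract telescoping argument.
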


\section{Simulation Study}
\label{sec:simulation}
In this section, we evaluate our proposed method on synthetic data for estimating the mediation functional $\psi_0$. We consider a data generating process (DGP) which satisfies Assumptions~\ref{assumption:med} and~\ref{ass:stg_pos}, and study the robustness property of our estimator under the scenarios specified in Assumption~\ref{assm:CAN}. In particular, consider a three-dimensional covariate vector $X=(X_1,X_2,X_3)$. The continuous covariates $X_1$ and $X_2$ are independently generated from the standard normal distribution $\mathcal{N}(0,1)$, and the binary covariate $X_3$ is generated from a Bernoulli distribution with $p(X_3=1)=0.5$. Treatment $A$ is generated according to
\begin{align*}
    p(A=1\mid X_1,X_2,X_3)=\text{expit}(0.35 + 0.1X_1 - 0.2X_2 + 0.3X_3 - 0.15X_1^2 + 0.25X_2X_3).
\end{align*}
The mediator $M$ is drawn from a two-component Gaussian mixture centered around
\begin{align*}
    \mu_M(X_1,X_2,X_3,A)=2.4 + 1.6X_1 - 1.2X_2 + 2.2X_3 + 0.8A + 1.5X_1^2 - 1.8X_2X_3.
\end{align*}
Specifically,
\begin{align*}
    M\sim\begin{cases}
        \mathcal{N}(\mu_M(X_1,X_2,X_3,A)-\delta, \sigma^2), & \text{with probability }p_{\text{left}},\\
        \mathcal{N}(\mu_M(X_1,X_2,X_3,A)+\delta, \sigma^2), & \text{with probability }1-p_{\text{left}},
    \end{cases}
\end{align*}
where $\sigma^2=1$ is the within-mode variance, $\delta=4$ is half the separation between the two modes, and $p_{\text{left}}=0.5$ controls the mixing proportion. Finally, the outcome $Y$ is generated from a Gaussian distribution with mean function
\begin{align*}
    &\mu_Y(X_1,X_2,X_3,M,A)\\
    &=6.4 + 2.7X_1 + 0.7X_2 - 3.6X_3 - 0.8X_2^2 + 2.5X_1A + 3.2MA - 2.9M + 4.5A,
\end{align*}
and variance fixed at 1.

We consider sample sizes $n\in\{1000,2000,4000\}$ and repeat simulations under each sample size for 300 times. Cross-fitting is implemented with four folds, and hyperparameters for estimating $\pi_1$ and $\pi_2$ are tuned via cross-validation using a 75/25 train-validation split. For comparison, we also present the results for the TTS estimator $\hat{\psi}^{\text{TTS}}$ and a naive estimator $\hat{\psi}^{\text{naive}}=\frac{1}{L}\sum_{\ell=1}^L E_m[\hat\mu_{2,\ell}(X)]$, which is constructed from the identification formula $E\{E[E[Y \mid A=1,M,X] \mid A=0,X]\}$. In our implementation, $\mu_1$ and $\mu_2$ are estimated via random forests. The naive estimator uses the same $\mu_2$, while the TTS estimator also relies on random forests for nuisance estimation, except for the conditional density of the mediator, which is obtained using kernel density estimator. Replication code for the simulation study is available at \href{https://github.com/changliues/TwoStage.git}{https://github.com/changliues/TwoStage.git}.

\begin{table}[t]
\centering
\caption{Results for comparing TTS, naive, and two-stage estimators.}
\begin{tabular}{c | c c c c c} 
 \hline
 Sample Size & Estimators & Mean Bias & Median Bias & RMSE & SD   \\ [0.5ex] 
 \hline
n=1000 & $\hat{\psi}^{\text{TTS}}$ & $2.17\times10^{33}$ & $-0.662\times10^{-1}$ & $3.70\times10^{34}$  & $3.70\times10^{34}$  \\
 & $\hat{\psi}^{\text{naive}}$  & $1.59\times10^{-1}$ & $1.70\times10^{-1}$ & $2.69\times10^{-1}$ & $2.17\times10^{-1}$ \\
 & $\hat{\psi}^{\text{2S}}$ & $1.68\times10^{-1}$ & $1.80\times10^{-1}$ & $2.76\times10^{-1}$ & $2.18\times10^{-1}$ \\
\hline
n=2000 & $\hat{\psi}^{\text{TTS}}$  & $-4.15\times10^{39}$ & $2.47\times10^{-1}$ & $7.19\times10^{40}$  & $7.19\times10^{40}$ \\
& $\hat{\psi}^{\text{naive}}$ & $1.29\times10^{-1}$ & $1.40\times10^{-1}$ & $1.81\times10^{-1}$ & $1.27\times10^{-1}$ \\
 & $\hat{\psi}^{\text{2S}}$ & $1.18\times10^{-1}$ & $1.22\times10^{-1}$ & $1.76\times10^{-1}$ & $1.31\times10^{-1}$ \\
 \hline
 n=4000 & $\hat{\psi}^{\text{TTS}}$ & $3.70\times10^{60}$ & $-7.70\times10^{-1}$ & $1.07\times10^{149}$  & $1.07\times10^{149}$ \\
 & $\hat{\psi}^{\text{naive}}$ & $1.13\times10^{-1}$ & $1.16\times10^{-1}$ & $1.48\times10^{-1}$ & $0.965\times10^{-1}$ \\
& $\hat{\psi}^{\text{2S}}$ & $0.686\times10^{-1}$ & $0.696\times10^{-1}$ & $1.20\times10^{-1}$ & $0.986\times10^{-1}$ \\
 \hline
\end{tabular}
\label{table:comparison}
\end{table}

\begin{table}[t]
\centering
\caption{95\% confidence intervals and coverage for the two-stage estimator.}
\begin{tabular}{c | c c} 
 \hline
 Sample Size & 95\% CI & Coverage   \\ [0.5ex] 
 \hline
n=1000 & [9.590, 10.346] & 0.870  \\
\hline
n=2000 & [9.643, 10.191] &  0.883 \\
 \hline
 n=4000 & [9.670, 10.066] &  0.910 \\
 \hline
\end{tabular}
\label{table:inference}
\end{table}

\begin{table}[t]
\centering
\caption{Robustness results for the two-stage estimator: Bias (RMSE).}
\begin{tabular}{c | c c c c c} 
 \hline
Sample size & all true & $\{\pi_1,\pi_2\}$ & $\{\mu_1,\mu_2\}$ & $\{\pi_1,\mu_1\}$ & all false\\ [0.5ex] 
\hline
n=1000 & 0.168 (0.276) & 0.156 (0.276) & 0.189 (0.287) & 0.163 (0.273) & 0.163 (0.273) \\
\hline
n=2000 & 0.118 (0.176) & 0.137 (0.193) & 0.184 (0.224) & 0.155 (0.202) & 0.185 (0.224) \\
 \hline
 n=4000 & 0.0686 (0.120) & 0.108 (0.149) & 0.160 (0.187) & 0.140 (0.172) & 0.203 (0.226) \\
\hline
\end{tabular}
\label{table:robustness}
\end{table}

Table~\ref{table:comparison} compares the TTS estimator, naive estimator, and two-stage estimator. The TTS estimator is highly unstable when the mediator is continuous, with mean bias and RMSE becoming excessively large due to the density in the denominator. In contrast, the naive estimator and the two-stage estimator remain stable. However, the proposed two-stage estimator exhibits lower bias and consistently outperforms the naive estimator in large samples. 
Results in Table~\ref{table:comparison} demonstrate the consistency of the two stage estimator. Moreover, 
as shown in Table~\ref{table:inference}, coverage for the two-stage estimator approaches the nominal 95\% as sample size increases.

Table~\ref{table:robustness} illustrates the robustness property of the two-stage estimator. Each scenario corresponds to a setting in which specific subsets of nuisance functions are estimated well by the random forests, while the others are fit with suboptimal hyperparameters, matching the cases described in Proposition~\ref{prop:MR}. Across the first four scenarios, the two-stage estimator remains stable, with bias decreasing as sample size grows, and it performs best when all nuisance functions are well-estimated. Even in the all false case, the estimator still remains stable. Together with Table~\ref{table:comparison}, this confirms that the proposed two-stage estimator is both more accurate than the other two approaches and robust to nuisance function misspecification.

\section{Application: Effect of Obesity on Cardiovascular Disease}
\label{sec:application}
In this section, we apply our proposed method to study the direct and indirect effect of obesity on coronary heart disease (CHD) mediated through the Glycohemoglobin (HbA1c). HbA1c, measured as a percentage of long-term blood glucose control, is the standard biomarker for diagnosing and monitoring diabetes, which is itself a major risk factor for CHD \citep{xu2017role}. Disentangling the extent to which obesity contributes to CHD risk through HbA1c, as opposed to other biological pathways, is of both scientific and policy relevance. If a substantial portion of the effect is mediated through HbA1c, interventions focused on glycemic control may effectively reduce obesity-related CHD risk; if not, alternative mechanisms such as inflammation, lipid metabolism, or hypertension may be more important to target. Partitioning the total effect into direct and indirect components thus provides critical insight for policymakers in determining whether diabetes prevention and management strategies are sufficient to mitigate CHD burden, or whether obesity itself must be targeted more aggressively regardless of glycemic control.

We use data from the 2013--2014 cycle of National Health and Nutrition Examination Survey (NHANES, \cite{NHANES2013}) for analysis. Obesity is defined using body mass index (BMI): individuals with BMI $\leq29$ are classified as non-exposed (non-obese), and those with BMI $\geq31$ are classified as exposed (obese). To sharpen the contrast between exposure groups, we exclude individuals with BMI in the range 29--31, leaving a deliberate gap around the threshold. The outcome of interest is a binary indicator based on self-reported physician diagnosis, where $Y=1$ indicates that the individual has previously been diagnosed with CHD. We adjust for eleven covariates that are potential confounders: age, gender, race, education level, annual family income, alcohol use, smoking history, vigorous work activity, diet quality, sleeping habits, and health insurance coverage. We perform complete case analysis, yielding a final sample size of 2685 and report effects in terms of odds ratios (OR) for ease of interpretation.

To motivate the definitions of direct and indirect ORs, recall from Section \ref{sec:desc} that the ATE can be decomposed into the sum of the NIE and NDE. 
In an analogous manner, when considering binary outcomes, we define odds ratios in terms of the corresponding counterfactual event odds. The natural direct effect odds ratio is defined as
\begin{align*}
    \text{OR}_{\text{NDE}}=\frac{p(Y^{(1,M^{(0)})}=1)/(1-p(Y^{(1,M^{(0)})}=1))}{p(Y^{(0,M^{(0)})}=1)/(1-p(Y^{(0,M^{(0)})}=1))},
\end{align*}
the natural indirect effect odds ratio is defined as
\begin{align*}
    \text{OR}_{\text{NIE}}=\frac{p(Y^{(1,M^{(1)})}=1)/(1-p(Y^{(1,M^{(1)})}=1))}{p(Y^{(1,M^{(0)})}=1)/(1-p(Y^{(1,M^{(0)})}=1))},
\end{align*}
and the total effect odds ratio equals their product, $\text{OR}_{\text{TOT}}=\text{OR}_{\text{NIE}}\times\text{OR}_{\text{NDE}}$. While our earlier parameter of interest, $\theta_0 = E[Y^{(1,M^{(0)})}]$, was expressed as a mean potential outcome, note that because the outcome is binary, this mean coincides with the probability $p(Y^{(1,M^{(0)})}=1)$, which then serves as the basis for constructing odds ratios. Since the denominator of $\text{OR}_{\text{NDE}}$ and numerator of $\text{OR}_{\text{NIE}}$ do not involve the mediator mechanism, we use the estimator $\frac{1}{L}\sum_{\ell=1}^L E_m[\hat{p}_{\ell}(Y=1\mid A=a,X)]$ for $a\in\{0,1\}$.

The two-stage estimator yielded $\widehat{\text{OR}}_{\text{NIE}}=1.18$ (95\% CI: [1.05, 1.33]) and $\widehat{\text{OR}}_{\text{NDE}}=1.36$ (95\% CI: [0.97, 1.83]). The naive estimator produced comparable results, with $\widehat{\text{OR}}_{\text{NIE}}=1.17$ (95\% CI: [1.05, 1.29]) and $\widehat{\text{OR}}_{\text{NDE}}=1.37$ (95\% CI: [0.98, 1.86]).\footnote{Replication code for the application is available at \href{https://github.com/changliues/TwoStage.git}{https://github.com/changliues/TwoStage.git}.} Combining the direct and indirect components, the total effect of obesity on CHD was estimated at $\widehat{\text{OR}}_{\text{TOT}}=1.60$ (95\% CI: [1.18, 2.15]) for both estimators. The direct effect estimates align closely with those reported in a two-step Mendelian randomization analysis of BMI and CHD by \cite{xu2017role}, which found that, per 1 standard deviation (kg/$\text{m}^2$) increase in BMI, the total effect was 1.45 (95\% CI: [1.27, 1.66]), and the direct effect was 1.36 (95\% CI: [1.19, 1.57]) after excluding the indirect effect mediated through HbA1c. Differences in magnitude across studies are expected, as our analysis defines obesity using BMI categories, whereas \cite{xu2017role} consider a one-standard deviation increase in continuous BMI.

Taken together, our findings suggest that obesity results in approximately 60\% higher odds of CHD compared to non-obesity, and that, considering only the direct pathway independent of HbA1c, obesity leads to about 35\% higher odds of CHD.

\section{Conclusion}
\label{sec:conc}

When estimating a functional of interest that depends on nuisance functions, naively aiming for perfect fits for these nuisance functions can make the resulting estimator highly sensitive, as even small errors in the nuisance function estimators may be amplified, especially in structures involving ratios of density functions or scores in the denominator. Instead, it is important to investigate the role of nuisance functions in the bias structure of the estimator. Based on this vision, we proposed a new approach for estimating the nuisance functions in mediation analysis for both single- and multiple-mediator settings, tailored to the bias structure of the influence function-based estimators. Our proposed nuisance estimators are based on a nonparametric balancing that can target the bias of the parameter of interest directly and also can lead to an estimator which is less sensitive to error. We studied the multiple robustness, root-$n$ consistency, and asymptotic normality for the proposed estimators. We evaluated our proposed approach on synthetic data and applied our method to estimate the effect of obesity on coronary heart disease mediated by Glycohemoglobin, finding that obesity results in a 60\% higher odds of coronary heart disease, of which roughly 35\% is attributable to the direct pathway independent of Glycohemoglobin.

\bibliography{paper-ref.bib}
\bibliographystyle{apalike}

\newpage

\appendices

\section{Influence function of $\psi_0$}
The authors of \citep{tchetgen2012semiparametric} showed that in a nonparametric model, the efficient influence function of the mediation functional $\psi_0$ is given by
\begin{align*}
IF_{\psi_0}=&\frac{I(A=1)p(M\mid A=0,X)}{p(A=1\mid X)p(M\mid A=1,X)}\{Y-E[Y\mid A=1,M,X]\}\\
&+\frac{I(A=0)}{p(A=0\mid X)}\{E[Y\mid A=1,M,X]-\eta(1,0,X)\}+\eta(1,0,X)-\psi_0,
\end{align*}
where
\[
\eta(a',a,X)=\int E[Y\mid A=a',M=m,X]p(M=m\mid A=a,X)dm.
\]

\section{Influence function of $\Delta^{M_j}$}
The authors of \citep{xia2022decomposition} showed that in a nonparametric model, the efficient influence function of $\Delta^{M_j}$ is given by
\begin{align*}
    &IF_{\Delta^{M_j}}\\
    &=\frac{I(A=1)}{p(A=1\mid X)}\Big(1-\frac{p(M_j\mid A=0,X)}{p(M_j\mid A=1,X)}\Big)\frac{p(M_j\mid A=1,X)p(M_{-j}\mid A=1,X)}{p(M_j,M_{-j}\mid A=1,X)}\\
    &~~~~~~\times\big(Y-E[Y\mid A=1,M_j,M_{-j},X]\big)\\
    &~~~+\frac{I(A=1)}{p(A=1\mid X)}\int E[Y\mid A=1,M_j=m_j,M_{-j},X]\\
    &~~~~~~~~~~~~~~~~~~~~~~~\times\Big(p(M_j=m_j\mid A=1,X)-p(M_j=m_j\mid A=0,X)\Big)dm_j\\
    &~~~+\frac{I(A=1)}{p(A=1\mid X)}\eta_1(M_j,X)-\frac{I(A=0)}{p(A=0\mid X)}\eta_1(M_j,X)\\
    &~~~+\Big(1-\frac{2I(A=1)}{p(A=1\mid X)}\Big)\int\eta_1(m_j,X)p(M_j=m_j\mid A=1,X)dm_j\\
    &~~~-\Big(1-\frac{I(A=1)}{p(A=1\mid X)}-\frac{I(A=0)}{p(A=0\mid X)}\Big)\int\eta_1(m_j,X)p(M_j=m_j\mid A=0,X)dm_j-\Delta^{M_j},
\end{align*}
where
\[
\eta_1(M_j,X)=\int E[Y\mid A=1,M_j,M_{-j}=m_{-j},X]p(M_{-j}=m_{-j}\mid A=1,X)dm_{-j}.
\]

\section{Proofs}\label{appendix:proofs}

\begin{proof}[Proof of Proposition \ref{prop:stabel_pi1}]
We note that	
\begin{align*}
&\arg\max_{h_1}E\left[\left\{(1-A)\pi_1(X)-1\right\}h_{1}(X)-\frac{1}{4}h^2_{1}(X)\right]\\
&=\arg\min_{h_1}E\left[\left\{(1-A)\pi_1(X)-1\right\}^2+\frac{1}{4}h^2_{1}(X)-\left\{(1-A)\pi_1(X)-1\right\}h_{1}(X)\right]\\
&=\arg\min_{h_1}E\left[\left(\frac{1}{2}h_1(X)-\left\{(1-A)\pi_1(X)-1\right\}\right)^2\right]\\
&=2E\bigg[(1-A)\pi_1(X)-1\bigg| X\bigg].
\end{align*}
Therefore,
\begin{align*}
\breve{\pi}_1
&=\arg\min_{\pi_1}\max_{h_1}E\left[\left\{(1-A)\pi_1(X)-1\right\}h_{1}(X)-\frac{1}{4}h^2_{1}(X)\right]\\
&=\arg\min_{\pi_1}E\left[2\left\{(1-A)\pi_1(X)-1\right\}E\bigg[(1-A)\pi_1(X)-1\bigg| X\bigg]-E\bigg[(1-A)\pi_1(X)-1\bigg| X\bigg]^2\right]\\
&=\arg\min_{\pi_1}E\left[E\bigg[(1-A)\pi_1(X)-1\bigg| X\bigg]^2\right],
\end{align*}
which since zero is achievable by $\frac{1}{p(A=0\mid X)}$, it implies that 
\[
E\bigg[(1-A)\breve{\pi}_1(X)-1\bigg| X\bigg]=0.
\]
Therefore, for all $h_1\in\mathcal{H}_1$, 
\[
E\left[\left\{(1-A)\breve{\pi}_1(X)-1\right\}h_1(X)\right]
=E\left[E\bigg[(1-A)\breve{\pi}_1(X)-1\bigg| X\bigg]h_1(X)\right]
=0.
\]

\end{proof}

\begin{proof}[Proof of Proposition \ref{prop:stabel_pi2}]
For any give $\hat\pi_1$, we note that	
\begin{align*}
&\arg\max_{h_2}E\left[\left\{A\pi_2(M,X)-(1-A)\hat\pi_1(X)\right\}h_{2}(M,X)-\frac{1}{4}h^2_{2}(M,X)\right]\\
&=\arg\min_{h_2}E\left[\left\{A\pi_2(M,X)(1-A)\hat\pi_1(X)\right\}^2+\frac{1}{4}h^2_{2}(M,X)-\left\{A\pi_2(M,X)-(1-A)\hat\pi_1(X)\right\}h_{2}(M,X)\right]\\
&=\arg\min_{h_2}E\left[\left(\frac{1}{2}h_2(M,X)-\left\{A\pi_2(M,X)-(1-A)\hat\pi_1(X)\right\}\right)^2\right]\\
&=2E\bigg[A\pi_2(M,X)-(1-A)\hat\pi_1(X)\bigg| M,X\bigg].
\end{align*}
Therefore,
\begin{align*}
\breve{\pi}_2
&=\arg\min_{\pi_2}\max_{h_2}E\left[\left\{A\pi_2(M,X)-(1-A)\hat\pi_1(X)\right\}h_{2}(M,X)-\frac{1}{4}h^2_{2}(M,X)\right]\\
&=\arg\min_{\pi_2}E\bigg[2\left\{A\pi_2(M,X)-(1-A)\hat\pi_1(X)\right\}E\big[A\pi_2(M,X)-(1-A)\hat\pi_1(X)\big|M, X\big]\\
&~~~~~~~~~~~~~~~~~~~~ -E\big[A\pi_2(M,X)-(1-A)\pi_1(X)\big| M,X\big]^2\bigg]\\
&=\arg\min_{\pi_2}E\left[E\bigg[A\pi_2(M,X)-(1-A)\hat\pi_1(X)\bigg| M,X\bigg]^2\right],
\end{align*}
which since zero is achievable by $\frac{p(M, A=0\mid  X)\hat\pi_1(X)}{p(M, A=1\mid X)}$, it implies that 
\[
E\bigg[A\breve{\pi}_2(M,X)-(1-A)\hat\pi_1(X)\bigg| M,X\bigg]=0.
\]
Therefore, for all $h_2\in\mathcal{H}_2$, 
\begin{align*}
&E\left[\left\{A\breve{\pi}_2(M,X)-(1-A)\pi_1(X)\right\}h_2(M,X)\right]\\
&
=E\left[E\bigg[A\breve{\pi}_2(M,X)-(1-A)\pi_1(X)\bigg| M,X\bigg]h_2(M,X)\right]=0.
\end{align*}

\end{proof}

\begin{proof}[Proof of Proposition \ref{prop:pisol}]
We first show that, for any function $\pi_1$, we have
\begin{align*}
&\max_{h_1\in\mathcal{H}}E_n\left[\left\{(1-A)\pi_1(X)-1\right\}h_{1}(X)-\frac{1}{4}h^2_{1}(X)\right]-\lambda_{\mathcal{H}_1}\|h_1\|_{\mathcal{H}_1}^2\\
&=\{\xi_n(\pi_1)\}^\top K_{\mathcal{H}_1,n}(\frac{1}{n}K_{\mathcal{H}_1,n}+\lambda_{\mathcal{H}_1} I_n)^{-1}\{\xi_n(\pi_1)\},
\end{align*}
where $\xi_n(\pi_1)=\frac{1}{n}\big((1-A_i)\pi_1(X_i)-1\big)_{i=1}^n$.

To see this, by representer theorem, the solution to the maximization will be of the form,
\begin{align*}
    h_1(x)=\sum_{j=1}^n\alpha^\ast_j K_{\mathcal{H}_1}(X_j,x),
\end{align*}
and it suffices to find the coefficients $\alpha^\ast=(\alpha_j^\ast)_{j=1}^n$.

Note that we have,
\begin{align*}
    E_n[\{(1-A)\pi_1(X)-1\}h_1(X)]&=\sum_{i=1}^nh_1(X_i)\{\xi_n(\pi_1)\}_i={\alpha^\ast}^\top K_{\mathcal{H}_1,n}\{\xi_n(\pi_1)\},\\
    E_n[h_1^2(X)]&=\frac{1}{n}{\alpha^\ast}^\top K_{\mathcal{H}_1,n}^2{\alpha^\ast},\\
    \|h_1\|_{\mathcal{H}_1}^2&={\alpha^\ast}^\top K_{\mathcal{H}_1,n}{\alpha^\ast}.
\end{align*}

Therefore we have,
\begin{align*}
    &E_n\Big[\{(1-A)\pi_1(X)-1\}h_1(X)-\frac{1}{4}h_1^2(X)\Big]-\lambda_{\mathcal{H}_1}\|h_1\|_{\mathcal{H}_1}^2\\
    &={\alpha^\ast}^\top K_{\mathcal{H}_1,n}\{\xi_n(\pi_1)\}-{\alpha^\ast}^\top (\frac{1}{4n}K_{\mathcal{H}_1,n}^2+\lambda_{\mathcal{H}_1}K_{\mathcal{H}_1,n}){\alpha^\ast}.
\end{align*}

Taking the derivative and setting it to zero, the optimal coefficients are obtained as,
\begin{align*}
    \alpha^\ast=\frac{1}{2}(\frac{1}{4n}K_{\mathcal{H}_1,n}+\lambda_{\mathcal{H}_1} I_n)^{-1}\{\xi_n(\pi_1)\},
\end{align*}
and we have,
\begin{align*}
    &\max_{h_1\in\mathcal{H}_1}E_n\Big[\{(1-A)\pi_1(X)-1\}h_1(X)-\frac{1}{4}h_1^2(X)\Big]-\lambda_{\mathcal{H}_1}\|h_1\|_{\mathcal{H}_1}^2\\
&=\frac{1}{4}\{\xi_n(\pi_1)\}^\top K_{\mathcal{H}_1,n}(\frac{1}{4n}K_{\mathcal{H}_1,n}+\lambda_{\mathcal{H}_1} I_n)^{-1}\{\xi_n(\pi_1)\}.
\end{align*}

Therefore, the outer minimization problem is reduced to:
\begin{align*}
    \hat{\pi}_1=\arg\min_{\pi_1\in\Pi_1}\{\xi_n(\pi_1)\}^\top \Gamma_1\{\xi_n(\pi_1)\}+\lambda_{\Pi_1}\|\pi_1\|_{\Pi_1}^2,
\end{align*}
where $\Gamma_1=\frac{1}{4}K_{\mathcal{H}_1,n}(\frac{1}{4n}K_{\mathcal{H}_1,n}+\lambda_{\mathcal{H}_1} I_n)^{-1}$.

By representer theorem, the solution to the maximization will be of the form,
\begin{align*}
    \pi_1(x)=\sum_{j=1}^n\alpha_j K_{\Pi_1}(X_j,x),
\end{align*}
and it suffices to find the coefficients $\alpha=(\alpha_j)_{j=1}^n$.

Note that,
\begin{align*}
    \xi_n(\pi_1)&=\frac{1}{n}(diag(1-A)K_{\Pi_1,n}\alpha-e_n),\\
    \|\pi_1\|_{\Pi_1}^2&=\alpha^\top K_{\Pi_1,n}\alpha.
\end{align*}

Therefore, we have,
\begin{align*}
    &\min_{\pi_1\in\Pi_1}\{\xi_n(\pi_1)\}^\top \Gamma_1\{\xi_n(\pi_1)\}+\lambda_{\Pi_1}\|\pi_1\|_{\Pi_1}^2\\
    &=\min_{\alpha\in\mathbb{R}^n}\frac{1}{n^2}\alpha^\top K_{\Pi_1,n} diag(1-A)\Gamma_1 diag(1-A)K_{\Pi_1,n}\alpha-\frac{2}{n^2}e_n^\top \Gamma_1 diag(1-A)K_{\Pi_1,n}\alpha+\lambda_{\Pi_1}\alpha^\top  K_{\Pi_1,n}\alpha+c,
\end{align*}
which is solved by 
\begin{align*}
    \alpha=\Big(K_{\Pi_1,n} diag(1-A)\Gamma_1 diag(1-A)K_{\Pi_1,n}+n^2\lambda_{\Pi_1} K_{\Pi_1,n}\Big)^{\dagger}K_{\Pi_1,n}diag(1-A)\Gamma_1 e_n.
\end{align*}

The derivation for $\beta$ is similar and omitted for brevity.

\end{proof}

\begin{proof}[Proof of Proposition \ref{prop:true}]
\begin{enumerate}[label=(\alph*)]
\item For all $h_1\in L^2(P_{0,X})$, we have
\[
E\left[\left\{(1-A)\pi_1(X)-1\right\}h_1(X)\right]
=E\left[\left\{E[1-A\mid X]\pi_1(X)-1\right\}h_1(X)\right]
=0.
\]
\item For all $h_2\in L^2(P_{0,(M,X)})$, we have
\begin{align*}
&E\left[\left\{A\pi_2(M,X)-(1-A)\pi_1(X)\right\}h_2(M,X)\right]\\
&=E\left[\left\{\frac{p(A=1\mid M,X)p(M\mid A=0,X)}{p(M, A=1\mid X)}-\frac{p(A=0\mid M,X)}{p(A=0\mid X)}\right\}h_2(M,X)\right]\\
&=E\left[\left\{\frac{p(M\mid A=0,X)}{p(M\mid X)}-\frac{p(A=0\mid M,X)}{p(A=0\mid X)}\right\}h_2(M,X)\right]\\
&=E\left[\left\{\frac{p(M,A=0\mid X)}{p(M\mid X)p(A=0\mid X)}-\frac{p(M,A=0\mid X)}{p(M\mid X)p(A=0\mid X)}\right\}h_2(M,X)\right]\\
&=0.
\end{align*}
\end{enumerate}

\end{proof}

\begin{proof}[Proof of Proposition \ref{prop:MR}]
\begin{align*}
    &E[\hat\psi^\text{2S}]-\psi_0\\
    &=E[\hat\mu_2(X)+(1-A)\hat\pi_1(X)\{\hat\mu_1(M,X)-\hat\mu_2(X)\}+A\hat\pi_2(M,X)\{Y-\hat\mu_1(M,X)\}]-\psi_0\\
    &=E\left[ \left\{(1-A)\hat\pi_{1}(X)-1\right\}\{\mu_2(X)-\hat\mu_{2}(X)\}\right]\\
    &~~~~~+E\left[\left\{A\hat\pi_{2}(M,X)-(1-A)\hat\pi_{1}(X)\right\}\{\mu_1(M,X)-\hat\mu_{1}(M,X)\} \right]\\
&=E\left[ \left\{E\left[1-A\mid X\right]\hat\pi_{1}(X)-1\right\}\{\mu_2(X)-\hat\mu_{2}(X)\}\right]\\
&~~~~~+E\left[\left\{E\left[A\mid M,X\right]\hat\pi_{2}(M,X)-E\left[1-A\mid M,X\right]\hat\pi_{1}(X)\right\}\{\mu_1(M,X)-\hat\mu_{1}(M,X)\} \right]\\
&=E\left[ \left\{p\left(A=0\mid X\right)\hat\pi_{1}(X)-1\right\}\{\mu_2(X)-\hat\mu_{2}(X)\}\right]\\
&~~~~~+E\left[\left\{p\left(A=1\mid M,X\right)\hat\pi_{2}(M,X)-p\left(A=0\mid M,X\right)\hat\pi_{1}(X)\right\}\{\mu_1(M,X)-\hat\mu_{1}(M,X)\} \right].
\end{align*}
Using Cauchy-Schwarz inequality, we have
\begingroup
\allowdisplaybreaks	
\begin{align*}
&E[\hat\psi^\text{2S}]-\psi_0\\
&\leq\Big\|p\left(A=0\mid X\right)\hat\pi_{1}(X)-1\Big\|_2\Big\|\mu_2(X)-\hat\mu_{2}(X)\Big\|_2\\
&~~~~~+\Big\|p\left(A=1\mid M,X\right)\hat\pi_{2}(M,X)-p\left(A=0\mid M,X\right)\hat\pi_{1}(X)\Big\|_2\Big\|\mu_1(M,X)-\hat\mu_{1}(M,X)\Big\|_2\\
&\leq\Big\|p\left(A=0\mid X\right)\Big(\hat\pi_{1}(X)-\pi_1(X)\Big)\Big\|_2\Big\|\mu_2(X)-\hat\mu_{2}(X)\Big\|_2\\
&~~~~~+\Big\|\frac{p\left(A=1\mid M,X\right)}{\pi_1(X)}\Big(\pi_1(X)\hat\pi_{2}(M,X)-\pi_2(M,X)\hat\pi_{1}(X)\Big)\Big\|_2\Big\|\mu_1(M,X)-\hat\mu_{1}(M,X)\Big\|_2\\
&\leq\Big\|\hat\pi_{1}(X)-\pi_1(X)\Big\|_2\Big\|\mu_2(X)-\hat\mu_{2}(X)\Big\|_2\\
&~~~~~+\Big(\Big\|\pi_1(X)\Big(\hat\pi_{2}(M,X)-\pi_2(M,X)\Big)\Big\|_2+\Big\|\pi_2(M,X)\Big(\hat\pi_{1}(X)-\pi_1(X)\Big)\Big\|_2\Big)\Big\|\mu_1(M,X)-\hat\mu_{1}(M,X)\Big\|_2\\
&\leq\Big\|\hat\pi_{1}(X)-\pi_1(X)\Big\|_2\Big\|\mu_2(X)-\hat\mu_{2}(X)\Big\|_2\\
&~~~~~+\frac{1}{\epsilon^2}\Big(\Big\|\hat\pi_{2}(M,X)-\pi_2(M,X)\Big\|_2+\Big\|\hat\pi_{1}(X)-\pi_1(X)\Big\|_2\Big)\Big\|\mu_1(M,X)-\hat\mu_{1}(M,X)\Big\|_2,
\end{align*}
\endgroup
where the last equation holds by Assumption \ref{ass:stg_pos}. Therefore if at least one of the conditions in the proposition holds, we have
\begin{align*}
E[\hat\psi^\text{2S}]-\psi_0&=o_p(1).
\end{align*}
\end{proof}

\begin{proof}[Proof of Proposition \ref{prop:stabel_pi}]
    The proof of Proposition \ref{prop:stabel_pi} follows arguments similar to those used in the proof of Proposition \ref{prop:stabel_pi1} and is therefore omitted for brevity.
\end{proof}

\begin{proof}[Proof of Proposition \ref{prop:stabel_rho}]
    The proof of Proposition \ref{prop:stabel_rho} follows arguments similar to those used in the proof of Proposition \ref{prop:stabel_pi2} and is therefore omitted for brevity.
\end{proof}

\begin{proof}[Proof of Proposition \ref{prop:stabel_omega}]
   For any given $\hat\pi$, suppose $\omega_0$ satisfies Equation \ref{eq:balance5} with equality, i.e.,
   \begin{align*}
       E\left[A\hat\pi(X)\omega_0(M_j,M_{-j},X)g_3(M_j,M_{-j},X)-A\hat\pi(X)f(M_{-j},X)\right]=0,
   \end{align*}
   where $f(m_{-j},x)=E[g_3(M_j,m_{-j},x)\mid A=1,X]$ for any $m_{-j},x$.

   For any $\omega$, we have
   \begin{align*}
       &\arg\max_{g_3}E\Big[A\hat\pi(X)\omega(M_j,M_{-j},X)g_{3}(M_j,M_{-j},X)-A\hat\pi(X)f(M_{-j},X)-\frac{1}{4}g^2_{3}(M_j,M_{-j},X)\Big]\\
       &=\arg\max_{g_3}E\Big[A\hat\pi(X)\omega(M_j,M_{-j},X)g_{3}(M_j,M_{-j},X)-A\hat\pi(X)\omega_0(M_j,M_{-j},X)g_3(M_j,M_{-j},X)\\
       &~~~~~~~~~~-\frac{1}{4}g^2_{3}(M_j,M_{-j},X)\Big]\\
       &=\arg\max_{g_3}E\Big[-\Big(A\hat\pi(X)\big(\omega(M_j,M_{-j},X)-\omega_0(M_j,M_{-j},X)\big)-\frac{1}{2}g_3(M_j,M_{-j},X)\Big)^2\\
       &~~~~~~~~~~+\Big(A\hat\pi(X)\big(\omega(M_j,M_{-j},X)-\omega_0(M_j,M_{-j},X)\big)\Big)^2\Big]\\
       &=\arg\max_{g_3}E\Big[-\Big(A\hat\pi(X)\big(\omega(M_j,M_{-j},X)-\omega_0(M_j,M_{-j},X)\big)-\frac{1}{2}g_3(M_j,M_{-j},X)\Big)^2\Big]\\
       &=A\hat\pi(X)\big(\omega(M_j,M_{-j},X)-\omega_0(M_j,M_{-j},X)\big).
   \end{align*}

   Therefore,
   \begin{align*}
       \breve{\omega}&=\arg\min_{\omega}\max_{g_3}E\Big[A\hat\pi(X)\omega(M_j,M_{-j},X)g_{3}(M_j,M_{-j},X)-A\hat\pi(X)f(M_{-j},X)-\frac{1}{4}g^2_{3}(M_j,M_{-j},X)\Big]\\
       &=\arg\min_{\omega}E\Big[\Big(A\hat\pi(X)\big(\omega(M_j,M_{-j},X)-\omega_0(M_j,M_{-j},X)\big)\Big)^2\Big],
   \end{align*}
   which since zero is achievable by $\omega_0$, it implies that
   \begin{align*}
       \breve{\omega}=\omega_0.
   \end{align*}
\end{proof}

\begin{proof}[Proof of Proposition \ref{prop:omegasol}]

We first show that, Equation \eqref{eq:estg3} achieves its optimum at,
\begin{align*}
    \hat\mu_{\lambda,m_{-j},n_1}=\sum_{l:A_l=1}K_{\Gamma}(X_l,\cdot)\sum_{r:A_r=1}(K_{\Gamma,n_1}+\lambda I_{n_1})^{-1}_{rl}K_{\mathcal{G}_3}((M_{j,r},m_{-j},X_r),\cdot).
\end{align*}

   For any fixed vector $m_{-j}$, we first propose an estimator $\hat{f}(m_{-j},x)$ of $E[g_3(M_j,m_{-j},X)\mid A=1,X=x]$, for any function $g_3\in\mathcal{G}_3$.

The argument follows the work of \cite{grunewalder2012conditional}. Since conditional expectations $E[g_3(M_j,m_{-j},X)\mid A=1,X=x]$ are linear in the argument $g_3$, when we consider $g_3\in\mathcal{G}_3$, the Riesz representation theorem implies the existence of an element $\mu_{m_{-j}}(x)\in\mathcal{G}_3$, such that $E[g_3(M_j,m_{-j},X)\mid A=1,X=x]=\langle g_3,\mu_{m_{-j}}(x)\rangle_{\mathcal{G}_3}$ for all $g_3$. A natural optimization problem associated to this approximation problem is to therefore find a function $\mu_{m_{-j}}:\mathcal{X}\rightarrow \mathcal{G}_3$ such that the following objective is small,
\begin{align*}
    \mathcal{E}[\mu_{m_{-j}}]:=\sup_{\|g_3\|_{\mathcal{G}_3}\leq 1}E\Big[(E[g_3(M_j,m_{-j},X)\mid A=1,X]-\langle g_3,\mu_{m_{-j}}(X)\rangle_{\mathcal{G}_3})^2\Big|A=1\Big].
\end{align*}

Since we do not directly observe $E[g_3(M_j,m_{-j},X)\mid A=1,X]$, we cannot directly use $\mathcal{E}[\mu_{m_{-j}}]$ for estimation. Instead, we bound this risk function with a surrogate risk function with a sample-based version.
\begin{align*}
    &\sup_{\|g_3\|_{\mathcal{G}_3}\leq 1}E\Big[(E[g_3(M_j,m_{-j},X)\mid A=1,X]-\langle g_3,\mu_{m_{-j}}(X)\rangle_{\mathcal{G}_3})^2\Big|A=1\Big]\\
    &= \sup_{\|g_3\|_{\mathcal{G}_3}\leq 1}E\Big[(E[\langle g_3,K_{\mathcal{G}_3}\big((M_j,m_{-j},X),\cdot\big)\rangle_{\mathcal{G}_3}\mid A=1,X]-\langle g_3,\mu_{m_{-j}}(X)\rangle_{\mathcal{G}_3})^2\Big|A=1\Big]\\
    &\leq \sup_{\|g_3\|_{\mathcal{G}_3}\leq 1}E\Big[(\langle g_3,K_{\mathcal{G}_3}\big((M_j,m_{-j},X),\cdot\big)-\mu_{m_{-j}}(X)\rangle_{\mathcal{G}_3})^2\Big|A=1\Big]\\
    &\leq\sup_{\|g_3\|_{\mathcal{G}_3}\leq 1}\|g_3\|_{\mathcal{G}_3}^2E\Big[\|K_{\mathcal{G}_3}\big((M_j,m_{-j},X),\cdot\big)-\mu_{m_{-j}}(X)\|_{\mathcal{G}_3}^2\Big|A=1\Big]\\
    &=E\Big[\|K_{\mathcal{G}_3}\big((M_j,m_{-j},X),\cdot\big)-\mu_{m_{-j}}(X)\|_{\mathcal{G}_3}^2\Big|A=1\Big]\\
    &=:\mathcal{E}_s[\mu_{m_{-j}}].
\end{align*}
We then replace the expectation with empirical expectation to obtain the sample-based loss,
\begin{align*}
    \hat{\mathcal{E}}_n[\mu_{m_{-j}}]:=\sum_{r:A_r=1}\|K_{\mathcal{G}_3}\big((M_{j,r},m_{-j},X_r),\cdot\big)-\mu_{m_{-j}}(X_r)\|_{\mathcal{G}_3}^2.
\end{align*}

We further add a regularization term to provide a well-posed problem and prevent overfitting,
\begin{align*}
    \hat{\mathcal{E}}_{\lambda,n}[\mu_{m_{-j}}]:=\sum_{r:A_r=1}\|K_{\mathcal{G}_3}\big((M_{j,r},m_{-j},X_r),\cdot\big)-\mu_{m_{-j}}(X_r)\|_{\mathcal{G}_3}^2+\lambda\|\mu_{m_{-j}}\|_{\mathcal{H}_\Gamma}^2,
\end{align*}
where $\mathcal{H}_\Gamma$ is an RKHS of functions from $\mathcal{X}$ to $\mathcal{G}_3$ with elements defined as functions via $(g_3,K_{\Gamma}(x,\cdot))(x'):=K_{\Gamma}(x,x')g_3$, with inner product
\begin{align*}
    \langle g_{3,1}K_{\Gamma}(x,\cdot),g_{3,2}K_{\Gamma}(x',\cdot)\rangle_{\Gamma}:=\langle g_{3,1},g_{3,2}\rangle K_{\Gamma}(x,x'),
\end{align*}
for all $g_{3,1},g_{3,2}\in\mathcal{G}_3$.

We denote the minimizer of $\hat{\mathcal{E}}_{\lambda,n}[\mu_{m_{-j}}]$ by,
$\hat\mu_{\lambda,m_{-j},n}$, and it is of the form,
\begin{align*}
    \hat\mu_{\lambda,m_{-j},n}=\sum_{l:A_l=1}K_{\Gamma}(X_l,\cdot)\sum_{r:A_r=1}(K_{\Gamma,n_1}+\lambda I_{n_1})^{-1}_{rl}K_{\mathcal{G}_3}((M_{j,r},m_{-j},X_r),\cdot),
\end{align*}
where $K_{\Gamma,n_1}\in\mathbb{R}^{n_1\times n_1}$, $(K_{\Gamma,n_1})_{rl}=K_{\Gamma}(X_r,X_l)$, for $r,l$ such that $A_r=1$ and $A_l=1$. 

Therefore, we have,
\begin{align*}
\hat{f}(m_{-j},x)&=\langle g_3,\hat\mu_{\lambda,m_{-j},n}(x)\rangle_{\mathcal{G}_3}\\
&=\sum_{r:A_r=1}g_3(M_{j,r},m_{-j},X_r)\sum_{l:A_l=1}W_{rl}K_{\Gamma}(X_l,x),
\end{align*}
where $W=(K_{\Gamma,n_1}+\lambda I_{n_1})^{-1}\in\mathbb{R}^{n_1\times n_1}$.

We then show that, given $\hat{\pi}(X)$, for any function $\omega$, we have
\begin{align*}
&\max_{g_3\in\mathcal{G}_3} E_n[A\hat{\pi}(X)\omega(M_j,M_{-j},X)g_3(M_j,M_{-j},X)-A\hat{\pi}(X)\hat{f}(M_{-j},X)-\frac{1}{4}g_3^2(M_j,M_{-j},X)]-\lambda_{\mathcal{G}_3}\|g_3\|_{\mathcal{G}_3}^2\\
&=\frac{1}{4}\{K_{\mathcal{G}_3,n}\xi_n^1(\omega)-C\xi^2_n \}^\top \Big(\frac{1}{4n}K_{\mathcal{G}_3,n}^2+\lambda_{\mathcal{G}_3}K_{\mathcal{G}_3,n}\Big)^{-1}\{K_{\mathcal{G}_3,n}\xi_n^1(\omega)-C\xi^2_n \},
\end{align*}
where $\xi_n^1(\omega)=\frac{1}{n}\big(A_i\hat{\pi}(X_i)\omega(M_{j,i},M_{3-j,i},X_i)\big)_{i=1}^n$, $
    \xi_n^2=\frac{1}{n}\big(A_i\hat{\pi}(X_i)\big)_{i=1}^n$, $C\in\mathbb{R}^{n\times n}$, and $C_{si}=\sum_{r:A_r=1}K_{\mathcal{G}_3}\big((M_{j,s},M_{3-j,s},X_s),(M_{j,r},M_{3-j,i},X_r)\big)(WK_{\Gamma,n_1n})_{ri}$, in which $K_{\Gamma,n_1n}$ is an $n_1\times n$ matrix with $(K_{\Gamma,n_1n})_{rs}=K(X_r,X_s)$ for $r,s$ such that $A_r=1$ and $s=1,\dots,n$.

To see this, by representer theorem, the solution to the maximization will be of the form,

\begin{align*}
    g_3(m_j,m_{-j},x)=\sum_{i=1}^n\alpha_{g_3,i}K_{\mathcal{G}_3}((M_{j,i},M_{3-j,i},X_i),(m_j,m_{-j},x)),
\end{align*}

and it suffices to find the coefficients $\alpha_{g_3}=(\alpha_{g_3,i})_{i=1}^n$.

\begin{align*}
    \xi_n^1(\omega)&=\frac{1}{n}\big(A_i\hat{\pi}(X_i)\omega(M_{j,i},M_{3-j,i},X_i)\big)_{i=1}^n\\
    \xi_n^2&=\frac{1}{n}\big(A_i\hat{\pi}(X_i)\big)_{i=1}^n\\
    \hat{f}(M_{3-j,i},X_i)&=\sum_{r:A_r=1}g_3(M_{j,r},M_{3-j,i},X_r)(WK_{\Gamma,n_1n})_{ri}\\
    g_3(M_{j,r},M_{3-j,i},X_r)&=\sum_{s=1}^n\alpha_{g_3,s}K_{\mathcal{G}_3}\big((M_{j,s},M_{3-j,s},X_s),(M_{j,r},M_{3-j,i},X_r)\big)\\
    \hat{f}(M_{3-j,i},X_i)&=\sum_{r:A_r=1}\sum_{s=1}^n\alpha_{g_3,s}K_{\mathcal{G}_3}\big((M_{j,s},M_{3-j,s},X_s),(M_{j,r},M_{3-j,i},X_r)\big)(WK_{\Gamma,n_1n})_{ri}\\
    &=\sum_{s=1}^n\alpha_{g_3,s}\sum_{r:A_r=1}K_{\mathcal{G}_3}\big((M_{j,s},M_{3-j,s},X_s),(M_{j,r},M_{3-j,i},X_r)\big)(WK_{\Gamma,n_1n})_{ri}\\
    &=\sum_{s=1}^n\alpha_{g_3,s}C_{si}
\end{align*}

Therefore, we have,
\begin{align*}
    &E_n[A\hat{\pi}(X)\omega(M_j,M_{-j},X)g_3(M_j,M_{-j},X)-A\hat{\pi}(X)\hat{f}(M_{-j},X)]\\
    &=\frac{1}{n}\sum_{i=1}^nA_i\hat{\pi}(X_i)\omega(M_{j,i},M_{3-j,i},X_i)g_3((M_{j,i},M_{3-j,i},X_i))-A_i\hat{\pi}(X_i)\hat{f}(M_{3-j,i},X_i)\\
    &=\alpha_{g_3}^\top K_{\mathcal{G}_3,n}\{\xi_n^1(\omega)\}-\alpha_{g_3}^\top C\{\xi^2_n\}.
\end{align*}

\begin{align*}
    E_n[\frac{1}{4}g_3^2(M_j,M_{-j},X)]&=\frac{1}{4n}\alpha_{g_3}^\top K_{\mathcal{G}_3,n}^2\alpha_{g_3},
\end{align*}

\begin{align*}
    \|g_3\|_{\mathcal{G}_3}^2&=\alpha_{g_3}^\top K_{\mathcal{G}_3,n}\alpha_{g_3}
\end{align*}

Therefore we have,
\begin{align*}
&E_n[A\hat{\pi}(X)\omega(M_j,M_{-j},X)g_3(M_j,M_{-j},X)-A\hat{\pi}(X)\hat{f}(M_{-j},X)-\frac{1}{4}g_3^2(M_j,M_{-j},X)]-\lambda_{\mathcal{G}_3}\|g_3\|_{\mathcal{G}_3}^2\\
&=\alpha_{g_3}^\top(K_{\mathcal{G}_3,n}\{\xi_n^1(\omega)\}-C\{\xi^2_n\})-\alpha_{g_3}^\top (\frac{1}{4n}K_{\mathcal{G}_3,n}^2+\lambda_{\mathcal{G}_3}K_{\mathcal{G}_3,n})\alpha_{g_3}.
\end{align*}

Taking the derivative and setting it to zero, the optimal coefficients are obtained as
\begin{align*}
    \alpha_{g_3}=\frac{1}{2}(\frac{1}{4n}K_{\mathcal{G}_3,n}^2+\lambda_{\mathcal{G}_3}K_{\mathcal{G}_3,n})^{-1}\{K_{\mathcal{G}_3,n}\xi_n^1(\omega)-C\xi^2_n \},
\end{align*}

and we have,
\begin{align*}
&\max_{g_3\in\mathcal{G}_3} E_n[A\hat{\pi}(X)\omega(M_j,M_{-j},X)g_3(M_j,M_{-j},X)-A\hat{\pi}(X)\hat{f}(M_{-j},X)-\frac{1}{4}g_3^2(M_j,M_{-j},X)]-\lambda_{\mathcal{G}_3}\|g_3\|_{\mathcal{G}_3}^2\\
&=\frac{1}{4}\{K_{\mathcal{G}_3,n}\xi_n^1(\omega)-C\xi^2_n \}^\top \Big(\frac{1}{4n}K_{\mathcal{G}_3,n}^2+\lambda_{\mathcal{G}_3}K_{\mathcal{G}_3,n}\Big)^{-1}\{K_{\mathcal{G}_3,n}\xi_n^1(\omega)-C\xi^2_n \}.
\end{align*}

Therefore, the outer minimization problem is reduced to:
\begin{align*}
\hat\omega&=\arg\min_{\omega\in\Omega} \frac{1}{4}\{K_{\mathcal{G}_3,n}\xi_n^1(\omega)-C\xi^2_n \}^\top \Big(\frac{1}{4n}K_{\mathcal{G}_3,n}^2+\lambda_{\mathcal{G}_3}K_{\mathcal{G}_3,n}\Big)^{-1}\{K_{\mathcal{G}_3,n}\xi_n^1(\omega)-C\xi^2_n \}+\lambda_{\Omega}\|\omega\|_{\Omega}^2.
\end{align*}

By representer theorem, the solution to the maximization will be of the form,
\begin{align*}
    \omega(m_j,m_{-j},x)=\sum_{i=1}^n\alpha_{\omega,i}K_{\Omega}((M_{j,i},M_{3-j,i},X_i),(m_j,m_{-j},x)),
\end{align*}
and it suffices to find the coefficients $\alpha_{\omega}=(\alpha_{\omega,i})_{i=1}^n$.

Note that,
\begin{align*}
    K_{\mathcal{G}_3,n}\xi_n^1(\omega)-C\xi^2_n&=\frac{1}{n}K_{\mathcal{G}_3,n}diag(A\hat{\pi}(X))K_{\Omega,n}\alpha_{\omega}-\frac{1}{n}C(A\hat{\pi}(X))_n,\\
    \|\omega\|_{\Omega}^2&=\alpha_{\omega}^\top K_{\Omega,n}\alpha_{\omega}.
\end{align*}

Therefore, we have,
\begin{align*}
    &\min_{\omega\in\Omega} \frac{1}{4}\{K_{\mathcal{G}_3,n}\xi_n^1(\omega)-C\xi^2_n \}^\top \Big(\frac{1}{4n}K_{\mathcal{G}_3,n}^2+\lambda_{\mathcal{G}_3}K_{\mathcal{G}_3,n}\Big)^{-1}\{K_{\mathcal{G}_3,n}\xi_n^1(\omega)-C\xi^2_n \}+\lambda_{\Omega}\|\omega\|_{\Omega}^2\\
    &=\min_{\alpha_{\omega}\in\mathbb{R}^n}\frac{1}{n^2}\alpha_{\omega}^\top K_{\Omega,n}diag(A\hat{\pi}(X))K_{\mathcal{G}_3,n}\frac{1}{4}\Big(\frac{1}{4n}K_{\mathcal{G}_3,n}^2+\lambda_{\mathcal{G}_3}K_{\mathcal{G}_3,n}\Big)^{-1}K_{\mathcal{G}_3,n}diag(A\hat{\pi}(X))K_{\Omega,n}\alpha_{\omega}\\
    &~~~-\frac{2}{n^2}\big(C(A\hat{\pi}(X))_n\big)^{\top}\frac{1}{4}\Big(\frac{1}{4n}K_{\mathcal{G}_3,n}^2+\lambda_{\mathcal{G}_3}K_{\mathcal{G}_3,n}\Big)^{-1}K_{\mathcal{G}_3,n}diag(A\hat{\pi}(X))K_{\Omega,n}\alpha_{\omega}+\lambda_{\Omega}\alpha_{\omega}^\top K_{\Omega,n}\alpha_{\omega}+c,
\end{align*}
which is solved by
\begin{align*}
    \alpha_{\omega}&=\Bigg(K_{\Omega,n}diag(A\hat{\pi}(X))K_{\mathcal{G}_3,n}\frac{1}{4}\Big(\frac{1}{4n}K_{\mathcal{G}_3,n}^2+\lambda_{\mathcal{G}_3}K_{\mathcal{G}_3,n}\Big)^{-1}K_{\mathcal{G}_3,n}diag(A\hat{\pi}(X))K_{\Omega,n}+n^2\lambda_{\Omega}K_{\Omega,n}\Bigg)^{\dagger}\\
    &~~~~K_{\Omega,n}diag(A\hat{\pi}(X))K_{\mathcal{G}_3,n}\frac{1}{4}\Big(\frac{1}{4n}K_{\mathcal{G}_3,n}^2+\lambda_{\mathcal{G}_3}K_{\mathcal{G}_3,n}\Big)^{-1}C(A\hat{\pi}(X))_n\\
    &=\Bigg(K_{\Omega,n}diag(A\hat{\pi}(X))K_{\mathcal{G}_3,n}\frac{1}{4}\Big(\frac{1}{4n}K_{\mathcal{G}_3,n}+\lambda_{\mathcal{G}_3}I_n\Big)^{-1}diag(A\hat{\pi}(X))K_{\Omega,n}+n^2\lambda_{\Omega}K_{\Omega,n}\Bigg)^{\dagger}\\
    &~~~~K_{\Omega,n}diag(A\hat{\pi}(X))\frac{1}{4}\Big(\frac{1}{4n}K_{\mathcal{G}_3,n}+\lambda_{\mathcal{G}_3}I_n\Big)^{-1}C(A\hat{\pi}(X))_n.
\end{align*}
\end{proof}

\begin{proof}[Proof of Proposition \ref{prop:true_multi}]
   \begin{enumerate}[label=(\alph*)]
   \item For all $g_1\in L^2(P_{0,X})$, we have
   \begin{align*}
       E\left[\left\{A\pi(X)-1\right\}g_1(X)\right]
=E\left[\left\{E[A\mid X]\pi(X)-1\right\}g_1(X)\right]
=0.
   \end{align*}
   \item For all $g_2\in L^2(P_{0,M_j,X})$, we have
\begin{align*}
    &E\Big[A\pi(X)\rho(M_j,X)g_2(M_j,X)\Big]\\
    &=E\Big[\frac{p(A=1\mid M_j,X)}{p(A=1\mid X)}\frac{p(M_j\mid A=0,X)}{p(M_j\mid A=1,X)}g_2(M_j,X)\Big]\\
    &=E\Big[\frac{p(A=1,M_j\mid X)}{p(M_j\mid X)}\frac{p(M_j\mid A=0,X)}{p(M_j ,A=1\mid X)}g_2(M_j,X)\Big]\\
    &=E\Big[\frac{p(M_j\mid A=0,X)}{p(M_j\mid X)}g_2(M_j,X)\Big]\\
    &=E\Big[\frac{p(M_j,A=0\mid X)}{p(A=0\mid X)p(M_j\mid X)}g_2(M_j,X)\Big]\\
    &=E\Big[\frac{p(A=0\mid M_j,X)}{1-f(A=1\mid X)}g_2(M_j,X)\Big]\\
    &=E\Big[\frac{1-A}{1-\pi( X)}g_2(M_j,X)\Big].
\end{align*}
       \item For all $g_3\in L^2(P_{0,M_j,M_{-j},X})$, we have
\begin{align*}
    &E\Big[A\pi(X)\omega(M_j,M_{-j},X)g_3(M_j,M_{-j},X)\Big]\\
    &=E\Big[\frac{p(A=1\mid M_j,M_{-j},X)}{p(A=1\mid X)}\frac{p(M_j\mid A=1,X)p(M_{-j}\mid A=1,X)}{p(M_j,M_{-j}\mid A=1,X)}g_3(M_j,M_{-j},X)\Big]\\
    &=E\Big[\frac{p(X)p(M_j\mid A=1,X)p(M_{-j}\mid A=1,X)}{p(M_j,M_{-j},X)}g_3(M_j,M_{-j},X)\Big]\\
    &=\iiint\frac{p(x)p(m_j\mid A=1,x)p(m_{-j}\mid A=1,x)}{p(m_j,m_{-j},x)}g_3(m_j,m_{-j},x)\cdot p(m_j,m_{-j},x)dm_jdm_{-j}dx\\
    &=\int\Big(\iint p(m_j\mid A=1,x)p(m_{-j}\mid A=1,x)g_3(m_j,m_{-j},x)dm_jdm_{-j}\Big)p(x)dx\\
    &=\int\Big(\int f(m_{-j},x)p(m_{-j}\mid A=1,x)dm_{-j}\Big)p(x)dx\\
    &=E\Big[\E[f(M_{-j},X)\mid A=1,X]\Big]\\
    &=E\Big[A\pi(X)f(M_{-j},X)\Big],
\end{align*}
where $f(m_{-j},x)=E[g_3(M_j,m_{-j},X)\mid A=1,X=x]$.
   \end{enumerate}
\end{proof}

\begin{proof}[Proof of Proposition \ref{prop:MR_multi}]
\begin{align*}
    &E[\hat\Delta^{M_j,\text{2S}}]-\Delta^{M_j}\\
    &=E\Big[\Big\{A\hat{\pi}(X)-1\Big\}E\big[\hat{R}(M_j,M_{-j},X)\big\{\mu(M_j,M_{-j},X)-\hat{\mu}(M_j,M_{-j},X)\big\}\mid A=1,X\big]\Big]\\
    &~~~~~~~~~+E\Big[\Big\{A\hat{\pi}(X)-1\Big\}
    \Big\{E[\hat{\eta}_2(M_{-j},X)\mid A=1,X]-\hat{\gamma}_1(X)\Big\}\Big]\\
    &~~~~~~~~~+E\Big[\Big\{\frac{(1-A)\hat{\pi}(X)}{\hat{\pi}(X)-1}-1\Big\}\Big\{\hat{\gamma}_{1,0}(X)-E[\hat{\eta}_1(M_j,X)\mid A=0,X]\Big\}\Big]\\
    &~~~~~~~~~+{E\Big[E\big[\{\hat{R}(M_j,M_{-j},X)-R(M_j,M_{-j},X)\}\big\{\mu(M_j,M_{-j},X)-\hat{\mu}(M_j,M_{-j},X)\big\}\mid A=1,X\big]\Big]}\\
&~~~~~~~~~+E\Big[E\big[\{\eta_1^\ast(M_j,X)-\hat{\eta}_1(M_j,X)\}\big\{\rho(M_j,X)-\hat{\rho}(M_j,X)\big\}\mid A=1,X\big]\Big]\\
&=E\Big[\underbrace{\Big\{p(A=1\mid X)\hat{\pi}(X)-1\Big\}}_{B_{11}}\underbrace{E\big[\hat{R}(M_j,M_{-j},X)\big\{\mu(M_j,M_{-j},X)-\hat{\mu}(M_j,M_{-j},X)\big\}\mid A=1,X\big]}_{B_{12}}\Big]\\
    &~~~~~~~~~+E\Big[\underbrace{\Big\{p(A=1\mid X)\hat{\pi}(X)-1\Big\}}_{B_{21}}
    \underbrace{\Big\{E[\hat{\eta}_2(M_{-j},X)\mid A=1,X]-\hat{\gamma}_1(X)\Big\}}_{B_{22}}\Big]\\
    &~~~~~~~~~+E\Big[\underbrace{\Big\{\frac{p(A=0\mid X)\hat{\pi}(X)}{\hat{\pi}(X)-1}-1\Big\}}_{B_{31}}\underbrace{\Big\{\hat{\gamma}_{1,0}(X)-E[\hat{\eta}_1(M_j,X)\mid A=0,X]\Big\}}_{B_{32}}\Big]\\
    &~~~~~~~~~+E\Big[E\big[\underbrace{\{\big(1-\hat\rho(M_j,X)\big)\hat\omega(M_j,M_{-j},X))-\big(1-\rho(M_j,X)\big)\omega(M_j,M_{-j},X)\}}_{B_{41}}\\
    &~~~~~~~~~~~~~~~~~~~~~\times\underbrace{\big\{\mu(M_j,M_{-j},X)-\hat{\mu}(M_j,M_{-j},X)\big\}}_{B_{42}}\mid A=1,X\big]\Big]\\
&~~~~~~~~~+E\Big[E\big[\underbrace{\{E_{M_{-j}\mid A=1,X}[\hat\mu(M_j,M_{-j},X)-\mu(M_j,M_{-j},X)\mid A=1,X]+\eta_1(M_j,X)-\hat{\eta}_1(M_j,X)\}}_{B_{51}}\\
&~~~~~~~~~~~~~~~~~~~~~\times\underbrace{\big\{\rho(M_j,X)-\hat{\rho}(M_j,X)\big\}}_{B_{52}}\mid A=1,X\big]\Big].
\end{align*}

\begin{enumerate}[label=(\roman*)]
\item If $\hat\pi=\pi,\hat\rho=\rho$, and $\hat\omega=\omega$, we have $B_{11}=B_{21}=B_{31}=B_{41}=B_{52}=0$. Therefore,
\begin{align*}
    E[\hat\Delta^{M_j,\text{2S}}]-\Delta^{M_j}=0,
\end{align*}
and $\hat\Delta^{M_j,\text{2S}}$ is unbiased.
 
\item If $\hat\pi=\pi,\hat\rho=\rho$, and $\hat\mu=\mu$, we have $B_{11}=B_{21}=B_{31}=B_{42}=B_{52}=0$. Therefore,
\begin{align*}
    E[\hat\Delta^{M_j,\text{2S}}]-\Delta^{M_j}=0,
\end{align*}
and $\hat\Delta^{M_j,\text{2S}}$ is unbiased.
\item If $\hat\pi=\pi,\hat\mu=\mu$, and $\hat\eta_1=\eta_1$, we have $B_{11}=B_{21}=B_{31}=B_{42}=B_{51}=0$. Therefore,
\begin{align*}
    E[\hat\Delta^{M_j,\text{2S}}]-\Delta^{M_j}=0,
\end{align*}
and $\hat\Delta^{M_j,\text{2S}}$ is unbiased.
\item If $\hat\mu=\mu,\hat\rho=\rho,\hat\omega=\omega$, and $\hat\eta_1=\eta_1$, we have $B_{12}=B_{42}=B_{52}=0$. We also have,
\begin{align*}
    B_{22}&=E[\hat{\eta}_2(M_{-j},X)\mid A=1,X]-\hat{\gamma}_1(X)\\
    &=E_{M_{-j}\mid A=1,X}[E_{M_j\mid A=1,X}[\hat\mu(M_j,M_{-j},X)\big(1-\hat\rho(M_j,X)\big)\mid A=1,X]\mid A=1,X]\\
    &~~~~~-E[\hat\eta_1(M_j,X)(1-\hat\rho(M_j,X))\mid A=1,X]\\
    &=E_{M_{-j}\mid A=1,X}[E_{M_j\mid A=1,X}[\hat\mu(M_j,M_{-j},X)\big(1-\hat\rho(M_j,X)\big)\mid A=1,X]\mid A=1,X]\\
    &~~~~~-E_{M_{-j}\mid A=1,X}[E_{M_j\mid A=1,X}[\mu(M_j,M_{-j},X)\big(1-\hat\rho(M_j,X)\big)\mid A=1,X]\mid A=1,X]\\
    &~~~~~+E_{M_{-j}\mid A=1,X}[E_{M_j\mid A=1,X}[\mu(M_j,M_{-j},X)\big(1-\hat\rho(M_j,X)\big)\mid A=1,X]\mid A=1,X]\\
    &~~~~~-E_{M_{-j}\mid A=1,X}[E_{M_j\mid A=1,X}[\mu(M_j,M_{-j},X)\big(1-\rho(M_j,X)\big)\mid A=1,X]\mid A=1,X]\\
    &~~~~~+E_{M_{-j}\mid A=1,X}[E_{M_j\mid A=1,X}[\mu(M_j,M_{-j},X)\big(1-\rho(M_j,X)\big)\mid A=1,X]\mid A=1,X]\\
    &~~~~~-E[\hat\eta_1(M_j,X)(1-\hat\rho(M_j,X))\mid A=1,X]\|_2\\
    &=E_{M_{-j}\mid A=1,X}[E_{M_j\mid A=1,X}[\big(\hat\mu(M_j,M_{-j},X)-\mu(M_j,M_{-j},X)\big)\big(1-\hat\rho(M_j,X)\big)\mid A=1,X]\mid A=1,X]\\
    &~~~~~+E_{M_{-j}\mid A=1,X}[E_{M_j\mid A=1,X}[\mu(M_j,M_{-j},X)\big(\rho(M_j,X)-\hat\rho(M_j,X)\big)\mid A=1,X]\mid A=1,X]\\
    &~~~~~+E[\eta_1(M_j,X)(1-\rho(M_j,X))\mid A=1,X]-E[\hat\eta_1(M_j,X)(1-\hat\rho(M_j,X))\mid A=1,X]\\
    &=E_{M_{-j}\mid A=1,X}[E_{M_j\mid A=1,X}[\big(\underbrace{\hat\mu(M_j,M_{-j},X)-\mu(M_j,M_{-j},X)}_{=0}\big)\big(1-\hat\rho(M_j,X)\big)\mid A=1,X]\mid A=1,X]\\
    &~~~~~+E_{M_{-j}\mid A=1,X}[E_{M_j\mid A=1,X}[\mu(M_j,M_{-j},X)\big(\underbrace{\rho(M_j,X)-\hat\rho(M_j,X)}_{=0}\big)\mid A=1,X]\mid A=1,X]\\
    &~~~~~+E[\eta_1(M_j,X)(\underbrace{\hat\rho(M_j,X)-\rho(M_j,X)}_{=0})\mid A=1,X]\\
    &~~~~~+E[\big(\underbrace{\eta_1(M_j,X)-\hat\eta_1(M_j,X)}_{=0}\big)(1-\hat\rho(M_j,X))\mid A=1,X]\\
    &=0,
\end{align*}
and
\begin{align*}
    B_{32}&=\hat{\gamma}_{1,0}(X)-E[\hat{\eta}_1(M_j,X)\mid A=0,X]\\
    &=E[\hat\eta_1(M_j,X)\hat\rho(M_j,X)\mid A=1,X]-E[\hat{\eta}_1(M_j,X)\mid A=0,X]\\
    &=E[\hat\eta_1(M_j,X)\hat\rho(M_j,X)\mid A=1,X]-E[\hat\eta_1(M_j,X)\rho(M_j,X)\mid A=1,X]\\
    &=E[\hat\eta_1(M_j,X)\big(\hat\rho(M_j,X)-\rho(M_j,X)\big)\mid A=1,X]\\
    &=0.
\end{align*}
Therefore,
\begin{align*}
    E[\hat\Delta^{M_j,\text{2S}}]-\Delta^{M_j}=0,
\end{align*}
and $\hat\Delta^{M_j,\text{2S}}$ is unbiased.
\end{enumerate}
\end{proof}

\begin{proof}[Proof of Theorem \ref{thm:main}]
\begingroup
\allowdisplaybreaks	
\begin{align*}
&E[\hat\psi^\text{2S}]-\psi_0\\
&=E\left[ \hat\mu_2(X)+(1-A)\hat\pi_1(X)\{\hat\mu_1(M,X)-\hat\mu_2(X)\}+A\hat\pi_2(M,X)\{Y-\hat\mu_1(M,X)\} \right]-E[\mu_2(X)]\\
&=E\left[ \hat\mu_2(X)-\mu_2(X)+(1-A)\hat\pi_1(X)\{\mu_1(M,X)-\hat\mu_2(X)\}\right]\\
&~~~+E\left[A\hat\pi_2(M,X)\{Y-\hat\mu_1(M,X)\}-(1-A)\hat\pi_1(X)\{\mu_1(M,X)-\hat\mu_1(M,X)\}\right]\\
&=E\left[ (1-A)\hat\pi_1(X)\{\mu_1(M,X)-\hat\mu_2(X)\}-\{\mu_2(X)-\hat\mu_2(X)\}\right]\\
&~~~+E\left[A\hat\pi_2(M,X)\{Y-\hat\mu_1(M,X)\}-(1-A)\hat\pi_1(X)\{\mu_1(M,X)-\hat\mu_1(M,X)\} \right]\\
&=E\left[ E\left[(1-A)\hat\pi_1(X)\{\mu_1(M,X)-\hat\mu_2(X)\}\bigg|X\right]-\{\mu_2(X)-\hat\mu_2(X)\}\right]\\
&~~~+E\left[E\left[A\hat\pi_2(M,X)\{Y-\hat\mu_1(M,X)\}\bigg|M,X\right]-(1-A)\hat\pi_1(X)\{\mu_1(M,X)-\hat\mu_1(M,X)\} \right]\\
&=E\left[ E[(1-A)\mu_1(M,X)\mid X]\hat\pi_1(X)-\frac{\hat\pi_1(X)\hat\mu_2(X)}{\pi_1(X)}-\{\mu_2(X)-\hat\mu_2(X)\}\right]\\
&~~~+E\Big[\hat\pi_2(M,X)\big(E[AY\mid M,X]-E[A\mid M,X]\hat\mu_1(M,X)\big)-(1-A)\hat\pi_1(X)\{\mu_1(M,X)-\hat\mu_1(M,X)\}\Big]\\
&=E\left[ E[(1-A)\mid X]\{\mu_2(X)-\hat\mu_2(X)\}\hat\pi_1(X)-\{\mu_2(X)-\hat\mu_2(X)\}\right]\\
&~~~+E\left[\hat\pi_2(M,X)E[A\mid M,X]\{\mu_1(M,X)-\hat\mu_1(M,X)\}-(1-A)\hat\pi_1(X)\{\mu_1(M,X)-\hat\mu_1(M,X)\} \right]\\
&=E\left[ \left\{(1-A)\hat\pi_1(X)-1\right\}\{\mu_2(X)-\hat\mu_2(X)\}\right]\\
&~~~+E\left[\left\{A\hat\pi_2(M,X)-(1-A)\hat\pi_1(X)\right\}\{\mu_1(M,X)-\hat\mu_1(M,X)\} \right].
\end{align*}
\endgroup
\end{proof}

\begin{proof}[Proof of Theorem \ref{thm:CAN}]

Assume the size of each fold of the data is $m$, that is, $n=mL$. Let $(\hat{\pi}_{1,\ell},\hat{\pi}_{2,\ell},\hat{\mu}_{1,\ell},\hat{\mu}_{2,\ell})$ be the estimators of $(\pi_1,\pi_2,\mu_1,\mu_2)$ obtained on subsamples $I_\ell^c$. Then $\hat\psi^\text{2S}_\ell$ can be written as
\begin{align*}
    \hat\psi^\text{2S}_\ell=E_m\left[ \phi(O;\hat{\pi}_{1,\ell},\hat{\pi}_{2,\ell},\hat{\mu}_{1,\ell},\hat{\mu}_{2,\ell}) \right].
\end{align*}
Therefore, we have
\begin{align*}
    \sqrt{n}(\hat\psi^\text{2S}-\psi_0)&=\sqrt{n}\Big\{\frac{1}{L}\sum_{\ell=1}^LE_m\left[ \phi(O;\hat{\pi}_{1,\ell},\hat{\pi}_{2,\ell},\hat{\mu}_{1,\ell},\hat{\mu}_{2,\ell})\right]-\psi_0\Big\}\\
    &=\frac{\sqrt{n}}{L\sqrt{m}}\sum_{\ell=1}^L\Bigg\{\left\{\mathbb{G}_m\left[ \phi(O;\hat{\pi}_{1,\ell},\hat{\pi}_{2,\ell},\hat{\mu}_{1,\ell},\hat{\mu}_{2,\ell})\right]-\mathbb{G}_m\left[ \phi(O;\pi_1,\pi_2,\mu_1,\mu_2)\right]\right\}\\
    &\quad\quad\quad\quad\quad\quad\quad +\mathbb{G}_m\left[ \phi(O;\pi_1,\pi_2,\mu_1,\mu_2)\right]\\
    &\quad\quad\quad\quad\quad\quad\quad +\sqrt{m}\Big\{E\left[ \phi(O;\hat{\pi}_{1,\ell},\hat{\pi}_{2,\ell},\hat{\mu}_{1,\ell},\hat{\mu}_{2,\ell})\right]-\psi_0\Big\}\Bigg\},
\end{align*}
where for each function $f(O)$,
\begin{align*}
    \mathbb{G}_n[f]=\sqrt{n}\{E_n[f]-E[f]\},
\end{align*}
represents the empirical process. Recalling that $n=mL$, we have
\begin{align*}
    \sqrt{n}(\hat\psi^\text{2S}-\psi_0)&=\frac{1}{\sqrt{L}}\sum_{\ell=1}^L\left\{\mathbb{G}_m\left[ \phi(O;\hat{\pi}_{1,\ell},\hat{\pi}_{2,\ell},\hat{\mu}_{1,\ell},\hat{\mu}_{2,\ell})\right]-\mathbb{G}_m\left[ \phi(O;\pi_1,\pi_2,\mu_1,\mu_2)\right]\right\} &(T_1)\\
    &\quad +\frac{1}{\sqrt{L}}\sum_{\ell=1}^L\mathbb{G}_m\left[ \phi(O;\pi_1,\pi_2,\mu_1,\mu_2)\right] &(T_2)\\
    &\quad +\frac{1}{\sqrt{L}}\sum_{\ell=1}^L\sqrt{m}\Big\{E\left[ \phi(O;\hat{\pi}_{1,\ell},\hat{\pi}_{2,\ell},\hat{\mu}_{1,\ell},\hat{\mu}_{2,\ell})\right]-\psi_0\Big\}. &(T_3)
\end{align*}
We will show that $(T_2)$ provides us with the term $\sqrt{n}E_n\left[ \phi(O;\pi_1,\pi_2,\mu_1,\mu_2) \right]$, and under the assumptions of the theorem, $(T_1)$ and $(T_3)$ are $o_p(1)$.

\medskip
\noindent
\textbf{Analysis of $T_1$}

Define
\begin{align*}
    A_m^\ell :=\left\{\mathbb{G}_m\left[ \phi(O;\hat{\pi}_{1,\ell},\hat{\pi}_{2,\ell},\hat{\mu}_{1,\ell},\hat{\mu}_{2,\ell})\right]-\mathbb{G}_m\left[ \phi(O;\pi_1,\pi_2,\mu_1,\mu_2)\right]\right\},
\end{align*}
where the empirical processes are evaluated on $I_\ell$. We note that
\allowdisplaybreaks
\begin{align*}
    &var\left(A_m^\ell\mid I_\ell^c\right)\\
    &=m\cdot var\left(E_m\left[ \phi(O;\hat{\pi}_{1,\ell},\hat{\pi}_{2,\ell},\hat{\mu}_{1,\ell},\hat{\mu}_{2,\ell})\right]-E_m\left[ \phi(O;\pi_1,\pi_2,\mu_1,\mu_2)\right]\mid I_\ell^c\right)\\
    &=var\big(\phi(O;\hat{\pi}_{1,\ell},\hat{\pi}_{2,\ell},\hat{\mu}_{1,\ell},\hat{\mu}_{2,\ell})-\phi(O;\pi_1,\pi_2,\mu_1,\mu_2)\mid I_\ell^c\big)\\
    &\leq E\big[\{\phi(O;\hat{\pi}_{1,\ell},\hat{\pi}_{2,\ell},\hat{\mu}_{1,\ell},\hat{\mu}_{2,\ell})-\phi(O;\pi_1,\pi_2,\mu_1,\mu_2)\}^2\mid I_\ell^c\big]\\
    &=E\Big[\Big\{\hat\mu_{2,\ell}(X)+(1-A)\hat\pi_{1,\ell}(X)\{\hat\mu_{1,\ell}(M,X)-\hat\mu_{2,\ell}(X)\}+A\hat\pi_{2,\ell}(M,X)\{Y-\hat\mu_{1,\ell}(M,X)\}\\
    &\quad\quad -\mu_2(X)-(1-A)\pi_1(X)\{\mu_1(M,X)-\mu_2(X)\}-A\pi_2(M,X)\{Y-\mu_1(M,X)\}\Big\}^2\Big| I_\ell^c\Big]\\
    &=E\Big[\Big\{\hat\mu_{2,\ell}(X)\left(1-(1-A)\hat\pi_{1,\ell}(X)\right)+\hat\mu_{1,\ell}(M,X)\left((1-A)\hat\pi_{1,\ell}(X)-A\hat\pi_{2,\ell}(M,X)\right)+AY\hat\pi_{2,\ell}(M,X)\\
    &\quad\quad -\mu_2(X)\left(1-(1-A)\pi_1(X)\right)-\mu_1(M,X)\left((1-A)\pi_1(X)-A\pi_2(M,X)\right)-AY\pi_2(M,X)\Big\}^2\Big| I_\ell^c\Big]\\
    &\leq5\Big(\|\hat\mu_{2,\ell}(X)-\mu_2(X)\|_2^2+\|(1-A)(\hat\mu_{2,\ell}(X)\hat\pi_{1,\ell}(X)-\mu_2(X)\pi_1(X))\|_2^2\\
    &\quad\quad+\|(1-A)(\hat\mu_{1,\ell}(M,X)\hat\pi_{1,\ell}(X)-\mu_1(M,X)\pi_1(X))\|_2^2\\
    &\quad\quad+\|A(\hat\mu_{1,\ell}(M,X)\hat\pi_{2,\ell}(M,X)-\mu_1(M,X)\pi_2(M,X))\|_2^2+\|AY(\hat\pi_{2,\ell}(M,X)-\pi_{2}(M,X))\|_2^2\Big)\\
    &\leq5\Big(\|\hat\mu_{2,\ell}(X)-\mu_2(X)\|_2^2+\|\hat\mu_{2,\ell}(X)\hat\pi_{1,\ell}(X)-\mu_2(X)\pi_1(X)\|_2^2\\
    &\quad\quad+\|\hat\mu_{1,\ell}(M,X)\hat\pi_{1,\ell}(X)-\mu_1(M,X)\pi_1(X)\|_2^2\\
    &\quad\quad+\|\hat\mu_{1,\ell}(M,X)\hat\pi_{2,\ell}(M,X)-\mu_1(M,X)\pi_2(M,X)\|_2^2+\|Y(\hat\pi_{2,\ell}(M,X)-\pi_{2}(M,X))\|_2^2\Big)\\
    &\leq10\Big(\|\hat\mu_{2,\ell}(X)-\mu_2(X)\|_2^2+\|\hat\mu_{2,\ell}(X)(\hat\pi_{1,\ell}(X)-\pi_{1}(X))\|_2^2+\|\pi_{1}(X)(\hat\mu_{2,\ell}(X)-\mu_2(X))\|_2^2\\
    &\quad\quad+\|\hat\mu_{1,\ell}(M,X)(\hat\pi_{1,\ell}(X)-\pi_{1}(X))\|_2^2+\|\pi_1(X)(\hat\mu_{1,\ell}(M,X)-\mu_1(M,X))\|_2^2\\
    &\quad\quad+\|\hat\mu_{1,\ell}(M,X)(\hat\pi_{2,\ell}(M,X)-\pi_2(M,X))\|_2^2+\|\pi_2(M,X)(\hat\mu_{1,\ell}(M,X)-\mu_1(M,X))\|_2^2\\
    &\quad\quad+\|Y(\hat\pi_{2,\ell}(M,X)-\pi_{2}(M,X))\|_2^2\Big)\\
    &\leq10\Big(\|\hat\mu_{2,\ell}(X)-\mu_2(X)\|_2^2+\|\hat\mu_{2,\ell}^2(X)\|_2\|(\hat\pi_{1,\ell}(X)-\pi_{1}(X))^2\|_2+\|\pi_{1}(X)(\hat\mu_{2,\ell}(X)-\mu_2(X))\|_2^2\\
    &\quad\quad+\|\hat\mu_{1,\ell}^2(M,X)\|_2\|(\hat\pi_{1,\ell}(X)-\pi_{1}(X))^2\|_2+\|\pi_1(X)(\hat\mu_{1,\ell}(M,X)-\mu_1(M,X))\|_2^2\\
    &\quad\quad+\|\hat\mu_{1,\ell}^2(M,X)\|_2\|(\hat\pi_{2,\ell}(M,X)-\pi_2(M,X))^2\|_2+\|\pi_2(M,X)(\hat\mu_{1,\ell}(M,X)-\mu_1(M,X))\|_2^2\\
    &\quad\quad+\|Y^2\|_2\|(\hat\pi_{2,\ell}(M,X)-\pi_{2}(M,X))^2\|_2\Big).
    \end{align*}

    By Assumption \ref{assm:CAN} (a) and (b), $\|\mu_1(M,X)-\hat\mu_{1,\ell}(M,X)\|_2$, $\|\hat\mu_{2,\ell}(X)-\mu_2(X)\|_2$, $\|\hat\mu_{1,\ell}^2(M,X)-\mu_{1}^2(M,X)\|_2$, $\|\hat\mu_{2,\ell}^2(X)-\mu_{2}^2(X)\|_2$, $\|(\hat\pi_{1,\ell}(X)-\pi_1(X))^2\|_2$, and $\|(\hat\pi_{2,\ell}(M,X)-\pi_2(M,X))^2\|_2$ converge to zero in probability. By Assumption \ref{ass:stg_pos},
    \begin{align*}
        \|\pi_{1}(X)(\hat\mu_{2,\ell}(X)-\mu_2(X))\|_2&\leq\frac{1}{\epsilon}\|\mu_2(X)-\hat\mu_{2,\ell}(X)\|_2,\\
        \|\pi_{1}(X)(\hat\mu_{1,\ell}(M,X)-\mu_1(M,X))\|_2&\leq\frac{1}{\epsilon}\|\mu_1(M,X)-\hat\mu_{1,\ell}(M,X)\|_2,\\
        \|\pi_{2}(M,X)(\hat\mu_{1,\ell}(M,X)-\mu_1(M,X))\|_2&=\|\frac{p(A=0\mid M,X)}{p(A=1\mid M,X)}\pi_1(X)(\mu_1(M,X)-\hat\mu_{1,\ell}(M,X))\|_2\\
        &\leq\frac{1}{\epsilon^2}\|\mu_1(M,X)-\hat\mu_{1,\ell}(M,X)\|_2.
    \end{align*}
    Therefore, $\|\pi_{1}(X)(\hat\mu_{2,\ell}(X)-\mu_2(X))\|_2$, $\|\pi_{1}(X)(\hat\mu_{1,\ell}(M,X)-\mu_1(M,X))\|_2$, and $\|\pi_{2}(M,X)(\hat\mu_{1,\ell}(M,X)-\mu_1(M,X))\|_2$ also converge to zero in probability.
    
    By Assumption \ref{assm:CAN} (c), $\|\mu_{1}^2(M,X)\|_2$, $\|\mu_{2}^2(X)\|_2$, and $\|Y^2\|_2$ are bounded by some constants, which imply that
\begin{align}
\|\hat\mu_{1,\ell}^2(M,X)\|_2&\leq\|\hat\mu_{1,\ell}^2(M,X)-\mu_{1}^2(M,X)\|_2+\|\mu_{1}^2(M,X)\|_2\leq o_p(1)+C=O_p(1), \label{ineq_mu1hat}\\
    \|\hat\mu_{2,\ell}^2(X)\|_2&\leq\|\hat\mu_{2,\ell}^2(X)-\mu_{2}^2(X)\|_2+\|\mu_{2}^2(X)\|_2\leq o_p(1)+C=O_p(1). \label{ineq_mu2hat}
\end{align}

Therefore,
\begin{align*}
    &var\left(A_m^\ell\mid I_\ell^c\right)=o_p(1).
\end{align*}
 Also note that,
 \begin{align*}
     &E\left[A_m^\ell\mid I_\ell^c\right]\\
     &=E\Big[\hat\mu_{2,\ell}(X)+(1-A)\hat\pi_{1,\ell}(X)\{\hat\mu_{1,\ell}(M,X)-\hat\mu_{2,\ell}(X)\}+A\hat\pi_{2,\ell}(M,X)\{Y-\hat\mu_{1,\ell}(M,X)\}\\
    &\quad\quad -\mu_2(X)-(1-A)\pi_1(X)\{\mu_1(M,X)-\mu_2(X)\}-A\pi_2(M,X)\{Y-\mu_1(M,X)\}\Big| I_\ell^c\Big]\\
    &=E\Big[\hat\mu_{2,\ell}(X)\left(1-(1-A)\hat\pi_{1,\ell}(X)\right)+\hat\mu_{1,\ell}(M,X)\left((1-A)\hat\pi_{1,\ell}(X)-A\hat\pi_{2,\ell}(M,X)\right)+AY\hat\pi_{2,\ell}(M,X)\\
    &\quad\quad -\mu_2(X)\left(1-(1-A)\pi_1(X)\right)-\mu_1(M,X)\left((1-A)\pi_1(X)-A\pi_2(M,X)\right)-AY\pi_2(M,X)\Big| I_\ell^c\Big]\\
    &=E\Big[\hat\mu_{2,\ell}(X)\left(1-p(A=0\mid X)\hat\pi_{1,\ell}(X)\right)\\
    &\quad\quad +\hat\mu_{1,\ell}(M,X)\left(p(A=0\mid M, X)\hat\pi_{1,\ell}(X)-p(A=1\mid M, X)\hat\pi_{2,\ell}(M,X)\right)\\
    &\quad\quad -\mu_2(X)\underbrace{\left(1-p(A=0\mid X)\pi_1(X)\right)}_{=0}\\
    &\quad\quad-\mu_1(M,X)\underbrace{\left(p(A=0\mid M, X)\pi_1(X)-p(A=1\mid M, X)\pi_2(M,X)\right)}_{=0}\\
    &\quad\quad +\E[AY\mid M,X]\hat\pi_{2,\ell}(M,X)-\E[AY\mid M,X]\pi_2(M,X)\Big| I_\ell^c\Big]\\
    &=E\Big[\hat\mu_{2,\ell}(X)\left(1-p(A=0\mid X)\hat\pi_{1,\ell}(X)\right)\\
    &\quad\quad+\hat\mu_{1,\ell}(M,X)\left(p(A=0\mid M, X)\hat\pi_{1,\ell}(X)-p(A=1\mid M, X)\hat\pi_{2,\ell}(M,X)\right)\\
    &\quad\quad +p(A=1\mid M,X)\mu_1(M,X)\hat\pi_{2,\ell}(M,X)-p(A=1\mid M,X)\mu_1(M,X)\pi_2(M,X)\Big| I_\ell^c\Big].
    \end{align*}

    By Jensen's inequality and Cauchy-Schwarz inequality, 
    \begin{align*}
     &E\left[A_m^\ell\mid I_\ell^c\right]\\
    &\leq\Big\|\hat\mu_{2,\ell}(X)\left(1-p(A=0\mid X)\hat\pi_{1,\ell}(X)\right)\\
    &\quad\quad+\hat\mu_{1,\ell}(M,X)\left(p(A=0\mid M, X)\hat\pi_{1,\ell}(X)-p(A=1\mid M, X)\hat\pi_{2,\ell}(M,X)\right)\\
    &\quad\quad +p(A=1\mid M,X)\mu_1(M,X)\hat\pi_{2,\ell}(M,X)-p(A=1\mid M,X)\mu_1(M,X)\pi_2(M,X)\Big\|_2\\
    &=\Big\|\hat\mu_{2,\ell}(X)\left(1-p(A=0\mid X)\hat\pi_{1,\ell}(X)\right)\\
    &\quad\quad+\hat\mu_{1,\ell}(M,X)(p(A=0\mid M, X)\hat\pi_{1,\ell}(X)-p(A=0\mid M, X)\pi_{1}(X)\\
    &\quad\quad\quad+p(A=1\mid M, X)\pi_{2}(M,X)-p(A=1\mid M, X)\hat\pi_{2,\ell}(M,X))\\
    &\quad\quad +p(A=1\mid M,X)\mu_1(M,X)(\hat\pi_{2,\ell}(M,X)-\pi_2(M,X))\Big\|_2\\
    &\leq\Big\|\hat\mu_{2,\ell}(X)\left(1-p(A=0\mid X)\hat\pi_{1,\ell}(X)\right)\Big\|_2\\
    &\quad\quad+\Big\|\hat\mu_{1,\ell}(M,X)p(A=0\mid M, X)\left(\hat\pi_{1,\ell}(X)-\pi_{1,\ell}(X)\right)\Big\|_2\\
    &\quad\quad+\Big\|\mu_{1,\ell}(M,X)p(A=1\mid M, X)(\pi_{2}(M,X)-\hat\pi_{2,\ell}(M,X))\Big\|_2\\
    &\quad\quad +\Big\|p(A=1\mid M,X)\mu_1(M,X)(\hat\pi_{2,\ell}(M,X)-\pi_2(M,X))\Big\|_2\\
    &\leq\Big\|\hat\mu_{2,\ell}^2(X)\Big\|_2^{1/2}\Big\|(1-p(A=0\mid X)\hat\pi_{1,\ell}(X))^2\Big\|_2^{1/2}\\
    &\quad\quad +\Big\|(p(A=0\mid M, X)\hat\mu_{1,\ell}(M,X))^2\Big\|_2^{1/2}\Big\|\big(\hat\pi_{1,\ell}(X)-\pi_{1}(X)\big)^2\Big\|_2^{1/2}\\
    &\quad\quad +\Big\|(p(A=1\mid M, X)\hat\mu_{1,\ell}(M,X))^2\Big\|_2^{1/2}\Big\|\big(\hat\pi_{2,\ell}(M,X)-\pi_{2}(M,X)\big)^2\Big\|_2^{1/2}\\
    &\quad\quad +\Big\|(p(A=1\mid M,X)\mu_{1}(M,X))^2\Big\|_2^{1/2}\Big\|(\hat\pi_{2,\ell}(M,X)-\pi_2(M,X))^2\Big\|_2^{1/2}\\
    &\leq\Big\|\hat\mu_{2,\ell}^2(X)\Big\|_2^{1/2}\Big\|(1-p(A=0\mid X)\hat\pi_{1,\ell}(X))^2\Big\|_2^{1/2}+\Big\|\hat\mu_{1,\ell}^2(M,X)\Big\|_2^{1/2}\Big\|\big(\hat\pi_{1,\ell}(X)-\pi_{1}(X)\big)^2\Big\|_2^{1/2}\\
    &\quad\quad +\Big\|\hat\mu_{1,\ell}^2(M,X)\Big\|_2^{1/2}\Big\|\big(\hat\pi_{2,\ell}(M,X)-\pi_{2}(M,X)\big)^2\Big\|_2^{1/2}+\Big\|\mu_{1}^2(M,X)\Big\|_2^{1/2}\Big\|(\hat\pi_{2,\ell}(M,X)-\pi_2(M,X))^2\Big\|_2^{1/2}.
 \end{align*}

Also note that,
\begin{align*}
    \Big\|(1-p(A=0\mid X)\hat\pi_{1,\ell}(X))^2\Big\|_2&=\Big\|(p(A=0\mid X))^2(\pi_{1}(X)-\hat\pi_{1,\ell}(X))^2\Big\|_2\\
    &\leq\Big\|(\pi_{1}(X)-\hat\pi_{1,\ell}(X))^2\Big\|_2.
\end{align*}
Hence, by Assumption \ref{assm:CAN} (b), $\Big\|(1-p(A=0\mid X)\hat\pi_{1,\ell}(X))^2\Big\|_2$ converges to zero in probability.

Similar to inequalities \eqref{ineq_mu1hat} and \eqref{ineq_mu2hat}, $\Big\|\hat\mu_{1,\ell}^2(M,X)\Big\|_2^{1/2}$ and $\Big\|\hat\mu_{2,\ell}^2(X)\Big\|_2^{1/2}$ are $O_p(1)$.  Therefore, by Assumption \ref{assm:CAN} (a), (b) and (c),
 \begin{align*}
     &E\left[A_m^\ell\mid I_\ell^c\right]=o_p(1).
 \end{align*}
 
 Therefore,
\begin{align*}
    E\left[(A_m^\ell)^2\mid I_\ell^c\right]=var\left(A_m^\ell\mid I_\ell^c\right)+\Big(E\left[A_m^\ell\mid I_\ell^c\right]\Big)^2\stackrel{p.}{\rightarrow}0,\quad\text{ as }m\rightarrow\infty.
\end{align*}

Then by conditional Chebyshev inequality, for all $\delta>0$, we have,
\begin{align*}
    P(|A_m^\ell|>\delta\mid I_\ell^c)\stackrel{p.}{\rightarrow}0,\quad\text{ as }m\rightarrow\infty.
\end{align*}

That is, $P(|A_m^\ell|>\delta\mid I_\ell^c)$ is a bounded sequence that converges to zero in probability. Therefore,
\begin{align*}
    E\left[P(|A_m^\ell|>\delta\mid I_\ell^c)\right]=P(|A_m^\ell|>\delta)\stackrel{p.}{\rightarrow}0,\quad\text{ as }m\rightarrow\infty.
\end{align*}

That is, $A_m^\ell\stackrel{p.}{\rightarrow}0$ as $m\rightarrow\infty$.

The conclusion holds for all $\ell\in\{1,\dots,L\}$ and hence, we conclude that $T_1=o_p(1)$.

\medskip
\noindent
\textbf{Analysis of $T_2$}
\begin{align*}
    T_2&=\frac{1}{\sqrt{L}}\sum_{\ell=1}^L\mathbb{G}_m\left[ \phi(O;\pi_1,\pi_2,\mu_1,\mu_2)\right]\\
    &=\frac{\sqrt{m}}{\sqrt{L}}\sum_{\ell=1}^L \left\{E_m[\phi(O;\pi_1,\pi_2,\mu_1,\mu_2)]-E[\phi(O;\pi_1,\pi_2,\mu_1,\mu_2)]\right\}\\
    &=\frac{\sqrt{m}}{\sqrt{L}}\frac{1}{m}\sum_{\ell=1}^L \sum_{i=1}^m\left\{\phi(O_i;\pi_1,\pi_2,\mu_1,\mu_2)-E[\phi(O;\pi_1,\pi_2,\mu_1,\mu_2)]\right\}\\
    &=\frac{1}{\sqrt{n}}\sum_{i=1}^n\left\{\phi(O_i;\pi_1,\pi_2,\mu_1,\mu_2)-E[\phi(O;\pi_1,\pi_2,\mu_1,\mu_2)]\right\}.
\end{align*}

By central limit theorem, the last expression converges to the distribution $\mathcal{N}(0,var(\phi(O;\pi_1,\pi_2,\mu_1,\mu_2))$ if $E[\phi(O;\pi_1,\pi_2,\mu_1,\mu_2)^2]<\infty$. Note that $E[\phi(O;\pi_1,\pi_2,\mu_1,\mu_2)^2]$ can be upper bounded as
\begin{align*}
    &E[\phi(O;\pi_1,\pi_2,\mu_1,\mu_2)^2]\\
    &=E\Big[\Big\{\mu_2(X)+(1-A)\pi_1(X)\{\mu_1(M,X)-\mu_2(X)\}+A\pi_2(M,X)\{Y-\mu_1(M,X)\}\Big\}^2\Big]\\
    &=E\Big[\Big\{\mu_2(X)\left(1-(1-A)\pi_1(X)\right)+\mu_1(M,X)\left((1-A)\pi_1(X)-A\pi_2(M,X)\right)+AY\pi_2(M,X)\Big\}^2\Big]\\
    &=\|\mu_2(X)\left(1-(1-A)\pi_1(X)\right)+\mu_1(M,X)\left((1-A)\pi_1(X)-A\pi_2(M,X)\right)+AY\pi_2(M,X)\|_2^2\\
    &\leq\Big(\|\mu_2(X)\big(1-(1-A)\pi_1(X)\big)\|_2+\|\mu_1(M,X)\big((1-A)\pi_1(X)-A\pi_2(M,X)\big)\|_2\\
    &\quad\quad+\|AY\pi_2(M,X)\|_2\Big)^2\\
    &\leq\Big(\|\mu_2(X)\|_2+\|(1-A)\mu_2(X)\pi_1(X)\|_2+\|(1-A)\mu_1(M,X)\pi_1(X)\|_2+\|A\mu_1(M,X)\pi_2(M,X)\|_2\\
    &\quad\quad+\|AY\pi_2(M,X)\|_2\Big)^2\\
    &\leq\Big(\|\mu_2(X)\|_2+\|\mu_2(X)\pi_1(X)\|_2+\|\mu_1(M,X)\pi_1(X)\|_2+\|\mu_1(M,X)\pi_2(M,X)\|_2+\|Y\pi_2(M,X)\|_2\Big)^2.
\end{align*}

By Assumption \ref{ass:stg_pos} and Jensen's inequality,
\begin{align*}
&\|\mu_2(X)\|_2\leq\|\mu_2^2(X)\|_2^{1/2},\\
    &\|\mu_2(X)\pi_1(X)\|_2\leq\frac{1}{\epsilon}\|\mu_2(X)\|_2\leq\frac{1}{\epsilon}\|\mu_2^2(X)\|_2^{1/2},\\
    &\|\mu_1(M,X)\pi_1(X)\|_2\leq\frac{1}{\epsilon}\|\mu_1(M,X)\|_2\leq\frac{1}{\epsilon}\|\mu_1^2(M,X)\|_2^{1/2},\\
    &\|\mu_1(M,X)\pi_2(M,X)\|_2\leq\frac{1}{\epsilon^2}\|\mu_1(M,X)\|_2\leq\frac{1}{\epsilon^2}\|\mu_1^2(M,X)\|_2^{1/2},\\
    &\|Y\pi_2(M,X)\|_2\leq\frac{1}{\epsilon^2}\|Y\|_2\leq\frac{1}{\epsilon^2}\|Y^2\|_2^{1/2}.
\end{align*}

Therefore, by Assumption \ref{assm:CAN} (c), we have $E[\phi(O;\pi_1,\pi_2,\mu_1,\mu_2)^2]<\infty$.

\medskip
\noindent
\textbf{Analysis of $T_3$}

Using result of Theorem \ref{thm:main} and Law of Total Probability, we have,
\begin{align*}
    T_3&=\frac{1}{\sqrt{L}}\sum_{\ell=1}^L\sqrt{m}\Big\{E\left[ \phi(O;\hat{\pi}_{1,\ell},\hat{\pi}_{2,\ell},\hat{\mu}_{1,\ell},\hat{\mu}_{2,\ell})\right]-\psi_0\Big\}\\
    &=\frac{\sqrt{m}}{\sqrt{L}}\sum_{\ell=1}^L\left\{E\left[\hat\psi^\text{2S}_\ell\right]-\psi_0\right\}\\
    &=\frac{\sqrt{m}}{\sqrt{L}}\sum_{\ell=1}^L\Bigg\{E\left[ \left\{(1-A)\hat\pi_{1,\ell}(X)-1\right\}\{\mu_2(X)-\hat\mu_{2,\ell}(X)\}\right]\\
&~~~~~~+E\left[\left\{A\hat\pi_{2,\ell}(M,X)-(1-A)\hat\pi_{1,\ell}(X)\right\}\{\mu_1(M,X)-\hat\mu_{1,\ell}(M,X)\} \right]\Bigg\}\\
&=\frac{\sqrt{m}}{\sqrt{L}}\sum_{\ell=1}^L\Bigg\{E\left[ \left\{E\left[1-A\mid X\right]\hat\pi_{1,\ell}(X)-1\right\}\{\mu_2(X)-\hat\mu_{2,\ell}(X)\}\right]\\
&~~~~~~+E\left[\left\{E\left[A\mid M,X\right]\hat\pi_{2,\ell}(M,X)-E\left[1-A\mid M,X\right]\hat\pi_{1,\ell}(X)\right\}\{\mu_1(M,X)-\hat\mu_{1,\ell}(M,X)\} \right]\Bigg\}\\
&=\frac{\sqrt{m}}{\sqrt{L}}\sum_{\ell=1}^L\Bigg\{E\left[ \left\{p\left(A=0\mid X\right)\hat\pi_{1,\ell}(X)-1\right\}\{\mu_2(X)-\hat\mu_{2,\ell}(X)\}\right]\\
&~~~~~~~~~~~~~~~+E[\{p\left(A=1\mid M,X\right)\hat\pi_{2,\ell}(M,X)\\
    &~~~~~~~~~~~~~~~-p\left(A=0\mid M,X\right)\hat\pi_{1,\ell}(X)\}\{\mu_1(M,X)-\hat\mu_{1,\ell}(M,X)\} ]\Bigg\}.
\end{align*}
Using Cauchy-Schwarz inequality, we have
\begin{align*}
T_3&\leq\frac{\sqrt{m}}{\sqrt{L}}\sum_{\ell=1}^L\Bigg\{\Big\|p\left(A=0\mid X\right)\hat\pi_{1,\ell}(X)-1\Big\|_2\Big\|\mu_2(X)-\hat\mu_{2,\ell}(X)\Big\|_2\\
&~~~~~~+\Big\|p\left(A=1\mid M,X\right)\hat\pi_{2,\ell}(M,X)-p\left(A=0\mid M,X\right)\hat\pi_{1,\ell}(X)\Big\|_2\Big\|\mu_1(M,X)-\hat\mu_{1,\ell}(M,X)\Big\|_2\Bigg\}.
\end{align*}

Therefore by Assumption \ref{assm:CAN} (d), we have
\begin{align*}
T_3&=\frac{\sqrt{m}}{\sqrt{L}}\sum_{\ell=1}^L\Bigg\{o_p(n^{-1/2})+o_p(n^{-1/2})\Bigg\}\\
&=\frac{\sqrt{m}}{\sqrt{L}}\times L\times o_p(n^{-1/2})\\
&=\sqrt{mL}\times o_p(n^{-1/2})\\
&=\sqrt{n} o_p(n^{-1/2})\\
&=o_p(1).
\end{align*}
\end{proof}

\begin{proof}[Proof of Theorem \ref{thm:main_multi}]
\begin{align*}
    &E[\hat\Delta^{M_j,\text{2S}}]-\Delta^{M_j}\\
    &=E\Big[A\hat{\pi}(X)\Big\{\hat{R}(M_j,M_{-j},X)\big(Y-\hat{\mu}(M_j,M_{-j},X)\big)+\hat{\eta}_2(M_{-j},X)-\hat{\gamma}_1(X)+\hat{\eta}_1(M_j,X)-\hat{\gamma}_{1,1}(X)\Big\}\\
    &~~~~~~~~~+\frac{(1-A)\hat{\pi}(X)}{\hat{\pi}(X)-1}\Big\{\hat{\gamma}_{1,0}(X)-\hat{\eta}_1(M_j,X)\Big\}+\hat{\gamma}_1(X)-\gamma_1(X)\Big]\\
    &=E\Big[\hat{\pi}(X)\hat{R}(M_j,M_{-j},X)\big(AY-A\hat{\mu}(M_j,M_{-j},X)\big)\Big]\\
    &~~~~~~~~~+E\Big[A\hat{\pi}(X)\hat{\eta}_2(M_{-j},X)-A\hat{\pi}(X)\hat{\gamma}_1(X)\Big]+E\Big[A\hat{\pi}(X)
    \hat{\eta}_1(M_j,X)-A\hat{\pi}(X)\hat{\gamma}_{1,1}(X)\Big]\\
    &~~~~~~~~~+E\Big[\frac{(1-A)\hat{\pi}(X)}{\hat{\pi}(X)-1}\hat{\gamma}_{1,0}(X)-\frac{(1-A)\hat{\pi}(X)}{\hat{\pi}(X)-1}\hat{\eta}_1(M_j,X)\Big]+E\Big[\hat{\gamma}_1(X)-\gamma_1(X)\Big]\\
    &=E\Big[E\big[\frac{1}{\hat{\pi}(X)}\hat{R}(M_j,M_{-j},X)\big(AY-A\hat{\mu}(M_j,M_{-j},X)\big)\mid M_j,M_{-j},X\big]\Big]\\
    &~~~~~~~~~+E\Big[E[A\hat{\pi}(X)
    \hat{\eta}_2(M_{-j},X)\mid A,X]-A\hat{\pi}(X)\hat{\gamma}_1(X)\Big]+E\Big[E[A\hat{\pi}(X)
    \hat{\eta}_1(M_j,X)\mid A,X]-A\hat{\pi}(X)\hat{\gamma}_{1,1}(X)\Big]\\
    &~~~~~~~~~+E\Big[\frac{(1-A)\hat{\pi}(X)}{\hat{\pi}(X)-1}\hat{\gamma}_{1,0}(X)-E[\frac{(1-A)\hat{\pi}(X)}{\hat{\pi}(X)-1}\hat{\eta}_1(M_j,X)\mid A,X]\Big]+E\Big[\hat{\gamma}_1(X)-\gamma_1(X)\Big]\\
    &=E\Big[\hat{\pi}(X)\hat{R}(M_j,M_{-j},X)\big(\underbrace{E[AY\mid M_j,M_{-j},X]}_{=E[A\mid M_j,M_{-j},X]\mu(M_j,M_{-j},X)}-E[A\mid M_j,M_{-j},X]\hat{\mu}(M_j,M_{-j},X)\big)\Big]\\
    &~~~~~~~~~+E\Big[A\hat{\pi}(X)
    E[\hat{\eta}_2(M_{-j},X)\mid A=1,X]-A\hat{\pi}(X)\hat{\gamma}_1(X)\Big]\\
    &~~~~~~~~~+\underbrace{E\Big[A\hat{\pi}(X)
    \underbrace{E[\hat{\eta}_1(M_j,X)\mid A=1,X]}_{=\hat{\gamma}_{1,1}(X)}-A\hat{\pi}(X)\hat{\gamma}_{1,1}(X)\Big]}_{=0}\\
    &~~~~~~~~~+E\Big[\frac{(1-A)\hat{\pi}(X)}{\hat{\pi}(X)-1}\hat{\gamma}_{1,0}(X)-\frac{(1-A)\hat{\pi}(X)}{\hat{\pi}(X)-1}E[\hat{\eta}_1(M_j,X)\mid A=0,X]\Big]+E\Big[\hat{\gamma}_1(X)-\gamma_1(X)\Big]\\
    &=E\Big[\hat{\pi}(X)\hat{R}(M_j,M_{-j},X)E[A\mid M_j,M_{-j},X]\big\{\mu(M_j,M_{-j},X)-\hat{\mu}(M_j,M_{-j},X)\big\}\Big]\\
    &~~~~~~~~~+E\Big[A\hat{\pi}(X)
    \Big\{E[\hat{\eta}_2(M_{-j},X)\mid A=1,X]-\hat{\gamma}_1(X)\Big\}\Big]\\
    &~~~~~~~~~+E\Big[\frac{(1-A)\hat{\pi}(X)}{\hat{\pi}(X)-1}\Big\{\hat{\gamma}_{1,0}(X)-E[\hat{\eta}_1(M_j,X)\mid A=0,X]\Big\}\Big]\\
    &~~~~~~~~~+E\Big[\hat{\gamma}_1(X)-\gamma_1(X)\Big]\\
    &=\underbrace{E\Big[A\hat{\pi}(X)\hat{R}(M_j,M_{-j},X)\big\{\mu(M_j,M_{-j},X)-\hat{\mu}(M_j,M_{-j},X)\big\}\Big]}_{=E\Big[E\big[A\hat{\pi}(X)\hat{R}(M_j,M_{-j},X)\big\{\mu(M_j,M_{-j},X)-\hat{\mu}(M_j,M_{-j},X)\big\}\mid A,X\big]\Big]}\\
    &~~~~~~~~~+E\Big[A\hat{\pi}(X)
    \Big\{E[\hat{\eta}_2(M_{-j},X)\mid A=1,X]-\hat{\gamma}_1(X)\Big\}\Big]\\
    &~~~~~~~~~+E\Big[\frac{(1-A)\hat{\pi}(X)}{\hat{\pi}(X)-1}\Big\{\hat{\gamma}_{1,0}(X)-E[\hat{\eta}_1(M_j,X)\mid A=0,X]\Big\}\Big]\\
    &~~~~~~~~~+E\Big[\hat{\gamma}_1(X)-\gamma_1(X)\Big]\\
    &=E\Big[A\hat{\pi}(X)E\big[\hat{R}(M_j,M_{-j},X)\big\{\mu(M_j,M_{-j},X)-\hat{\mu}(M_j,M_{-j},X)\big\}\mid A=1,X\big]\Big]\\
    &~~~~~~~~~+E\Big[A\hat{\pi}(X)
    \Big\{E[\hat{\eta}_2(M_{-j},X)\mid A=1,X]-\hat{\gamma}_1(X)\Big\}\Big]\\
    &~~~~~~~~~+E\Big[\frac{(1-A)\hat{\pi}(X)}{\hat{\pi}(X)-1}\Big\{\hat{\gamma}_{1,0}(X)-E[\hat{\eta}_1(M_j,X)\mid A=0,X]\Big\}\Big]\\
    &~~~~~~~~~+E\Big[\hat{\gamma}_1(X)-\gamma_1(X)\Big]\\
    &=E\Big[\Big\{A\hat{\pi}(X)-1\Big\}E\big[\hat{R}(M_j,M_{-j},X)\big\{\mu(M_j,M_{-j},X)-\hat{\mu}(M_j,M_{-j},X)\big\}\mid A=1,X\big]\Big]\\
    &~~~~~~~~~+E\Big[\Big\{A\hat{\pi}(X)-1\Big\}
    \Big\{E[\hat{\eta}_2(M_{-j},X)\mid A=1,X]-\hat{\gamma}_1(X)\Big\}\Big]\\
    &~~~~~~~~~+E\Big[\Big\{\frac{(1-A)\hat{\pi}(X)}{\hat{\pi}(X)-1}-1\Big\}\Big\{\hat{\gamma}_{1,0}(X)-E[\hat{\eta}_1(M_j,X)\mid A=0,X]\Big\}\Big]\\
    &~~~~~~~~~+E\Big[\hat{\gamma}_1(X)-\gamma_1(X)\Big]+E\Big[E\big[\hat{R}(M_j,M_{-j},X)\big\{\mu(M_j,M_{-j},X)-\hat{\mu}(M_j,M_{-j},X)\big\}\mid A=1,X\big]\Big]\\
    &~~~~~~~~~+E\Big[E[\hat{\eta}_2(M_{-j},X)\mid A=1,X]-\hat{\gamma}_1(X)\Big]+E\Big[\hat{\gamma}_{1,0}(X)-E[\hat{\eta}_1(M_j,X)\mid A=0,X]\Big]\\
    &=E\Big[\Big\{A\hat{\pi}(X)-1\Big\}E\big[\hat{R}(M_j,M_{-j},X)\big\{\mu(M_j,M_{-j},X)-\hat{\mu}(M_j,M_{-j},X)\big\}\mid A=1,X\big]\Big]\\
    &~~~~~~~~~+E\Big[\Big\{A\hat{\pi}(X)-1\Big\}
    \Big\{E[\hat{\eta}_2(M_{-j},X)\mid A=1,X]-\hat{\gamma}_1(X)\Big\}\Big]\\
    &~~~~~~~~~+E\Big[\Big\{\frac{(1-A)\hat{\pi}(X)}{\hat{\pi}(X)-1}-1\Big\}\Big\{\hat{\gamma}_{1,0}(X)-E[\hat{\eta}_1(M_j,X)\mid A=0,X]\Big\}\Big]\\
    &~~~~~~~~~+{E\Big[E\big[\hat{R}(M_j,M_{-j},X)\big\{\mu(M_j,M_{-j},X)-\hat{\mu}(M_j,M_{-j},X)\big\}\mid A=1,X\big]\Big]}\\
    &~~~~~~~~~+{E\Big[E[\hat{\eta}_2(M_{-j},X)\mid A=1,X]\Big]}+{E\Big[\hat{\gamma}_{1,0}(X)-E[\hat{\eta}_1(M_j,X)\mid A=0,X]\Big]}-{E\Big[\gamma_1(X)\Big]}\\
    &=E\Big[\Big\{A\hat{\pi}(X)-1\Big\}E\big[\hat{R}(M_j,M_{-j},X)\big\{\mu(M_j,M_{-j},X)-\hat{\mu}(M_j,M_{-j},X)\big\}\mid A=1,X\big]\Big]\\
    &~~~~~~~~~+E\Big[\Big\{A\hat{\pi}(X)-1\Big\}
    \Big\{E[\hat{\eta}_2(M_{-j},X)\mid A=1,X]-\hat{\gamma}_1(X)\Big\}\Big]\\
    &~~~~~~~~~+E\Big[\Big\{\frac{(1-A)\hat{\pi}(X)}{\hat{\pi}(X)-1}-1\Big\}\Big\{\hat{\gamma}_{1,0}(X)-E[\hat{\eta}_1(M_j,X)\mid A=0,X]\Big\}\Big]\\
    &~~~~~~~~~+{E\Big[E\big[\{\hat{R}(M_j,M_{-j},X)-R(M_j,M_{-j},X)\}\big\{\mu(M_j,M_{-j},X)-\hat{\mu}(M_j,M_{-j},X)\big\}\mid A=1,X\big]\Big]}\\
    &~~~~~~~~~+\underbrace{E\Big[E\big[R(M_j,M_{-j},X)\big\{\mu(M_j,M_{-j},X)-\hat{\mu}(M_j,M_{-j},X)\big\}\mid A=1,X\big]\Big]}_{=T_1}\\
    &~~~~~~~~~+\underbrace{E\Big[E[\hat{\eta}_2(M_{-j},X)\mid A=1,X]\Big]}_{=T_2}+\underbrace{E\Big[\hat{\gamma}_{1,0}(X)-E[\hat{\eta}_1(M_j,X)\mid A=0,X]\Big]}_{=T_3}-\underbrace{E\Big[\gamma_1(X)\Big]}_{=T_4}.
\end{align*}

\begin{align*}
    &T_1=E\Big[E\big[R(M_j,M_{-j},X)\big\{\mu(M_j,M_{-j},X)-\hat{\mu}(M_j,M_{-j},X)\big\}\mid A=1,X\big]\Big]\\
    &=E\Big[E\big[\big(1-\rho(M_j,X)\big){\omega}(M_j,M_{-j},X)\big\{\mu(M_j,M_{-j},X)-\hat{\mu}(M_j,M_{-j},X)\big\}\mid A=1,X\big]\Big]\\
    &=E\Big[\int\int\big(1-\rho(m_j,X)\big){\omega}(m_j,m_{-j},X)\big\{\mu(m_j,m_{-j},X)-\hat{\mu}(m_j,m_{-j},X)\big\}f(m_j,m_{-j}\mid A=1,X)dm_jdm_{-j}\Big]\\
    &=E\Big[\int\big(1-\rho(m_j,X)\big)\Big\{\int\big\{\mu(m_j,m_{-j},X)-\hat{\mu}(m_j,m_{-j},X)\big\}f(m_{-j}\mid A=1,X)dm_{-j}\Big\}f(m_j\mid A=1,X)dm_j\Big]\\
    &=E\Big[\int\big(1-\rho(m_j,X)\big)\Big\{\eta_1(m_j,X)-\eta_1^\ast(m_j,X)\Big\}f(m_j\mid A=1,X)dm_j\Big]\\
    &=E\Big[E\big[\big(1-\rho(M_j,X)\big)\Big\{\eta_1(M_j,X)-\eta_1^\ast(M_j,X)\Big\}\mid A=1,X\big]\Big],
\end{align*}
where $\eta_1^\ast(M_j,X)=\int \hat{\mu}(m_j,m_{-j},X)f(m_{-j}\mid A=1,X)dm_{-j}$.

\begin{align*}
    &T_2=E\Big[E[\hat{\eta}_2(M_{-j},X)\mid A=1,X]\Big]\\
    &=E\Big[E_{M_{-j}\mid A=1,X}[E_{M_j\mid A=1,X}[\hat{\mu}(M_j,M_{-j},X)\big(1-\hat{\rho}(M_j,X)\big)\mid A=1,X]\mid A=1,X]\Big]\\
    &=E\Big[E[\int\hat{\mu}(m_j,m_{-j},X)\big(1-\hat{\rho}(m_j,X)\big)f(m_j\mid A=1,X)dm_j\mid A=1,X]\Big]\\
    &=E\Big[\int\int\hat{\mu}(m_j,m_{-j},X)\big(1-\hat{\rho}(m_j,X)\big)f(m_j\mid A=1,X)f(m_{-j}\mid A=1,X)dm_jdm_{-j}\Big]\\
    &=E\Big[\int\Big(\int\hat{\mu}(m_j,m_{-j},X)f(m_{-j}\mid A=1,X)dm_{-j}\Big)\big(1-\hat{\rho}(m_j,X)\big)f(m_j\mid A=1,X)dm_j\Big]\\
    &=E\Big[\int\eta_1^\ast(m_j,X)\big(1-\hat{\rho}(m_j,X)\big)f(m_j\mid A=1,X)dm_j\Big]\\
    &=E\Big[E\big[\eta_1^\ast(M_j,X)\big(1-\hat{\rho}(M_j,X)\big)\mid A=1,X\big]\Big].
\end{align*}

\begin{align*}
    &T_3=E\Big[\hat{\gamma}_{1,0}(X)-E[\hat{\eta}_1(M_j,X)\mid A=0,X]\Big]\\
    &=E\Big[E[\hat{\eta}_1(M_j,X)\hat{\rho}(M_j,X)\mid A=1,X]-E[\hat{\eta}_1(M_j,X)\rho(M_j,X)\mid A=1,X]\Big]\\
    &=E\Big[E[\hat{\eta}_1(M_j,X)\{\hat{\rho}(M_j,X)-\rho(M_j,X)\}\mid A=1,X]\Big].
\end{align*}

\begin{align*}
    &T_4=E\Big[\gamma_1(X)\Big]\\
    &=E\Big[\gamma_{1,1}(X)-\gamma_{1,0}(X)\Big]\\
    &=E\Big[E[\eta_1(M_j,X)\mid A=1,X]-E[\eta_1(M_j,X)\rho(M_j,X)\mid A=1,X]\Big]\\
    &=E\Big[E[\eta_1(M_j,X)\big(1-\rho(M_j,X)\big)\mid A=1,X]\Big].
\end{align*}

Combining, we have,
\begin{align*}
    &E[\hat\Delta^{M_j,\text{2S}}]-\Delta^{M_j}\\
    &=E\Big[\Big\{A\hat{\pi}(X)-1\Big\}E\big[\hat{R}(M_j,M_{-j},X)\big\{\mu(M_j,M_{-j},X)-\hat{\mu}(M_j,M_{-j},X)\big\}\mid A=1,X\big]\Big]\\
    &~~~~~~~~~+E\Big[\Big\{A\hat{\pi}(X)-1\Big\}
    \Big\{E[\hat{\eta}_2(M_{-j},X)\mid A=1,X]-\hat{\gamma}_1(X)\Big\}\Big]\\
    &~~~~~~~~~+E\Big[\Big\{\frac{(1-A)\hat{\pi}(X)}{\hat{\pi}(X)-1}-1\Big\}\Big\{\hat{\gamma}_{1,0}(X)-E[\hat{\eta}_1(M_j,X)\mid A=0,X]\Big\}\Big]\\
    &~~~~~~~~~+{E\Big[E\big[\{\hat{R}(M_j,M_{-j},X)-R(M_j,M_{-j},X)\}\big\{\mu(M_j,M_{-j},X)-\hat{\mu}(M_j,M_{-j},X)\big\}\mid A=1,X\big]\Big]}\\
    &~~~~~~~~~+E\Big[E\big[\big(1-\rho(M_j,X)\big)\Big\{\eta_1(M_j,X)-\eta_1^\ast(M_j,X)\Big\}\mid A=1,X\big]\Big]\\
    &~~~~~~~~~+E\Big[E\big[\eta_1^\ast(M_j,X)\big(1-\hat{\rho}(M_j,X)\big)\mid A=1,X\big]\Big]\\
    &~~~~~~~~~+E\Big[E[\hat{\eta}_1(M_j,X)\{\hat{\rho}(M_j,X)-\rho(M_j,X)\}\mid A=1,X]\Big]\\
    &~~~~~~~~~-E\Big[E[\eta_1(M_j,X)\big(1-\rho(M_j,X)\big)\mid A=1,X]\Big]\\
    &=E\Big[\Big\{A\hat{\pi}(X)-1\Big\}E\big[\hat{R}(M_j,M_{-j},X)\big\{\mu(M_j,M_{-j},X)-\hat{\mu}(M_j,M_{-j},X)\big\}\mid A=1,X\big]\Big]\\
    &~~~~~~~~~+E\Big[\Big\{A\hat{\pi}(X)-1\Big\}
    \Big\{E[\hat{\eta}_2(M_{-j},X)\mid A=1,X]-\hat{\gamma}_1(X)\Big\}\Big]\\
    &~~~~~~~~~+E\Big[\Big\{\frac{(1-A)\hat{\pi}(X)}{\hat{\pi}(X)-1}-1\Big\}\Big\{\hat{\gamma}_{1,0}(X)-E[\hat{\eta}_1(M_j,X)\mid A=0,X]\Big\}\Big]\\
    &~~~~~~~~~+{E\Big[E\big[\{\hat{R}(M_j,M_{-j},X)-R(M_j,M_{-j},X)\}\big\{\mu(M_j,M_{-j},X)-\hat{\mu}(M_j,M_{-j},X)\big\}\mid A=1,X\big]\Big]}\\
&~~~~~~~~~+E\Big[E\big[\{\eta_1^\ast(M_j,X)-\hat{\eta}_1(M_j,X)\}\big\{\rho(M_j,X)-\hat{\rho}(M_j,X)\big\}\mid A=1,X\big]\Big].
\end{align*}
\end{proof}

\end{document}